\documentclass{iopjournal}
\usepackage{amsmath,amssymb,amsfonts,amsthm}
\usepackage{bbm}
\usepackage{cases}
\usepackage{mathrsfs}
\usepackage{tikz}
\usepackage{tikz-cd}
\usepackage{empheq}
\usepackage{orcidlink}
\usepackage{accents}
\usepackage{float}
\usepackage[T1]{fontenc}

\newcommand{\Int }{\displaystyle \int}

\newcommand{\SP}{{\cal S}p{\cal S}t}
\newcommand{\be}{\begin{equation}}
\newcommand{\de}{\end{equation}}

\newcommand{\R}{R}
\newcommand{\rgh}{H}
\newcommand{\alg}{\mathscr{D}_{pow}}
\newcommand{\blg}{\mathscr{H}_{pow}}
\newcommand{\V}{V_0}

\newcommand{\LSPO}{OLSP}

\newcommand{\thickbar}[1]{%
  \accentset{\rule{.95em}{.6pt}}{#1}%
}
\newcommand{\para}[1]{\medskip\noindent\textbf{#1.}\;}

\newtheorem{lemma}{Lemma}
\newtheorem{theorem}{Theorem}
\newtheorem{definition}{Definition}
\newtheorem*{definition*}{Definition}
\newtheorem{proposition}{Proposition}
\newtheorem{corollary}{Corollary}
\newtheorem{property}{Property}
\newtheorem*{nnproperty*}{Property}
\newcounter{remark}[section]
\renewcommand{\theremark}{\thesection.\arabic{remark}}

\newenvironment{remark}{%
  \refstepcounter{remark}%
  \par\noindent\textbf{Remark \theremark.}\ \rmfamily
}{\par}

\begin{document}

\articletype{Paper} 

\title{A Measure-Theoretic Approach to Spontaneous Stochasticity}

\author{Wandrille Ruffenach$^1$\orcidlink{0009-0008-8261-786X}, Eric Simonnet$^{2,4,*}$\orcidlink{0000-0003-4636-4910} and Nicolas Valade$^{3,4}$\orcidlink{0000-0002-3786-8933}}

\affil{$^1$LPENSL, ENS de Lyon, CNRS, UMR5672, 69342 Lyon cedex 07, France}

\affil{$^2$Institut de Physique de Nice, Universit\'e C\^ote d'Azur et CNRS, France}

\affil{$^3${DAMTP, Centre for Mathematical Sciences,
University of Cambridge, CB3 0WA United Kingdom}}

\affil{$^4$ Inria, Centre at Universit\'e C\^ote d’Azur
Team Calisto, Sophia Antipolis, France}

\affil{$^*$Author to whom any correspondence should be addressed.}

\email{eric.simonnet@univ-cotedazur.fr}

\keywords{
spontaneous stochasticity, inviscid limit, selection principle, universality classes, renormalization group,
statistical attractors, ergodic theory, singularities
}

\begin{abstract}
Spontaneous stochasticity $(\SP$) has a long history, from Richardson's Lagrangian picture of turbulent dispersion to Lorenz's 1969 Eulerian view of finite-time loss of predictability, and was later formulated under this name by Gaw\k{e}dzki and collaborators. Over the last two decades, the work of Mailybaev and collaborators has further revealed the role of spontaneous stochasticity in shell models through a renormalization-group formalism. Whether spontaneous stochasticity occurs in fully developed turbulence remains a major open question.

Except in a few specific classes of systems, however, spontaneous stochasticity has not yet been defined in general mathematical terms. This is the main purpose of the present work. We introduce a measure-theoretic formalism in which $\SP$ is understood as a measure-selection principle. Given an inviscid problem, a well-posed regularization, and an ambient measure, one studies the pushforward of this ambient measure by the regularized flow. Strong $\SP$ occurs when this family converges to a unique non-Dirac probability measure. In this regime, the classical deterministic selection principle breaks down and is replaced by the selection of a probability law. 

Within this framework, we establish several structural results for finite-dimensional systems.
The central one is an attainability theorem: whenever the inviscid problem is nonunique, any probability measure supported on the set of inviscid states can be selected as the limiting law of a suitable regularization. We also identify singular sets in the inviscid dynamics, detected through Dini-type directional growth, as necessary obstructions underlying nonuniqueness. The relation between $\SP$ and sensitivity to initial data is also analyzed carefully, clarifying the role and limitations of turbulence-inspired criteria based on finite-time separation. Finally, we introduce a renormalization-(semi)group viewpoint in which limiting statistics appear as statistical attractors. Explicit finite-dimensional examples show how ambient measures, inviscid singularities, regularization scales, and initial-data sensitivity interact in the emergence of $\SP$.

\end{abstract}
\tableofcontents

\section{Introduction}
Spontaneous stochasticity ($\SP$) is the emergence of genuinely probabilistic behavior in a deterministic system in the singular limit in which the regularization is removed. In the simplest terms, one studies a well-posed regularized problem, lets the regularization parameter vanish, and asks what is selected in the inviscid limit. If a unique limiting state is selected, there is a classical selection principle. If not, then one enters the realm of spontaneous stochasticity.

The term \emph{spontaneous stochasticity} is often used in a much narrower context than the phenomenon itself. In fluid mechanics, it is frequently tied to the Euler--Navier--Stokes problem, or even more specifically to turbulence. But the underlying issue is more general: whenever an inviscid limit loses a classical selection principle, one must ask what, if anything, is selected instead. Historically, Lagrangian spontaneous stochasticity came first, because of its close connection with anomalous dissipation, one of the central mechanisms of turbulence. In contrast, Eulerian spontaneous stochasticity is conceptually closer to a loss of predictability for the full state itself.

From a historical perspective, four landmarks are especially relevant. Richardson's work \cite{Richardson1926} on turbulent pair dispersion already contained the seed of the Lagrangian picture. Gaw\c{e}dzki and collaborators later turned this into a precise mechanism in rough-flow models, showing that spontaneous stochasticity and anomalous dissipation are intimately linked in the Lagrangian setting; see, for example, \cite{Gawedzki98,Chaves2003,Drivas17}. On the Eulerian side, an important shift appears in Lorenz's 1969 paper \cite{Lorenz69}, which is often overshadowed by the much more famous 1963 work on low-dimensional chaos \cite{Lorenz63}. The point of \cite{Lorenz69} is very different: it is not the usual sensitive dependence of classical chaos, but a finite-time predictability barrier in multiscale flows. 
Interestingly, the term Eulerian spontaneous stochasticity appeared much later: it was first introduced by G. Eyink at the APS 2016 conference and was subsequently discussed in detail in \cite{Eyink_Bandak20}.

More recently, Mailybaev and collaborators have developed two complementary lines of work on spontaneous stochasticity. The first concerns shell models, where a renormalization-group (RG) and dynamical-systems formalism organizes the selected inviscid statistics through attractors and invariant measures of the RG dynamics; see, for instance, \cite{Maily2012,Maily2016,AM_Ra23,AM_Rb23,AM_NLN25}. The second concerns the important case of finite-dimensional systems \cite{Drivas21,Drivas24}, where a dynamical-systems viewpoint already captures key aspects of the phenomenon; see also \cite{GHR_SLDP2001},\cite{Eyink_Bandak20} for a complementary perspective based on stochastic differential equations and (singular) large-deviation theory. The present paper extends this finite-dimensional direction to a broader, fully measure-theoretic setting.

Our main point of departure is that $\SP$ should not be treated as a mysterious by-product of turbulence, nor as a vague synonym for complexity or chaos. At heart, it is a selection problem. When no classical selection principle exists in the inviscid limit, the next natural object that can be selected is not a single state, but a probability measure on the set of inviscid states. In other words, a Dirac mass corresponds to ordinary selection, while spontaneous stochasticity begins when the selected measure is non-Dirac. Moreover, there is an even more singular possibility: the inviscid limit may fail to select a unique measure, and only subsequential limiting measures may exist. In that regime, the issue is no longer the selection of one state, nor even of one law, but of a whole collection of possible laws. As we shall see, this weak form of $\SP$ may also indicate that the chosen ambient measure is not the most natural one for probing the singular limit.

This leads to the measure-theoretic viewpoint adopted throughout the paper. The phenomenon involves three basic ingredients:
\begin{itemize}
    \item[(i)] an inviscid system, typically ill-posed;
    \item[(ii)] a regularization that restores well-posedness;
    \item[(iii)] an ambient measure, which determines how statistics are extracted in the singular limit.
\end{itemize}
Within this framework, spontaneous stochasticity becomes a \emph{measure-selection principle}. Strong $\SP$ means that the inviscid limit selects a unique non-Dirac probability measure. Weak $\SP$ means that only subsequential limiting measures exist. This language has two advantages. First, it is flexible enough to encompass the definitions used in physics. Second, it makes transparent the relation with the classical lack of a selection principle used in mathematics.

This brings us to a second main message of the paper. In mathematics, the underlying problem is much older than the term $\SP$ itself: one studies ill-posedness, nonuniqueness, and the failure of classical selection. In that literature, the stochastic aspect is usually not emphasized as such. Our claim is that these two viewpoints should be reconciled. If the inviscid system is well posed, then a classical selection principle is available and there is no room for $\SP$. Thus nonuniqueness of the inviscid problem is a necessary ingredient. But once nonuniqueness is present, the right question is no longer which single solution should be chosen; it is which probability measure, if any, is selected by the regularization procedure. From this perspective, spontaneous stochasticity is not an additional mystery layered on top of ill-posedness. It is the natural language for describing what remains selectable once ordinary deterministic selection has failed.

The first part \ref{part_1} of the paper develops this point of view in a finite-dimensional setting. We formulate a definition of $\SP$ for the Cauchy problem, compare it with the lack of a selection principle and its observable version, and show that, modulo the purely Dirac case, these notions essentially coincide under natural compactness assumptions. This yields a useful trichotomy between strong $\SP$, weak $\SP$, and $\delta$-$LSP$, namely lack of selection with a Dirac limiting measure. We then derive a necessary condition for nonuniqueness in terms of singular sets detected by Dini-type directional derivatives. In this sense, the emergence of $\SP$ is tied to a precise geometric obstruction in the inviscid dynamics.

A central result of Part \ref{part_1} is that, once nonuniqueness occurs, the set of attainable limiting measures $\mathscr{M}$ is maximal: it coincides with the full space of probability measures on the compact set of inviscid states, which we denote by ${\rm Sol}$ for the Cauchy problem. In our notation,
$$
{\mathscr M} =  {\cal P}({\rm Sol}).
$$
This equality is one of the main messages of the paper. It shows that the essential question is not to understand $\SP$ for a single distinguished regularization, but rather to understand how the space of regularizations organizes the possible selection principles. In particular, every probability measure on the inviscid solution set may arise from a suitable regularization. For this reason the notion of universality is central: one should not seek a ``good'' regularization in isolation, but classes of regularizations that produce the same selected statistics.

Sensitivity to initial conditions is often invoked in discussions of spontaneous stochasticity.  More precisely, ill‑posedness of the inviscid problem means that solutions fail to depend continuously on initial data.  A crucial subtlety is that this failure need not produce observable randomness: a given regularization can impose uniform behavior with respect to initial data and thereby suppress sensitivity in the regularized dynamics.  To address this issue we extend our notion of spontaneous stochasticity to encompass a broad class of admissible regularizations.  The extended definition coincides with the one used above, and our main attainability theorem
asserts that every probability measure on the inviscid solution set can be realized as the limit law produced by a suitable vanishing‑parameter regularization (indeed, by an appropriate one‑parameter family). 

Remarkably, most of the mechanisms discussed above already arise in the simplest finite-dimensional setting, namely non-Lipschitz systems with a single degree of freedom. We analyze one representative model in detail to make the abstract theory completely explicit.

The second part \ref{part_2} of the paper develops a complementary RG viewpoint. There, the regularization parameter is treated as an evolution variable, and regularization curves are organized as orbits of a semigroup. This does not replace the measure-theoretic framework of Part \ref{part_1}; rather, it singles out a dynamically distinguished subclass of regularizations. Once a semigroup structure is imposed, the selected statistics acquire additional structure: they are tied to invariant and ergodic measures, and may be interpreted as statistical attractors. This is precisely where the work connects with the dynamical-systems and RG ideas developed by Mailybaev and collaborators. The gain is not only conceptual. In examples with isolated singularities, the RG formalism turns the selection problem into an exit-time problem near the singularity, followed by a dynamical organization of the resulting exit data.

\bigskip 
{
The paper is organized in two parts. Part~\ref{part_1} develops the measure-theoretic formulation of $\SP$, together with its consequences for nonuniqueness, singularities, and sensitivity to initial conditions. Part~\ref{part_2}, entitled \emph{Spontaneous stochasticity with a semigroup structure}, studies the same framework from a dynamical-systems viewpoint.

\medskip
\noindent
In Part~\ref{part_1}, Section~\ref{Defs} introduces $\SP$ as a general measure-selection principle and distinguishes its strong and weak forms. Section~\ref{BasicSensitivity} then records the first consequences of this definition and relates it to nearby notions. Its first subsection, Section~\ref{Basic}, gives the basic properties of the definition and clarifies its relation with classical non-selection, observable non-selection, and the Dirac-limit obstruction. Its second subsection, Section~\ref{sensitivityCI}, examines the relation between $\SP$, sensitivity to initial conditions, and turbulence-inspired criteria. Section~\ref{allprob} proves the main attainability result: once nonuniqueness is present, every probability measure on the inviscid solution set can be obtained as the limiting law of a suitable regularization. This leads, in Section~\ref{univc}, to a set-theoretic notion of universality classes, defined by identical limiting statistics. Since nonuniqueness of the inviscid problem is a prerequisite for $\SP$, it remains to understand when such nonuniqueness can occur. Section~\ref{CNforSP} addresses this question by giving a necessary condition in terms of singular sets detected by Dini-type directional growth. Finally, Section~\ref{exAmbrosio} illustrates the preceding results on a simple one-dimensional ODE model.

\medskip
\noindent
Part~\ref{part_2} develops $\SP$ with a semigroup structure. We introduce a formalism in which limiting statistics become statistical attractors of a semigroup acting on the regularization parameter. In this setting, universality classes acquire a dynamical-systems interpretation as basins of attraction of these statistical attractors; see Theorem~\ref{RGattractors} and Section~\ref{par1_4_3_2}. We then analyze this framework in the important case where the inviscid system has an isolated singularity at the origin. In that setting, the semigroup formulation naturally leads to an exit-time description of the selection mechanism, which is illustrated through simple concrete examples in Section~\ref{RGreg}.}

\part{\Large Formalism and general results}\label{part_1}
\section{Spontaneous stochasticity: definitions}
\label{Defs}

The main goal of this section is to formulate a general mathematical definition of \emph{spontaneous stochasticity} ($\SP$), while remaining consistent with the interpretation commonly used in physics. This is a delicate task, since the few rigorous definitions currently available in the physics literature apply only to rather specific classes of models; see, for instance, \cite{Drivas24}. In its classical interpretation, $\SP$ is associated with finite-time ``Richardson-like'' separation of trajectories in the inviscid limit, that is, with the blow-up of finite-time Lyapunov exponents. It is most often formulated in a Lagrangian framework for fluid particles, but it also admits a natural Eulerian interpretation in state space. More recently, a broader perspective has emerged through renormalization-group (RG) ideas; see \cite{AM_Ra23,AM_Rb23,AM_NLN25,AM_PRF26}. This viewpoint will be developed in Part~\ref{part_2}.

The notion of $\SP$ introduced here is essentially a \emph{measure-selection principle}, characterized by the selection of a non-Dirac probability measure in the inviscid limit.
Its formulation involves three basic ingredients:
\begin{itemize}
    \item[(i)] an inviscid problem, typically ill posed in the singular limit;
    \item[(ii)] a regularization of the ill-posed problem that restores well-posedness;
    \item[(iii)] an ambient measure, used to define the corresponding statistics.
\end{itemize}

To fix ideas, we consider finite-dimensional, deterministic, and autonomous problems. Let $H$ be a finite-dimensional Hilbert space, equipped with inner product $\langle x, y \rangle$ and corresponding norm $\|x\| = \langle x, x \rangle^{1/2}$. 
The same framework extends to infinite-dimensional systems, but doing so requires substantially more care, primarily because compactness is no longer available. This case is treated in a companion paper for a passive scalar transport PDE \cite{RSV2025}.

\paragraph{Inviscid system and its regularization.}
Let $f_0 \in C_b(H; H)$. The inviscid system is defined by
\begin{equation} \label{P_0}\tag{${\cal P}_0$}
\dot{x} = f_0(x), \quad
x(0) = x_0 \in H, \quad t \in [0, T].
\end{equation}
We study the inviscid limit $\epsilon\to0$ for a family of regularized problems
\begin{equation} \label{P_eps} \tag{${\cal P}_\epsilon$}
\dot{x} = f(x, \epsilon), \quad
x(0) = x_0 \in H, \quad t \in [0, T],
\end{equation}
where $f$ belongs to the class
\begin{equation}\label{H0}
\V = \left\{ 
f \in C_b(H\times \mathbb{R}^+; H) ~:~
\lim_{\epsilon \to 0} \|f(\cdot, \epsilon) - f_0(\cdot)\|_\infty = 0,~ 
\text{and}~ f(\cdot,\epsilon)~\in \operatorname{Lip}(H;H)~ \forall \epsilon > 0 
\right\}.
\end{equation}
{The global Lipschitz condition may be replaced by Hadamard well-posedness for every $\epsilon > 0$, namely existence, uniqueness, and continuous dependence on initial data.
The global Lipschitz assumption is retained as a technical closure condition needed in the proof of Theorem~\ref{M0M}.}
Since
$f_0\in C_b(H;H)$, Peano's theorem gives local-in-time existence of solutions
to $({\cal P}_0)$. For each $\epsilon>0$, the Lipschitz continuity of
$f(\cdot,\epsilon)$ gives uniqueness and continuous dependence on initial data,
and hence Hadamard well-posedness of $({\cal P}_\epsilon)$ on the prescribed
interval $[0,T]$.

We use the following flow map notation:
$$
\phi_t^\epsilon(x_0);~\mbox{solution of \eqref{P_eps} at time $t$ with
initial condition $x_0$}
$$
For brevity, we sometimes write $x^\epsilon$ for the solution of \eqref{P_eps} over the whole interval $[0, T]$. 
Let ${\mathscr S}_0 \subset H$ denote the set of solutions to 
\eqref{P_0} at fixed time $t$ and initial condition $x_0$:
\be \label{S0}
{\mathscr S}_0 := \{ \phi_t^0(x_0) : \mbox{solution at time $t$ and initial condition $x_0$ of \eqref{P_0}} \}.
\de
It is a classical result that $\mathscr S_0$ is compact in $H$. Moreover, by Kneser’s theorem, $\mathscr S_0$ is also connected. When $H$ is infinite-dimensional, as in PDE settings, 
compactness of $\mathscr S_0$ is no longer guaranteed.

Let $\mathscr{M}_0$ be the space of Borel probability measures with support on $\mathscr{S}_0$, endowed with the weak topology:
\be \label{M0}
\mathscr{M}_0 := {\cal P}({\mathscr S}_0).
\de 
Convergence $\mu_\epsilon \rightharpoonup \mu$ means
$
\langle \mu_\epsilon, F \rangle \to \langle \mu, F \rangle \quad \text{for all } F \in C_b(H; \mathbb{R}),
$
where $\langle \mu, F \rangle := \int_H F(x)\, \mu(dx)$. In particular, for every Borel set $B \subset H$, we have $\mu(B) = \int_B \mu(dx)$ and $\mu({\mathscr S}_0) = 1$.

\paragraph{Ambient measure.}
To fix notation, we consider a family of probability measures
\be
\mathbb{P}_\epsilon \in \mathcal{P}(\mathbb{R}^+) \quad\text{such that}\quad
\mathbb{P}_\epsilon \rightharpoonup \delta_0 .
\de
We refer to $\mathbb{P}_\epsilon$ as the \emph{ambient} measure, and we impose no further assumptions on it.
However, for reasons which will be made clear in the following, the case of interest is when $\mathbb{P}_\epsilon$ is absolutely continuous (a.c.) w.r.t. Lebesgue measure. The typical example is the normalized Lebesgue measure:
\be \label{Leb}
\mathrm{Leb}_\epsilon(B) := \frac{1}{\epsilon} \mathrm{Leb}(B \cap [0, \epsilon]), \quad \text{for every Borel set } B \subset \mathbb{R}.
\de 
Therefore, $\langle {\rm Leb}_\epsilon,F \rangle$ just mean the Ces\`aro average $\frac{1}{\epsilon}\int_0^\epsilon F(s)~ds$.

\paragraph{Spontaneous stochasticity ($\SP)$.}
For convenience, we first introduce the notion of a \emph{regularization curve}.
\begin{definition}[Regularization curve]\label{gammadef}
Let $t > 0$ and $x_0 \in H$ fixed. 
A \emph{regularization curve} associated with the system $({\cal P}_\epsilon)_{\epsilon \geq 0}$ is the function  defined by
\be \label{gamma}
\gamma: (0,\infty) \to H,~\epsilon \mapsto \phi_t^\epsilon(x_0).
\de
Since $f \in \V$, the curve $\gamma$ is moreover bounded in $\epsilon$.
\end{definition}
 This definition is central, as it reduces the phenomenon of spontaneous stochasticity to the study of such curves in the limit $\epsilon \to 0$.
In Part~\ref{part_2}, we endow these curves with an additional semigroup property, allowing them to be interpreted as genuine trajectories of a dynamical system.
\bigskip 

We consider the pushforward of $\mathbb{P}_\epsilon$ by the function $\gamma$ in~\eqref{gamma}, which is denoted $\gamma_\#\mathbb{P}_\epsilon = \int \delta_{\gamma(s)} \mathbb{P}_\epsilon(ds)$. For all test functions $F \in C_b(H;\mathbb{R})$, this is
$$
\langle \gamma_\# \mathbb{P}_\epsilon, F \rangle =  \int F(\gamma(s)) ~\mathbb{P}_\epsilon(ds).
$$
Since $H$ is finite-dimensional and $\gamma$ is bounded, tightness
of the family $(\gamma_\# \mathbb{P}_\epsilon)_{\epsilon > 0}$ is guaranteed.
Hence, by Prokhorov's theorem, every sequence $\epsilon_n \to 0$ admits a subsequence along which $\gamma_\#\mathbb{P}_{\epsilon_n}$ converges weakly.
One can now define the concept of spontaneous stochasticity ($\SP$) 
{
\begin{definition}[$\SP$]\label{SPdef}
The family \eqref{P_eps} is said to exhibit spontaneous stochasticity relative
to $(t,x_0,\mathbb{P}_\epsilon)$ if, with
$\gamma=\gamma_{t,x_0}$, one has
$$
\SP :=
\left\{
\begin{aligned}
&\text{Strong} && \text{if } \gamma_\# \mathbb{P}_\epsilon
\rightharpoonup \mu_0 \text{ as } \epsilon \to 0,
\text{ with } \mu_0=\mu_{t,x_0} \text{ non-Dirac}, \\[6pt]
&\text{Weak} && \text{if } \gamma_\# \mathbb{P}_\epsilon
\text{ admits at least two distinct subsequential limits}.
\end{aligned}
\right.
$$
Thus both the regularization map and the selected limiting statistics depend on $(t,x_0)$.
\end{definition}\noindent 
We will omit the $(t,x_0)$-dependence in the following unless it is needed.}
\bigskip

\begin{remark}
Typical examples where $\SP$, in both its strong and weak forms, can already be seen are Peano-type systems: elementary ODEs with continuous but non-Lipschitz vector fields, for which existence holds while uniqueness may fail. They provide a minimal laboratory for the phenomenon, since the inviscid dynamics allows several possible selection mechanisms, and the selected statistics depend on the regularization and on the ambient measure. We discuss one such example, $\dot x=\sqrt{|x|}$, in detail in Section~\ref{exAmbrosio}; another classical example, $\dot x=x^{1/3}$, is treated in Appendix~\ref{Ex1}.
\end{remark}
\bigskip
\begin{remark}
One may refer to $\SP$ without specifying the triplet with respect to which it is defined by speaking of \emph{Strong} $\SP$: there exist $(t,x_0,\mathbb{P}_\epsilon)$ such that \emph{Strong} $\SP$ relative to $(t,x_0,\mathbb{P}_\epsilon)$ holds. \emph{Weak} $\SP$ then corresponds to the situation in which \emph{Strong} $\SP$ never occurs, while excluding the case of a Dirac selection measure.
\end{remark}
\bigskip
\begin{remark}
The notion of \emph{Weak} $\SP$ has recently emerged in the context of shell models using a renormalization-group formalism in \cite{AM_NLN25,AM_24end}. An example of \emph{Weak} $\SP$ is analyzed in Section \ref{exAmbrosio} (see also Appendix \ref{Ex1}). 
It should be viewed as a substantially more severe scenario than \emph{Strong} $\SP$. Because no unique limiting measure is selected, the inviscid limit does not carry well-defined statistics: probabilities and observables may depend on the subsequence along which the limit is taken. In that sense, the system may drift among different statistical states, even though its asymptotic behavior remains mathematically well defined through the set of accumulation measures. A plausible interpretation is that, at least in some cases, this reflects a poor choice of ambient measure.
Indeed, the following example shows that Lebesgue measure itself need not be the most natural choice.  Let $
\gamma(s)=e^{i\log s}$, 
and consider analytic test functions 
$
F(z)=\sum_{k\ge0}F_k\,z^k,\quad 
F_k=\frac{1}{2\pi}\int_{0}^{2\pi}F(e^{i\theta})\,e^{-ik\theta}\,d\theta.
$
A direct calculation yields
$
\bigl\langle\gamma_{\#}\mathrm{Leb}_\epsilon,\;F\bigr\rangle
=\frac{1}{\epsilon}\int_{0}^{\epsilon}F\bigl(e^{i\log s}\bigr)\,ds
=\sum_{k\ge0}\frac{F_k}{1+ik}\,e^{ik\log\epsilon},
$
so that the limit as $\epsilon\to0$ exists only along subsequences.  By contrast, if one takes 
$
\mathbb{P}_\epsilon
=(e^{-1/x})_{\#}\mathrm{Leb}_\epsilon,
$
then 
$
\gamma_{\#}\mathbb{P}_\epsilon\rightharpoonup\frac{d\theta}{2\pi},
$
the Haar measure on $\mathbb{S}^1$ (whose real part obeys the arcsin law).  Thus \emph{Strong} $\SP$ can be obtained simply by modifying the ambient measure.  Since 
$
(\gamma\circ h)_{\#}\mathrm{Leb}_\epsilon
=\gamma_{\#}\bigl(h_{\#}\mathrm{Leb}_\epsilon\bigr),
$
this adjustment is equivalent to reparameterizing the regularization curve.
\end{remark}
\bigskip
\begin{remark}
As mentioned above, imposing no restriction on the ambient measure has consequences. For instance, if one takes $\mathbb{P}_\epsilon = \delta_\epsilon$, then $\gamma_\# \mathbb{P}_\epsilon = \delta_{\phi_t^\epsilon(x_0)}$. Up to a subtlety discussed in the next section, this choice indeed recovers a classical notion of selection principle. In such a case, only two scenarios are possible: either a selection principle holds and $\phi_t^\epsilon$ converges to a solution of the inviscid system ($\mu_0$ is then a Dirac mass), or no selection principle holds and the solution oscillates rapidly with respect to $\epsilon$, for example like $\sin(1/\epsilon)$. Consequently, there exist many atomic-type measures for which \emph{strong} $\SP$ cannot occur.
What remains open in this work is the following question:
\begin{center}
\emph{Can one always find a family of ambient measures $\mathbb{P}_\epsilon$ for which either Strong $\SP$ or Dirac selection occurs?}
\end{center}

\end{remark}\noindent
From now on, we take
$$
\mathbb{P}_\epsilon = \mathrm{Leb}_\epsilon,
$$
unless stated otherwise.
\section{Selection principles and sensitivity to initial conditions}
\label{BasicSensitivity}
We now relate the measure-selection definition of $\SP$ to two more classical viewpoints: lack of selection for the inviscid problem and sensitivity to initial conditions in the inviscid limit. This section clarifies what is genuinely captured by the definition, and why Lyapunov-type criteria should be viewed as diagnostics rather than definitions of $\SP$.
\subsection{Basic properties}\label{Basic}

\begin{definition}[$LSP$]\label{LSPdef}
We say that the family $({\cal P}_\epsilon)$ exhibits a \emph{lack of selection principle} ($LSP$) if there exist two sequences $\epsilon_n \to 0$ and $\epsilon_n' \to 0$ such that the corresponding solutions $x^{\epsilon_n}$ and $x^{\epsilon_n'}$ converge in $C([0,T];H)$ to two distinct limits $x_1 \neq x_2$.
\end{definition}

\begin{definition}[$\LSPO$]\label{LSPOdef}
We say that the family $({\cal P}_\epsilon)$ exhibits $\LSPO$ if there exist $t\in(0,T)$, $x_0\in H$, and an observable ${\cal O}\in C_b(H;\mathbb R)$ such that
\begin{equation}
\liminf_{\epsilon\to0}{\cal O}(\gamma(\epsilon))
<
\limsup_{\epsilon\to0}{\cal O}(\gamma(\epsilon)).
\end{equation}
\end{definition}

\begin{remark}
Perhaps unexpectedly, $\SP$ is neither equivalent to $LSP$ nor $\LSPO$. The underlying reason lies in the mismatch between a measure-theoretic concept ($\SP$) and a classical one ($LSP$): indeed, $LSP$ may occur on a set whose relative Lebesgue  measure vanishes asymptotically. 
These are characterized by the property that there exists $x^\star \in {\mathscr S}_0$ such that $\gamma_\# \operatorname{Leb}_\epsilon \rightharpoonup \delta_{x^\star}$, while still allowing different subsequences to converge to distinct solutions!
An explicit example is provided in Appendix~\ref{DiracLSP}. 

\end{remark}

Let $\delta$-$LSP$ denote the occurrence of $LSP$ together with weak convergence to a Dirac mass and $LSP \setminus \delta:=LSP \setminus (\delta\text{-}LSP)$ and define $OLSP \setminus \delta$ analogously.
We place no restriction on $H$ beyond being a separable Banach space on which the solutions of \eqref{P_eps} take values. Separability is required here to guarantee that $H$ is also a Polish space. Definitions \ref{SPdef}, \ref{LSPdef}, and \ref{LSPOdef} remain unchanged. 

\begin{proposition}\label{SPDEFS}
Let $H$ be a separable Banach space. Assume $\gamma \in C_b((0,\infty);H)$ and that the family
$
\big\{\gamma_\# \operatorname{Leb}_\epsilon\big\}_{\epsilon>0}
\subset \mathcal P(H)
$
is tight. Then
\begin{equation}\label{spst_lsp_general}
LSP\setminus\delta
\Longrightarrow
\LSPO\setminus\delta \Longleftrightarrow \SP
\end{equation}

If, in addition, the family of regularized trajectories
$
\left\{
s \mapsto \phi_s^\epsilon(x_0)
\right\}_{\epsilon>0}
$
is relatively compact in $C([0,T];H)$, then
\begin{equation}\label{spst_lsp_equal}
LSP\setminus\delta
\Longleftrightarrow
\LSPO\setminus\delta
\Longleftrightarrow
\SP.
\end{equation}

In particular, when $H$ is finite-dimensional, tightness of
$\big\{\gamma_\#\operatorname{Leb}_\epsilon\big\}_{\epsilon>0}$ is automatic,
and the relative compactness of the regularized trajectories follows from
their uniform boundedness and equicontinuity. Hence the three notions in
\eqref{spst_lsp_equal} coincide unconditionally.
\end{proposition}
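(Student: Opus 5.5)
The plan is to read ``$\SP$'' in \eqref{spst_lsp_general} as ``Strong or Weak $\SP$'', i.e.\ the family $\gamma_\#\operatorname{Leb}_\epsilon$ does \emph{not} converge weakly to a Dirac mass. Tightness and Prokhorov's theorem (valid since $H$ is Polish) then give the following dichotomy. Either there are at least two distinct subsequential limits, which is Weak $\SP$. Or there is a unique limit $\mu_0$, and then the whole family converges to it. Under this reading, \emph{not-$\delta$} and $\SP$ are the same condition, so each ``$\setminus\delta$'' clause will be transported for free. The real content of each implication is the passage between the pointwise quantities ($\gamma(\epsilon)$, $x^\epsilon$) and the averaged quantity $\gamma_\#\operatorname{Leb}_\epsilon$.

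\textbf{$\LSPO\setminus\delta\Rightarrow\SP$.} This is immediate: $\LSPO\setminus\delta$ says that $\gamma_\#\operatorname{Leb}_\epsilon$ does not converge to a Dirac mass, which by the dichotomy above is exactly $\SP$.

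\textbf{$\SP\Rightarrow\LSPO$.} I argue by contraposition. Suppose that for every ${\cal O}\in C_b(H;\mathbb R)$ the limit $L({\cal O}):=\lim_{\epsilon\to0}{\cal O}(\gamma(\epsilon))$ exists. Cesàro averages of a convergent bounded function converge to the same limit, so
$$\langle\gamma_\#\operatorname{Leb}_\epsilon,{\cal O}\rangle=\frac1\epsilon\int_0^\epsilon{\cal O}(\gamma(s))\,ds\to L({\cal O}).$$
By tightness, any subsequential limit $\mu$ then satisfies $\langle\mu,{\cal O}\rangle=L({\cal O})$. Hence the limit is unique, and $L$ is \emph{multiplicative}: $L({\cal O}{\cal O}')=L({\cal O})L({\cal O}')$.

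This multiplicativity step is the main obstacle, and I would handle it as follows. For a closed set $F$, take ${\cal O}_k\downarrow\mathbbm 1_F$ with ${\cal O}_k=\max(0,1-k\,d(\cdot,F))$. Then $\mu(F)=\lim_k L({\cal O}_k)=\lim_k L({\cal O}_k^2)=\mu(F)^2$, so $\mu(F)\in\{0,1\}$. A tight $\{0,1\}$-valued Borel probability measure on a Polish space is a Dirac mass. To see this, cover a compact set of positive mass by finitely many closed balls of radius $1/n$, pick a ball of mass $1$, and intersect over $n$ to obtain a single point of mass $1$. So $\gamma_\#\operatorname{Leb}_\epsilon\rightharpoonup\delta_{x^\star}$, contradicting $\SP$. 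Therefore $\SP\Rightarrow\LSPO$, and since $\SP$ is itself the ``not $\delta$'' condition, $\SP\Rightarrow\LSPO\setminus\delta$.

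\textbf{$LSP\setminus\delta\Rightarrow\LSPO\setminus\delta$.} Let $x^{\epsilon_n}\to x_1$ and $x^{\epsilon_n'}\to x_2$ in $C([0,T];H)$ with $x_1\neq x_2$. Choose $t$ with $x_1(t)\neq x_2(t)$. Let ${\cal O}$ be a bounded continuous function separating the two points, e.g.\ ${\cal O}(x)=\min(1,\|x-x_1(t)\|)$. Then ${\cal O}(\gamma(\epsilon_n))$ and ${\cal O}(\gamma(\epsilon'_n))$ have different limits, so $\liminf<\limsup$. The ``not $\delta$'' clause refers to the same triple and carries over unchanged.

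\textbf{$\LSPO\Rightarrow LSP$ under relative compactness.} From $\LSPO$, pick sequences $\epsilon_n,\epsilon'_n\to0$ along which ${\cal O}(\gamma(\cdot))$ tends to the $\liminf$ and the $\limsup$ respectively. By relative compactness, extract subsequences with $x^{\epsilon_n}\to x_1$ and $x^{\epsilon'_n}\to x_2$ in $C([0,T];H)$. Continuity of ${\cal O}$ gives ${\cal O}(x_1(t))\neq{\cal O}(x_2(t))$, hence $x_1\neq x_2$, which is $LSP$; the ``$\setminus\delta$'' clause is again unchanged. This closes the chain \eqref{spst_lsp_equal}.

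\textbf{Finite-dimensional case.} Since $f\in\V$, we have $\sup_{\epsilon\le1}\|f(\cdot,\epsilon)\|_\infty<\infty$. Hence the trajectories $s\mapsto\phi^\epsilon_s(x_0)$ are uniformly bounded and uniformly Lipschitz in time on $[0,T]$, so Arzelà--Ascoli gives relative compactness in $C([0,T];H)$. Boundedness of $\gamma$ gives tightness, because closed bounded sets are compact.
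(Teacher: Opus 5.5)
Your proof is correct. It differs from the paper's in the one step with real content, namely $\SP\Rightarrow\LSPO$ under tightness alone.

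\textbf{The paper's route.} The paper proves $LSP\Leftrightarrow\LSPO$, with the reverse direction by contradiction via Arzel\`a--Ascoli; along the way it shows that every subsequential limit of $x^\epsilon$ solves $({\cal P}_0)$. It then identifies $\neg\SP$ with weak convergence to a single Dirac mass. Lemma~\ref{aaecontinuity} splits that case into a classical selection principle or $\delta$-$LSP$, i.e.\ convergence off a set of vanishing relative density. So the passage from ``no observable oscillates'' to ``Dirac limit'' goes through the trajectory-level dichotomy (selection principle versus $LSP$), which rests on the compactness of the trajectories.

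\textbf{Your route.} Your argument avoids this. Failure of $\LSPO$ makes $L({\cal O})=\lim{\cal O}(\gamma(\epsilon))$ a multiplicative functional on $C_b(H)$. Hence any subsequential limit satisfies $\mu(F)=\mu(F)^2$ on closed sets, and is therefore Dirac. This needs only tightness, which is exactly the hypothesis of the first chain \eqref{spst_lsp_general} in a general separable Banach space. It even shows that, under tightness, failure of $\LSPO$ forces $\gamma(\epsilon)\to x^\star$ in norm.

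\textbf{What each buys.} The paper's route gives a finer description of the Dirac case and identifies trajectory limits as inviscid solutions; neither is needed for the equivalences. Your direct proof of $\LSPO\Rightarrow LSP$, extracting along sequences that realize the $\liminf$ and $\limsup$, is also slightly cleaner than the contradiction argument.

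\textbf{One small imprecision.} In $LSP\setminus\delta\Rightarrow\LSPO\setminus\delta$, the time at which $x_1$ and $x_2$ separate need not be the time $t$ of $\gamma$. So the ``not $\delta$'' clause does not literally carry over. This is harmless: $LSP\setminus\delta$ already contains $\SP$ for $\gamma$, so your previous step gives $\LSPO\setminus\delta$ at the given $(t,x_0)$.
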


{\bf Proof}: see Appendix \ref{SP_LSP} and Fig. \ref{tricho}. 

\begin{figure}[htbp]
\centerline{\includegraphics[width=0.60\columnwidth]{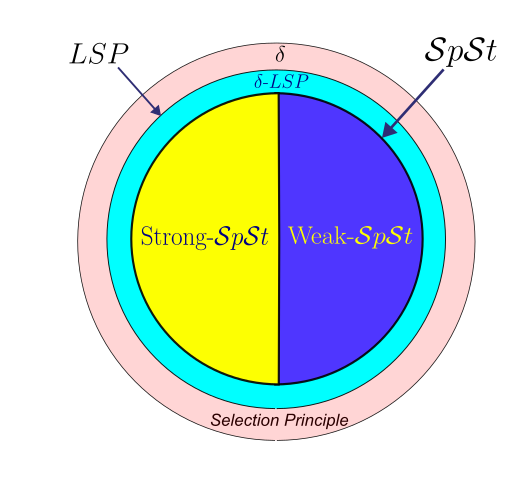}}
\caption{Mutually exclusive scenarios for the inviscid limit
in finite dimension. 
$\delta\text{-}LSP$ refers to cases having a weak limit being Dirac but at the same time a lack of selection principle. 
 It arises when $LSP$ occurs on a set whose relative
measure w.r.t. the ambient one goes to zero in the inviscid limit; see Appendix \ref{DiracLSP}. $LSP \setminus \delta$ corresponds to $LSP$ with $\delta\text{-}LSP$ excluded.
Classical
$LSP = \{ \delta\text{-}LSP\}~ \bigcup ~\{ \text{Strong-}\SP\} ~\bigcup ~\{ \text{Weak-}\SP \}$ and
Proposition \ref{SPDEFS} is $\SP\Longleftrightarrow LSP \setminus \delta \Longleftrightarrow 
\LSPO \setminus \delta$.}
\label{tricho}
\end{figure} 

An important question is to determine which classes of observables are eligible for $\LSPO$ when $\SP$ holds.
It is possible to establish that if $H$ is finite-dimensional and $\LSPO$ holds then necessarily, one must have ${\cal O}\left. \right|_{{\mathscr S}_0} \neq cst$; see Lemma \ref{OnotconstLemma}.
More generally, assume for instance that one has \emph{Strong} $\SP$ with selected non-Dirac measure $\mu_0$,
and that the system possesses certain symmetries, leading to conserved quantities.
Let $G$ be a group acting on $H$, and let $g\in G$.
If $g_{\#} \mu_0 = \mu_0$ and $\mu_0$ is ergodic
w.r.t. $g$
(see Appendix \ref{PFTP} for the definition),
then every observable satisfying $\mathcal{O}(g(x)) = \mathcal{O}(x)$
cannot satisfy $\LSPO \setminus \delta$,
since such a continuous observable must be constant $\mu_0$-a.e., hence constant on the support of $\mu_0$.
However, there always exists a nontrivial observable that satisfies $\LSPO$:
\be \label{obsuniv}
\forall x^\star \in \operatorname{supp} \mu_0,~~ {\cal O}^\star:H \to \mathbb{R}, {\cal O}^\star(x) = \| x - x^\star \|~\text{is eligible for}~
\LSPO.
\de 
This last result is indeed true in any Banach space where $\SP$ holds, no matter the dimension.
We have therefore as a direct consequence of \eqref{obsuniv} and Proposition \ref{SPDEFS}:
\begin{corollary}\label{zeroinsupp}
Assume that $\SP$ holds. Let $\mathscr{M}(\gamma)$ denote the set of all subsequential weak limits of $\gamma_{\#} \mathrm{Leb}_{\epsilon}$, namely ${\mathscr M}(\gamma) := \left\{ \mu \,:\, \exists \epsilon_n \to 0 \text{ such that } \gamma_\# \mathrm{Leb}_{\epsilon_n} \rightharpoonup \mu \right\}$ (this set is a
singleton in the case of Strong-$\SP$). If
$$
0 \in \bigcup_{\mu \in \mathscr{M}(\gamma)} \operatorname{supp}(\mu),
$$
then
\begin{equation}
\liminf_{\epsilon \to 0} \|\gamma(\epsilon)\|
\;<\;
\limsup_{\epsilon \to 0} \|\gamma(\epsilon)\|.
\end{equation}
\end{corollary}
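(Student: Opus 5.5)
The plan is to apply \eqref{obsuniv} with $x^\star = 0$, that is, with the observable ${\cal O}^\star(x)=\|x\|$, and argue by contradiction. Suppose that $\|\gamma(\epsilon)\|$ converges, say to $r\ge 0$, as $\epsilon\to0$. Then $\|\gamma(s)\|\to r$ as $s\to 0$. Since $\operatorname{Leb}_\epsilon$ is concentrated on $[0,\epsilon]$, every subsequential limit $\mu\in\mathscr M(\gamma)$ is then supported on the sphere $\{\|x\|=r\}$. To see this, take $F\in C_b(H;\mathbb R)$ vanishing near that sphere. Then $\langle \gamma_\#\operatorname{Leb}_{\epsilon_n},F\rangle = \epsilon_n^{-1}\int_0^{\epsilon_n}F(\gamma(s))\,ds$, and the integrand is eventually zero. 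Passing to the limit gives $\langle\mu,F\rangle=0$, and the Portmanteau theorem then yields $\mu(\{\|x\|=r\})=1$ for every $\mu\in\mathscr M(\gamma)$.

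By hypothesis, $0\in\operatorname{supp}\mu$ for some $\mu\in\mathscr M(\gamma)$, and the support is closed and contained in the sphere. Hence $\|0\|=r$, so $r=0$. Every $\mu\in\mathscr M(\gamma)$ is then supported on $\{0\}$, i.e.\ $\mu=\delta_0$. By tightness and Prokhorov, $\gamma_\#\operatorname{Leb}_\epsilon\rightharpoonup\delta_0$, and $\mathscr M(\gamma)=\{\delta_0\}$ is a Dirac singleton. This contradicts $\SP$, which excludes the Dirac-limit case: Strong $\SP$ requires a non-Dirac limit, and Weak $\SP$ requires at least two distinct limits. Therefore $\|\gamma(\epsilon)\|$ does not converge. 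Because it is bounded ($\gamma$ is bounded), its $\liminf$ and $\limsup$ are finite, and non-convergence means exactly $\liminf<\limsup$.

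The step needing care is the support statement: convergence of $\|\gamma(s)\|$ as $s\to0$ must force every subsequential Ces\`aro limit to live on the limiting sphere. This holds because $\operatorname{Leb}_\epsilon$ only sees $[0,\epsilon]$, where $\|\gamma\|$ is uniformly close to $r$. One could alternatively route the argument through Proposition~\ref{SPDEFS} ($\SP\Leftrightarrow\LSPO\setminus\delta$), but the direct argument above already identifies ${\cal O}^\star$ as the witnessing observable, which is the content of \eqref{obsuniv} at $x^\star=0$.
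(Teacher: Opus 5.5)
Your proof is correct and follows the paper's route: the paper obtains the corollary by instantiating \eqref{obsuniv} at $x^\star=0$, and you supply the short argument it leaves implicit. Namely, if $\|\gamma(\epsilon)\|\to r$, then every limit law sits on the sphere $\{\|x\|=r\}$, so $0$ lying in one of the supports forces $r=0$ and the Dirac limit $\delta_0$, contradicting $\SP$. You are right that Proposition~\ref{SPDEFS} is not needed, and in a general Banach space ``$F$ vanishing near the sphere'' should be read as vanishing on an annulus $\{\,x:\ \bigl|\|x\|-r\bigr|<\eta\,\}$, which is all the Portmanteau step requires.
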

This corresponds to the situation where there exists at least one inviscid solution of \eqref{P_0} whose value at time $t$ is $0$. 
In other words, the norm on $H$ provides an admissible observable for which OLSP holds.
In full generality, the converse of this corollary does not hold.
We can also indeed claim another corollary:
\begin{corollary}[Projection]\label{SpSt_O}
Let $F$ be some separable Banach space and ${\cal O}: H \mapsto F$ in $C_b(H;F)$ and assume that
Strong-$\SP$ holds, then provided ${\cal O}$ is not $\mu_0$-a.e. constant, one has
\be 
({\cal O} \circ \gamma)_\# \operatorname{Leb}_\epsilon \rightharpoonup
{\cal O}_\# \mu_0~\text{non-Dirac}.
\de 
\end{corollary}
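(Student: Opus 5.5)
The plan is to split the statement into two parts: the convergence, which is the continuous mapping theorem, and the non-Dirac claim, which is a short separation argument.

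\emph{Convergence.} The key identity is functoriality of pushforwards,
$$
({\cal O}\circ\gamma)_\#\operatorname{Leb}_\epsilon={\cal O}_\#\bigl(\gamma_\#\operatorname{Leb}_\epsilon\bigr).
$$
It follows directly from the change-of-variables formula $\langle \nu, G\circ{\cal O}\rangle=\langle{\cal O}_\#\nu,G\rangle$. Fix a test function $G\in C_b(F;\mathbb R)$. Since ${\cal O}$ is continuous, $G\circ{\cal O}\in C_b(H;\mathbb R)$. By Strong-$\SP$, $\gamma_\#\operatorname{Leb}_\epsilon\rightharpoonup\mu_0$, so
$$
\langle({\cal O}\circ\gamma)_\#\operatorname{Leb}_\epsilon,G\rangle=\langle\gamma_\#\operatorname{Leb}_\epsilon,G\circ{\cal O}\rangle\to\langle\mu_0,G\circ{\cal O}\rangle=\langle{\cal O}_\#\mu_0,G\rangle .
$$
This proves weak convergence in ${\cal P}(F)$. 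Separability of $F$ is only used to make $F$ Polish, so that Borel probability measures and weak convergence are well behaved. Boundedness of ${\cal O}$ is not essential; continuity suffices.

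\emph{Non-Dirac.} I argue by contradiction. Suppose ${\cal O}_\#\mu_0=\delta_y$ for some $y\in F$. Then
$$
\mu_0\bigl({\cal O}^{-1}(\{y\})\bigr)=({\cal O}_\#\mu_0)(\{y\})=1 .
$$
The set ${\cal O}^{-1}(\{y\})$ is closed, hence Borel, so this says ${\cal O}=y$ holds $\mu_0$-a.e. That contradicts the hypothesis that ${\cal O}$ is not $\mu_0$-a.e. constant.

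Conversely, a measure on $F$ that is not a Dirac mass is exactly one that does not charge a single point with full mass. So the hypothesis is precisely what is needed for the conclusion. As a remark, one could add that $\operatorname{supp}({\cal O}_\#\mu_0)=\overline{{\cal O}(\operatorname{supp}\mu_0)}$, which ties the result to the observable ${\cal O}^\star$ in \eqref{obsuniv}.

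The only step needing any care is the measurability and Polish-space bookkeeping in the convergence argument, which separability of $F$ handles. Otherwise the proof is routine.
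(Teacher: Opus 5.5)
Your proposal is correct and follows the same route as the paper. The convergence is the continuous mapping theorem, which you verify directly via $\langle({\cal O}\circ\gamma)_\#\operatorname{Leb}_\epsilon,G\rangle=\langle\gamma_\#\operatorname{Leb}_\epsilon,G\circ{\cal O}\rangle$, and non-Diracness comes from the hypothesis that ${\cal O}$ is not $\mu_0$-a.e.\ constant, where you merely make explicit the preimage step ${\cal O}_\#\mu_0=\delta_y\Rightarrow\mu_0({\cal O}^{-1}(\{y\}))=1$ that the paper leaves implicit.
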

\begin{proof}
It is a direct consequence of the continuous mapping theorem; see Appendix \ref{PFTP}. Since ${\cal O}$ is continuous, it gives $({\cal O} \circ \gamma)_\# \operatorname{Leb}_\epsilon \rightharpoonup
{\cal O}_\# \mu_0$. From the hypothesis that ${\cal O}$ is not $\mu_0$-a.e. constant, the limit is non-Dirac.
\end{proof}

\subsection{Spontaneous Stochasticity and Sensitivity to Initial Conditions}\label{sensitivityCI}

This section examines the relation between $\SP$ and the loss of continuity of the regularized flow map in the inviscid limit. A common strategy in the turbulence literature is to perturb the initial condition and monitor the $L^2$-distance between two solutions whose initial data coalesce as the regularization is removed. If this distance remains bounded away from zero at some finite time, then the regularized flow maps cannot admit a uniform modulus of continuity in the inviscid limit. This is stronger than classical chaotic sensitivity: for a regular chaotic flow, small errors may be strongly amplified, but at any fixed finite time the separation still vanishes with the size of the initial perturbation.

This criterion is useful, but it should not be taken as a definition of $\SP$. Interpreting $\SP$ as the blow-up of finite-time Lyapunov exponents is misleading, because it captures only one particular way in which the inviscid limit can lose deterministic continuity. In the present framework, $\SP$ is instead a measure-selection phenomenon: the regularized dynamics selects a non-Dirac limiting law in the inviscid limit. Sensitivity to initial conditions may therefore be a powerful diagnostic for $\SP$, but it is not equivalent to $\SP$ itself. The purpose of this section is to make this distinction precise.

In these approaches, randomness enters through the initial condition. It is therefore natural to ask how the limiting statistics obtained in this way compare with those obtained by pushing forward an ambient measure by the map $\gamma$ introduced in Definition~\ref{gammadef}. Equivalently, one asks how this initial-data randomization framework fits into our definition of $\SP$.

Before addressing these questions, we enlarge the class of regularized problems and inviscid limits to which $\SP$ applies. This is not a new definition of $\SP$: the criterion is unchanged, but the regularization framework is made more flexible.

\subsubsection{$\SP$ for general regularized problems, $\SP_{\rm turb}$ and TBM criterion}
The construction that follows differs from Definition~\ref{SPdef} only by a
few technical modifications. We introduce the \emph{regularization space}
$\Theta$, a Polish space endowed with its Borel $\sigma$-algebra
$\mathcal{B}(\Theta)$, and assume that it contains a distinguished element
$0 \in \Theta$ representing the inviscid (zero) regularization.
We consider regularized vector fields of the form
$$
f : H \times \Theta \to H, \qquad (x,\theta) \mapsto f(x,\theta),
$$
and replace the condition $f \in \V$ by the requirement that
$f(\cdot,\theta)$ is Lipschitz for every $\theta \in \Theta \setminus
\{0\}$. In addition, the initial condition is allowed to depend on the
regularization parameter, namely $x(0) = x_0 + \delta(\theta)$.

For each $\theta \in \Theta \setminus \{0\}$ we then consider the
regularized Cauchy problem
\begin{equation}\label{P_theta} \tag{${\cal P}_\theta$}
\dot{x} = f(x,\theta), \qquad
x(0) = x_0 + \delta(\theta) \in H, \qquad t \in [0,T].
\end{equation}

Convergence to the inviscid system is then expressed by
$$
\lim_{\theta \to 0} \| f(\cdot,\theta) - f_0 \|_\infty = 0,
\qquad
\lim_{\theta \to 0} \| \delta(\theta) \| = 0,
$$
where $f(\cdot,0) = f_0$

The regularization function $\gamma$ in Definition \ref{gammadef} is a measurable curve
$$
\gamma:\Theta \setminus \{0\} \to H,~\theta \mapsto \phi_t^\theta\left(x_0+\delta(\theta)\right)~
\mbox{solution at time $t$ of \eqref{P_theta}}.
$$
The definition of $\SP$ is the same as before, but with more general spaces involved. We rewrite it for convenience:

\begin{definition*}[$\SP$ for general regularizations]\label{genSPdef}
Let $(\mathbb{P}_\epsilon)_{\epsilon > 0}$ be a family of probability measures in ${\cal P}(\Theta \setminus \{0\})$ such that $\mathbb{P}_\epsilon \rightharpoonup \delta_0$. The family (\ref{P_theta})  is said to exhibit spontaneous stochasticity relative to $(t,x_0,\mathbb{P}_\epsilon)$ if
$$
\SP := 
\left\{
\begin{aligned}
&\text{Strong} && \text{if } \gamma_\# \mathbb{P}_\epsilon \rightharpoonup \mu_0 \text{ as } \epsilon \to 0, \text{ and } \mu_0 \text{ non-Dirac}, \\[6pt]
&\text{Weak} && \text{if } \gamma_\# \mathbb{P}_\epsilon \text{ admits at least two distinct subsequential limits}.
\end{aligned}
\right.
$$
\end{definition*}

We now compare the formulation $\SP$ of Definition~\ref{SPdef} with a turbulence-inspired version, denoted by $\SP_{\rm turb}$.  The comparison below isolates the structural ingredients of the two settings: the parameter space, the regularization parameter, the perturbed initial condition, and the associated sampling measure.
{
$$
\renewcommand{\arraystretch}{1.35}
\begin{array}{|l|l|l|l|}
\hline
&
\displaystyle \SP
&
\displaystyle \SP_{\rm turb}
& \mbox{TBM criterium}
\\
\hline

\displaystyle \Theta
&
\displaystyle \mathbb{R}_0^+
&
\displaystyle \{(0,0_H)\} \cup \mathbb{R}^+ \times H
&
\displaystyle \{(0,0_H)\} \cup \mathbb{R}^+ \times H
\\

\displaystyle \theta
&
\displaystyle \epsilon
&
\displaystyle (\epsilon,\xi)
&
\displaystyle (\epsilon,\xi)
\\

\displaystyle \delta(\theta)
&
\displaystyle 0
&
\displaystyle \xi
&
\xi 
\\

\displaystyle \mathbb{P}_\epsilon \rightharpoonup \delta_0
&
\displaystyle {\rm Leb}_\epsilon
&
\displaystyle
{\rm Leb}_\epsilon(ds)\rho_s(d\xi),
\;
\rho_s \rightharpoonup \delta_0
&
\delta_\epsilon \otimes \rho_\epsilon
\\
\displaystyle \gamma_\# \mathbb{P}_\epsilon & \mu_\epsilon = \int \delta_{\phi_t^s(x_0)} \mathbb{P}_\epsilon(ds) &
\displaystyle \nu_\epsilon = \int \delta_{\phi_t^s(x_0+\xi)} \mathbb{P}_\epsilon(ds,d\xi) 
&
\displaystyle \kappa_\epsilon = \gamma_\# (\delta_\epsilon \otimes \rho_\epsilon) 
\\
\hline
\end{array}
\vspace*{0.5cm}
$$
}
For every $F\in C_b(H)$, the corresponding laws are defined by
\be \label{spspsp}
\begin{array}{rcll}
\SP:
&
\langle \mu_\epsilon,F \rangle
&
=
&
{\displaystyle
\frac{1}{\epsilon}
\int_0^\epsilon F(\phi_t^s(x_0))\,ds}
\\[1em]

\SP_{\rm turb}:
&
\langle \nu_\epsilon,F \rangle
&
=
&
{\displaystyle
\frac{1}{\epsilon}
\int_0^\epsilon ds
\int F(\phi_t^s(x_0+\xi))\,\rho_s(d\xi)}
\end{array}
\de
One must notice that $\SP_{\rm turb}$ reduces to $\SP$ 
 when $\rho_s = \delta_0$ but in general the two notions involve \emph{two different perturbations of the inviscid problem}.
\\
We now introduce a TBM criterion, adapted from \cite{Simon_Jeremie_AM20}, for detecting the occurrence of spontaneous stochasticity within the turbulence-inspired framework $\SP_{\rm turb}$.
\begin{definition}[TBM]\label{TBMdef}
Denote $\kappa_\epsilon := \gamma_\# (\delta_\epsilon \otimes \rho_\epsilon)$
 with
$\rho_\epsilon \rightharpoonup \delta_0$.
We say that TBM occurs if for $X_i = x_0 + \xi_i$, $i=1,2$ and $\xi_i
\sim \rho_\epsilon$ i.i.d. random variables, one has
\be \label{tbm0}
\liminf_{\epsilon\to0}
\mathbb{E}\!\left[
\left\|
\phi_t^\epsilon(X_1)-\phi_t^\epsilon(X_2)
\right\|^2
\right]
=
2L>0.
\de 
Therefore, it is equivalent to
\be \label{tbm1}
\liminf_{\epsilon\to0} V_\epsilon=L>0,
\qquad
V_\epsilon
=
\operatorname{Var}(\phi_t^\epsilon(X))
= \langle \kappa_\epsilon,\|\cdot\|^2 \rangle - \left\| \langle \kappa_\epsilon,{\rm Id} \rangle \right\|^2.
\de 
\end{definition}
We note that if the inviscid system lacks a selection principle, one expects 
that $\kappa_\epsilon$ has no limit for the weak topology, possibly only along subsequences. It is similar to taking $\mathbb{P}_\epsilon = \delta_\epsilon$
in our original $\SP$ definition, giving $\gamma_\# \delta_\epsilon = 
\delta_{\phi_t^\epsilon(x_0)}$. In other words, when the system lacks a selection principle, for the atomic ambient measure $\mathbb P_\epsilon=\delta_\epsilon$, Strong $\SP$ cannot occur.
Note also that the TBM criterion is indeed similar to $\LSPO$, Definition \ref{LSPOdef}. Although useful, it cannot be used to define 
the probability measures involved in the inviscid limit.
This is why we consider $\SP_{\rm turb}$ in \eqref{spspsp} as the correct formulation to
characterize inviscid selected statistics, namely the weak limit of
the measures $\nu_\epsilon$.
\subsubsection{Comparison between $\SP$, $\SP_{\rm turb}$ and TBM criterion}\leavevmode\par
\para{TBM as a sufficient condition to detect spontaneous stochasticity}\\
It will be useful to compare the variance of $\nu_\epsilon$ with the variance
of $\kappa_\epsilon = \gamma_\# (\delta_\epsilon \otimes \rho_\epsilon)$ in TBM definition. One has
\be \label{Varnueps}
\operatorname{Var}(\nu_\epsilon) = 
\frac{1}{\epsilon} \int_0^\epsilon \operatorname{Var}(\kappa_s)~ds
+ \frac{1}{\epsilon} \int_0^\epsilon \| B_s -\thickbar{B}_\epsilon \|^2~ds,
\de 
where $B_s= \int \phi_t^s(x_0+\xi) \rho_s(d\xi)$ and $\thickbar{B}_\epsilon = 
\frac{1}{\epsilon} \int_0^\epsilon B_s  ds$. We can therefore claim without proof that

\begin{property}[TBM implies $\SP_{\rm turb}$]
Assume that TBM occurs:
$$
\liminf_{\epsilon\to0}\operatorname{Var}(\kappa_\epsilon)=L>0.
$$
Then, 
$$
\liminf_{\epsilon\to0}\operatorname{Var}(\nu_\epsilon)\geq L>0.
$$
Hence every weak limit point of $(\nu_\epsilon)_{\epsilon>0}$ is non-Dirac.
Consequently, if $\nu_\epsilon$ converges, Strong $\SP_{\rm turb}$ occurs;
if it has at least two distinct subsequential limits, Weak $\SP_{\rm turb}$
occurs, but with non-Dirac subsequential limits.
\end{property}
If $\rho_s=\delta_0$, then $\kappa_s=\delta_{\phi_t^s(x_0)}$, hence
$\operatorname{Var}(\kappa_s)=0$ for every $s>0$. Therefore TBM cannot detect
this situation.  The variance in \eqref{Varnueps} then takes the form
$$
\operatorname{Var}(\nu_\epsilon) = \frac{1}{\epsilon}
\int_0^\epsilon \left\|  \phi_t^s(x_0) - \frac{1}{\epsilon} \int_0^\epsilon 
\phi_t^\sigma(x_0) d\sigma
\right\|^2 ds = \langle \mu_\epsilon,\|\cdot \|^2 \rangle - \| \langle \mu_\epsilon,{\rm Id} \rangle \|^2  = \operatorname{Var}(\mu_\epsilon).
$$
In particular, Strong $\SP$ may still occur if
$\mu_\epsilon\rightharpoonup\mu_0$ with $\mu_0$ non-Dirac.
\bigskip

TBM provides a useful fixed-scale criterion for detecting non-collapse of the
laws $\kappa_s$. However, $\SP_{\rm turb}$ is formulated at the level of the
averaged laws $\nu_\epsilon$. Hence non-Dirac limiting statistics may also be
created by the scale-mixing of the fixed-scale means, a mechanism which is
already present in the original $\SP$ formulation and is not detected by TBM.
\begin{center}
\emph{TBM is a sufficient criterion for $\SP_{\rm turb}$, but not a necessary
one.}
\end{center}
{In addition, the TBM criterion detects only a particular subcase of Weak $\SP$: one in which all limiting laws are non-Dirac, namely have nonzero variance. The presence of a single Dirac accumulation point already prevents a full TBM criterion from holding, since the variance vanishes along the corresponding subsequence. Thus TBM is a sufficient criterion for a stronger form of Weak $\SP$, but not for Weak $\SP$ itself. By contrast, in the Strong $\SP$ regime, where the limiting law is unique, TBM remains a natural sufficient criterion for proving that this law is non-Dirac.}

\para{Comparison between $\SP$ and $\SP_{\rm turb}$}

The following elementary estimate gives a useful comparison between $\SP$ and
$\SP_{\rm turb}$. It shows that, if the measures $\rho_s$ concentrate
sufficiently fast relative to the modulus of continuity of the regularized
flow maps, then the additional perturbation of the initial condition does not
modify the selected inviscid statistics. Let $\Omega_s$ be a modulus of continuity for the flow map
$x\mapsto\phi_t^s(x)$. Then, for every $F\in{\rm Lip}_b(H)$, one has
$$
\left|
\langle \mu_\epsilon-\nu_\epsilon,F\rangle
\right|
\leq
C_F \,
\frac{1}{\epsilon}
\int_0^\epsilon
\int_H
\Omega_s(\|\xi\|)\,\rho_s(d\xi)\,ds.
$$
Since bounded Lipschitz functions determine weak convergence of probability
measures on $H$, this estimate yields the following comparison principle.

\begin{property}\label{modulus_spst}
Assume that
\be \label{omegas}
\frac{1}{\epsilon}
\int_0^\epsilon
\int_H
\Omega_s(\|\xi\|)\,\rho_s(d\xi)\,ds
\longrightarrow 0
\qquad
\text{as } \epsilon \to 0.
\de 
Then the families $(\mu_\epsilon)_{\epsilon>0}$ and
$(\nu_\epsilon)_{\epsilon>0}$ have the same weak limit points. More precisely,
for every sequence $\epsilon_j \to 0$, if
$
\mu_{\epsilon_j} \rightharpoonup \mu,~{\rm and}~
\nu_{\epsilon_j} \rightharpoonup \nu,
$
then
$$
\mu=\nu.
$$
\end{property}

The estimate above measures the competition between the concentration of
$\rho_s$ at the origin and the deterioration, as $s\to0$, of the modulus of
continuity of the map $x\mapsto\phi_t^s(x)$. If \eqref{omegas} holds, then
the initial-data perturbation is asymptotically invisible and does not change
the selected inviscid statistics. If \eqref{omegas} fails, this perturbation
may survive in the inviscid limit and produce a different selection mechanism.
In applications one may expect, for instance,
$\Omega_s(r)$ behaving as $ e^{L_s}r$ with $L_s\to+\infty$, so that the concentration
of $\rho_s$ has to compensate for the loss of continuity of the regularized
flow maps.
\bigskip 

By the preceding criterion, if canonical $\SP$ occurs relative to
$(t,x_0,{\rm Leb}_\epsilon)$, then there exists a family
$\rho_s\rightharpoonup\delta_0$ concentrating sufficiently fast such that
$\SP_{\rm turb}$ occurs relative to
$(t,x_0,{\rm Leb}_\epsilon(ds)\rho_s(d\xi))$, with the same limiting law.
This is only an existence result. A universal implication, valid for all
families $\rho_s\rightharpoonup\delta_0$, should not be claimed without
specifying an admissible class of perturbation laws, for instance by imposing
centering or compatibility with the moduli of continuity of the
regularized flow maps. Conversely, $\SP_{\rm turb}$ does not imply $\SP$ in
general, as illustrated by examples such as $\dot x=\tanh(x/\epsilon)$ at
$x_0=0$; see Appendix~\ref{tanh_eps}.
\bigskip

\section{Set-theoretic results: from attainability to universality classes}\label{setheo}
In what follows, it will be convenient to reinterpret the singular regime 
$\epsilon \to 0$ in \eqref{P_eps} as an asymptotic limit $\tau \to \infty$, 
where $\tau$ may be viewed as a Reynolds-type parameter rather than a 
viscosity-type one. The regularized system \eqref{P_eps} will therefore be 
denoted by $(\mathcal{P}_\tau)$, and we retain the same notation for the 
vector fields in $\V$, now considered in the limit $\tau \to \infty$, 
which corresponds to the inviscid regime. Another key difference is that we 
are interested in all regularizations of \eqref{P_0}, rather than a single 
fixed one. To this end, we introduce the space of \emph{regularization curves} 
as
\be \label{Gamma}
\Gamma := \{\, \gamma : \mathbb{R}^+ \to H \;:\; \exists f \in \V 
\ \text{such that}\ \gamma(\tau) = \phi_t^\tau(x_0) \,\} 
\subset C_b(\mathbb{R}^+;H).
\de
This set is equipped with the compact-open topology of uniform convergence on compact sets. Note that, for a given curve $\gamma$, there may be several vector fields that generate $\gamma$. We now restate Definition~\ref{SPdef}, giving it a more set-theoretic 
formulation.

\begin{definition}[$\SP$]\label{SPdef5}
Let $\gamma \in \Gamma$, spontaneous stochasticity relative to $(t,x_0,\mathbb{P}_\tau)$ occurs if
\be \label{SPdefbis}
\SP :=
\left\{
\begin{aligned}
&\text{Strong} \quad 
{\mathscr M}(\gamma)=\{\mu\} ~\text{ for some non-Dirac}~
\mu \in {\mathscr M}_0 = {\cal P}({\mathscr S}_0); \\
&\text{Weak} \quad 
{\mathscr M}(\gamma) \text{ is not a singleton}
\end{aligned}
\right.
\de
where ${\mathscr M}(\gamma)$ is the set of probability measures obtained as subsequential limits for the weak topology:
\begin{equation}\label{Mgamma_def}
{\mathscr M}(\gamma) := \left\{ \mu \,:\, \exists \tau_n \to \infty\text{ such that } \gamma_\# \mathbb{P}_{\tau_n} \rightharpoonup \mu \right\}.
\end{equation}
\end{definition}

\begin{remark}
Since one adopts a measure-theoretic approach, $\delta$-$LSP$ cannot be
discriminated from a genuine selection principle.
As we will see next, they live at the border of $\cup_\gamma {\mathscr M}(\gamma)$
as extreme points.
We also emphasize that the notion of \emph{Weak} $\SP$ does not require specifying
whether the accumulation measures are Dirac masses or not.
\end{remark}
\bigskip 
\begin{remark}
It turns out that under algebraic reparametrizations $g$ of the curves $\gamma$, one has $\mathscr{M}(\gamma \circ g) = \mathscr{M}(\gamma)$. 
In other words, for a given family $({\cal P}_\epsilon)_{\epsilon \in \mathbb{R}^+}$, 
there exist many equivalent regularized problems $({\cal P}_{g(\tau)})_{\tau \in \mathbb{R}^+}$ that 
yield the same inviscid statistics in the limit $\tau \to \infty$.
This fact is of independent interest for our later purposes, and we provide full details in 
Appendix~\ref{Alg} for the interested reader.

\end{remark}

\subsection{${\mathscr M}={\mathscr M}_0$: all non-Dirac probability measures are attainable through Strong $\SP$}
\label{allprob}
The main result of this subsection is that every probability measure in ${\mathscr M}_0$ can be attained as the unique subsequential limit associated with a suitable regularization curve. In particular, whenever the inviscid system is ill posed, every non-Dirac probability measure in ${\mathscr M}_0$ can be realized by a regularization exhibiting strong $\SP$. This remarkable flexibility is possible because one allows for a very large class of regularizations.

To simplify the discussion, we use again $\mathbb{P}_\tau = {\rm Leb}_\tau$
We recall the standard notation for the convex hull:
$$
\operatorname{co}(A) := \left\{ \sum_{i=1}^k \theta_i a_i \,:\, k \in \mathbb{N},~ \theta_i \geq 0,~ \sum_{i=1}^k \theta_i = 1,~ a_i \in A \right\}.
$$
We obtain the following result for the set of all subsequential limits arising from regularization curves in $\Gamma$:
\begin{theorem}\label{M0M}
Let ${\displaystyle {\mathscr M} := \bigcup_{\gamma \in \Gamma} {\mathscr M}(\gamma)}$, and ${\cal E} := \{\delta_x\}_{x \in {\mathscr S}_0}$ 
where  ${\mathscr S}_0$ and $\Gamma$ are defined in  \eqref{S0} and \eqref{Gamma}. Then ${\mathscr M}$ is compact, convex, and
\begin{equation}
{\mathscr M} = {\mathscr M}_0 = \overline{\operatorname{co}}({\cal E}),
\end{equation}
where the closure is taken with respect to the weak topology.

Moreover, $\forall \mu \in {\mathscr M}_0$, $\exists \gamma_\mu \in \Gamma$ such that
\be \label{singleton}
{\mathscr M}(\gamma_\mu) = \{ \mu \}.
\de 
\end{theorem}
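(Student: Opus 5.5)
The plan is to prove three things in turn: the inclusion ${\mathscr M}\subset{\mathscr M}_0$, the identity ${\mathscr M}_0=\overline{\operatorname{co}}({\cal E})$, and the reverse inclusion through an explicit construction of $\gamma_\mu$ satisfying \eqref{singleton}. Compactness and convexity of ${\mathscr M}$ then follow for free. ${\mathscr M}_0={\cal P}({\mathscr S}_0)$ is compact because ${\mathscr S}_0$ is compact, by Prokhorov's theorem, and it is obviously convex.

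\emph{Step 1: ${\mathscr M}\subset{\mathscr M}_0$.} Fix $\gamma\in\Gamma$ generated by some $f\in\V$. The trajectories $s\mapsto\phi^\tau_s(x_0)$ are uniformly bounded and equi-Lipschitz, since $\|f\|_\infty<\infty$. By Arzel\`a--Ascoli and uniform convergence $f(\cdot,\tau)\to f_0$, every limit point as $\tau\to\infty$ is a Peano solution of \eqref{P_0}. Hence $\operatorname{dist}(\gamma(\tau),{\mathscr S}_0)\to0$. For any closed $\eta$-neighbourhood $K_\eta$ of ${\mathscr S}_0$, the Ces\`aro mass $\gamma_\#{\rm Leb}_\tau(K_\eta^c)$ tends to $0$. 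By the portmanteau theorem every subsequential limit $\mu$ satisfies $\mu(K_\eta)=1$ for all $\eta$, hence $\mu({\mathscr S}_0)=1$.

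\emph{Step 2: ${\mathscr M}_0=\overline{\operatorname{co}}({\cal E})$.} This is standard on a compact metric space. Finitely supported measures on ${\mathscr S}_0$ are weakly dense in ${\cal P}({\mathscr S}_0)$: partition ${\mathscr S}_0$ into Borel pieces of small diameter and put the mass of each piece at one point of it. The reverse inclusion holds because ${\mathscr M}_0$ is closed and convex.

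\emph{Step 3: attainability.} The key lemma, and the step I expect to be the main obstacle, is the following. For every $y\in{\mathscr S}_0$ and every $\eta>0$, there is an autonomous, globally Lipschitz $g$ with $\|g-f_0\|_\infty<\eta$ and $\|\phi^g_t(x_0)-y\|<\eta$. In other words, every Peano solution endpoint is reachable by Lipschitz approximations. I would build $g$ from a Lipschitz mollification of $f_0$, corrected in a thin tube around a Peano solution $x_y(\cdot)$ ending at $y$ so that its flow is steered along $x_y$. A polygonal (Euler) approximation of $x_y$ with small drift error should allow this. The delicacy is autonomy: the solution may revisit points, e.g.\ rest at an equilibrium and then leave. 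I would first try to handle this by following $x_y$ only up to a small error, exploiting that only the endpoint at time $t$ matters, so points where the tube self-intersects can be perturbed.

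Given the lemma, the construction runs as follows.
\begin{itemize}
\item Fix $\mu\in{\mathscr M}_0$ and atomic approximations $\mu_k=\sum_i w_i^k\delta_{y_i^k}\rightharpoonup\mu$.
\item Choose Lipschitz fields $g_i^k$ with tolerance $\eta_k\to0$.
\item Define $f(\cdot,\tau)$ on successive $\tau$-blocks. Within the $k$-th stage, $f$ equals $g^k_i$ on sub-intervals of relative length $w_i^k$. Consecutive fields are joined by linear interpolation in $\tau$ over transition intervals of negligible relative length. This keeps $f$ continuous in $\tau$, Lipschitz in $x$, and uniformly close to $f_0$, so $f\in\V$.
\item Repeat the $k$-th pattern many times, with each repetition short compared with the total elapsed length.
\end{itemize}
The Ces\`aro averages $\gamma_\#{\rm Leb}_\tau$ then converge to $\mu$ for every $\tau\to\infty$, not only along block ends. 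The reason is that at any time the average is a convex combination of already-accurate stage measures plus an $o(1)$ remainder. Transition endpoints stay near ${\mathscr S}_0$ by Step 1's compactness argument, so they do not contribute in the limit. This yields ${\mathscr M}(\gamma_\mu)=\{\mu\}$, and hence ${\mathscr M}_0\subset{\mathscr M}$.
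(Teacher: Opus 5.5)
Your Steps 1 and 2 are sound. They are the paper's Steps 2 and 1, with a density argument in place of Krein--Milman. Granted the lemma, your block construction is a cleaner substitute for the paper's Steps 5--6. Since $f(\cdot,\tau)=g^k_i$ exactly on each sub-interval, $\gamma(\tau)=\phi^{g^k_i}_t(x_0)$ exactly there, so you never have to control the flow of a mixture $af_x+(1-a)f_y$ whose Lipschitz constants blow up.

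The gap is the key lemma. All of ${\mathscr M}_0\subset{\mathscr M}$ rests on it, you leave it as a plan, and whether the plan can succeed depends on $\dim H$. Two facts frame it. First, trajectories of a Lipschitz autonomous field are injective on $[0,t]$ unless periodic. Second, by Arzel\`a--Ascoli, if $g_n\to f_0$ uniformly and $\phi^{g_n}_t(x_0)\to y$, a subsequence of the whole trajectories converges uniformly to a Peano solution ending at $y$. So ``only the endpoint matters'' merely lets you choose which Peano solution to shadow.

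If $\dim H\ge3$, your perturbation idea does work once made precise:
smooth $x_y$ in time; make it an embedded $C^{1,1}$ curve from $x_0$ by a $C^1$-small generic perturbation (generically it has neither zero velocity nor double points); define $g$ on this curve as its velocity, which is Lipschitz there and within $\eta$ of $f_0$; extend by McShane--Kirszbraun; and blend with a mollification of $f_0$ outside a thin tube.
If $\dim H=1$, Peano solutions are monotone and pause only at zeros of $f_0$. An approximation from above along $[x_0,y]$ (say $y>x_0$), slowed near a zero as in \eqref{regamb} and tuned by continuity, then puts the endpoint at $y$.

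In the plane, however, the lemma is false. A Peano solution can revisit a non-equilibrium point tangentially and leave on another branch, so its two passages cross and no injective curve shadows it. Here is an example.
Take $f_0=(1,\,3\chi(x_1)|x_2|^{2/3})$ on the box $|x_1|\le2$, $|x_2|\le50$, with $\chi=1$ on $[-1,1]$ and $\operatorname{supp}\chi\subset(-\tfrac32,\tfrac32)$. Complete it outside by a bounded Lipschitz circulation carrying $(2,u)$, $u\ge0$, around to $(-2,R(u))$. Here $R$ is increasing, the return time is long and strictly decreasing in $u$, and $R(0)<0$ is such that the solution entering at $(-2,R(0))$ rises to the axis at some $(a^*,0)$ with $|a^*|<1$.
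From $x_0=(-2,0)$, let $\Pi$ run along the axis, loop, re-enter at $(-2,R(0))$, touch the axis at $(a^*,0)$ and leave it at once. Its second passage lies below the first for $x_1<a^*$ and above it for $x_1>a^*$.
Every crossing of the box takes time exactly $4$ and the outer field is Lipschitz. Matching the final exit time and height therefore shows that $\Pi$ is the only Peano solution ending at $y:=\Pi(t)$, for $t$ slightly after the second exit.
Trajectories of autonomous Lipschitz $g_n\to f_0$ with endpoints tending to $y$ would then converge uniformly to $\Pi$. They would meet themselves near $(a^*,0)$ and hence be periodic, contradicting the separation of the two branches afterwards.
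Thus $\delta_y\notin{\mathscr M}$, and Theorem~\ref{M0M} as stated fails for this $f_0$.

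The paper's Step 4 does not address this. Reading $F_\tau=\dot g_\tau$, a function of time, as a field on the tube presupposes exactly the injectivity that fails here. It also keeps the non-Lipschitz $f_0$ outside the tube, which your mollification rightly avoids. Note that Lipschitz $g$ with $\|g-f_0\|_\infty$ arbitrarily small exist only when $f_0$ is uniformly continuous.
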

\noindent
\begin{proof}
The inclusion ${\mathscr M} \subset {\mathscr M}_0$ is immediate; the main difficulty lies in proving ${\mathscr M}_0 \subset {\mathscr M}$. The argument proceeds in several steps and the proof is constructive.
First,  we build, for each $x \in {\mathscr S}_0$, an explicit vector field in $\V$ such that the associated regularization curve $\gamma_x \in \Gamma$ satisfies ${\mathscr M}(\gamma_x) = \{\delta_x\}$, i.e., ${\cal E} \subset {\mathscr M}$. In other words, for such regularizations, one has indeed a classical selection principle.
We then show that regularizations can be constructed to attain arbitrary finite convex combinations of these Dirac measures, so that ${\operatorname{co}}({\cal E}) \subset {\mathscr M}$. We thus have
$
\operatorname{co}(\mathcal{E}) \subseteq \mathscr{M} \subseteq \mathscr{M}_0 = \overline{\operatorname{co}}(\mathcal{E}) \Longrightarrow \thickbar{\mathscr M} = {\mathscr M}_0 =\overline{\operatorname{co}}(\mathcal{E}). 
$ The measures in ${\cal E}$ are often referred to as the \emph{extreme points} or \emph{pure states} of ${\mathscr M}_0$.
We then show that for all $\mu \in \thickbar{\mathscr M}$, one can construct some vector field and corresponding curve $\gamma_\mu \in \Gamma$ such that ${\mathscr M}(\gamma_\mu) = \{\mu\}$. The proof is more difficult and amounts to renormalize any sequence in ${\rm co}({\cal E})$ in a controlled way. The immediate consequence is that ${\mathscr M}$ is closed. Since ${\mathscr M}_0$ is compact, we conclude that ${\mathscr M}$ is compact.

\paragraph{{\bf Step 1.} ${\mathscr M}_0$ is compact and convex and
$\mathscr{M}_0 = \overline{\operatorname{co}}(\mathcal{E})$.}
\mbox{}\\
It is convex since taking $\mu_1,\mu_2 \in {\cal P}({\mathscr S}_0)$ then for $\theta \in [0,1]$, $\mu({\mathscr S}_0)=(1-\theta) \mu_1({\mathscr S}_0) + \theta \mu_2({\mathscr S}_0) = 1$. Moreover, for all Borel sets $B \subset {\mathscr S}_0$, $\mu(B)\geq 0$ since
$\mu_1(B),\mu_2(B) \geq 0$ and thus $\mu \in {\mathscr M}_0$.

Due to the compactness of ${\mathscr S}_0$, one has immediate tightness: for all $\mu \in {\cal P}({\mathscr S}_0)$, $\mu({\mathscr S}_0)=1$ so that the probability measures in ${\cal P}({\mathscr S}_0)$ are tight. Prokhorov theorem then states that tightness of probability measures in a Polish space is equivalent to be relatively compact in the weak topology. It thus remains to show that the set is closed. Let $\mu_k \rightharpoonup \mu$ then since $\limsup_{k \to \infty} \mu_k({\mathscr S}_0) \leq \mu({\mathscr S}_0)$, then $\mu({\mathscr S}_0) \geq 1$. By weak convergence $\mu(H) = 1$ and $\mu(B) \geq 0$ for all Borel sets, therefore $\mu({\mathscr S}_0) \leq 1$ i.e. $\mu({\mathscr S}_0)=1$ and $\mu \in {\mathscr M}_0$.

${\mathscr M}_0$ is a compact convex subset of the locally convex space of signed measures and by the Krein-Milman theorem it is the closed convex hull of its extreme points.
One must show that the extreme points of $\mathscr{M}_0$ are the Dirac measures. Let us take a Dirac measure $\delta_x$ for $x \in \mathscr{S}_0$, and assume $\delta_x = \theta \mu_1 + (1-\theta) \mu_2$ for $\theta \in (0,1)$, then $\delta_x(\{x \}) = 1$ and therefore $\mu_1(\{x\}) = \mu_2(\{x\}) = 1$ so that $\mu_1=\mu_2=\delta_x$. It is not possible to have other measures since it would decompose as a convex combination of Dirac measures. Therefore 
$\mathscr{M}_0 = \overline{\operatorname{co}}(\mathcal{E})$.

\paragraph{{\bf Step 2.} $\mathscr{M}(\gamma) \subset {\mathscr M}_0$.}
\mbox{}\\
Let $\mu \in \mathscr{M}(\gamma)$, by definition one can find a subsequence $\tau_n \to \infty$ such that $\gamma_\# \mathrm{Leb}_{\tau_n} \rightharpoonup \mu$. Since $f \in \V$ converges uniformly to $f_0$ and $\gamma(\tau) = \phi_t^\tau(x_0)$, the set of accumulation points $A= \left\{ a \in H~:~\exists \tau_p \to \infty~\text{such that}~\gamma(\tau_p) \to a \right\} \subset {\mathscr S}_0$ and since ${\mathscr S}_0$ is closed, $\thickbar{A}
\subset {\mathscr S}_0$.
Let $N$ be some open set with $N \cap {\mathscr S}_0 = \emptyset$ and thus $N \cap \thickbar{A} = \emptyset$, then
$\mu(N) = \lim_{n \to \infty} \frac{1}{\tau_n} 
\int_0^{\tau_n} \mathbbm{1}_N (\gamma(s)) ds = 0$.
Therefore, $\mu(N)=0$ for all $N$ like above, i.e. the support of $\mu \subset {\mathscr S}_0$. 

\paragraph{{\bf Step 3.} $\mathscr{M}(\gamma)$ is compact.} \label{step3}
\mbox{}\\
The space ${\mathscr M}(\gamma)$ is never empty due to tightness of
the measures $\gamma_\# {\rm Leb}_\tau$ and Prokhorov theorem. We first show that it is 
compact in the weak topology. It suffices to show that it is closed and since it is in the compact set $\mathscr{M}_0$, it is compact as well. The proof is a classical diagonal argument. Let $\mu_k
\in {\mathscr M}(\gamma) \to \mu$.
By hypothesis, there is a sequence $\tau_{n,k} \to \infty$ such that $
\lim_{n \to \infty} \frac{1}{\tau_{n,k}} \int_0^{\tau_{n,k}} F(\gamma(s)) ds = \langle \mu_k,F \rangle$. For each $n$, one can choose
$k=k(n)$ such that $\left| \frac{1}{\tau_{n,k}} \int_0^{\tau_{n,k}} F({\gamma(s)}) ds - \langle \mu_k,F\rangle \right| \leq \frac{1}{n}$.
The sequence $\tau_n' = \tau_{n,k(n)}$ is such that $\lim_{n \to \infty} \frac{1}{\tau_n'} \int_0^{\tau_n'} F({\gamma(s)}) ds = {\left\langle \mu, F \right \rangle }$, i.e. $\mu \in {\mathscr M}(\gamma)$.
\\

At this stage, it is interesting to note that $\mathscr{M}(\gamma)$ can be nonconvex. We give an example.
Let us assume that there exists a system such that $\gamma(s)=e^{i \log s}$.
We write the test functions like $F(z) = \sum_{k \geq 0} F_k z^k$. Let
$C(\epsilon) = \frac{1}{\epsilon} \int_0^\epsilon F(\gamma(s)) ds = \sum_{k \geq 0} G_k e^{i k \log \epsilon}$ with $G_k = \frac{F_k}{1+ik}$. Then we can exhibit subsequences $\epsilon_n^{(a)} \to 0$ of the form $\epsilon_n^{(a)} = \mathrm{exp}(a_n-2\pi n)$
with $a_n \to a$. It gives the family of measures $\langle \mu_a,F \rangle
= \sum_{k \geq 0} G_k e^{i k a}$. Consider two distinct measures $\mu_{a_1},\mu_{a_2}$ then one must find a sequence $\epsilon_n$ such that $C(\epsilon_n) \to \sum_{k \geq 0} G_k (\theta e^{i k a_1} + (1-\theta) e^{i k a_2})$. It gives the constraint: $\forall k \geq 0,
e^{i \log \epsilon_n} \to \left( \theta e^{i k a_1} + (1-\theta) e^{i k a_2} \right)^\frac1k$.
This is not possible unless $\epsilon_n$ depends on $k$.
Another consequence is that when ${\mathscr M}(\gamma)$ contains only Dirac measures as subsequential limits, then it cannot be convex by definition.
Otherwise, if it were convex, it would necessarily include convex combinations of Dirac measures that are not themselves Dirac.

\paragraph{{\bf Step 4.} $\mathcal{E} \subset \mathscr{M}$.}\label{EinM}
\mbox{}\\
Let $x \in \mathscr{S}_0$ we want to exhibit a curve denoted $\gamma_x$ such that $\gamma_x \in \Gamma$ and $\mathscr{M}(\gamma_x) = \{ \delta_x \}$. 
Consider the set of trajectories of the inviscid system 
$\left\{ \Phi_s^0(x_0), s \in [0,t] \right\}$. 
One can select a trajectory $g$ in this set such that $g(t)=x$.
(otherwise it would contradict $x \in \mathscr{S}_0$). This trajectory is $C^1$ since $\dot g = f_0(g)$ and $f_0$ is continuous, therefore:
$$
\exists g \in C^1([0,t];H)~\text{such that}~\dot g = f_0(g),~g(0)=x_0,g(t) = x.
$$

We first use Stone-Weierstrass to approximate $\dot g$ by $C^\infty$ functions as the regularization parameter $\tau$ goes to infinity. One obtains $g_\tau \in C^\infty([0,t];H)$ such that $g_\tau(0) = x_0$ and $\lim_{\tau \to \infty} ||g_\tau-g||_{C^1} = 0$. Moreover $g_\tau(t) \to x$ as $\tau \to \infty$. We then construct a vector field $F_\tau = 
\dot g_\tau$ in a small tubular neighborhood $\mathcal{T}_\tau$ of $g$ of radius size $1/\tau$. We use a smooth function cutoff $\theta_\tau  \in C^\infty(H;[0,1])$ 
such that $\theta_\tau = 0$ outside $\mathcal{T}_\tau$ and $\theta_\tau = 1$ inside say $\mathcal{T}_{2 \tau}$. We have built a smooth vector field defined as
$$
f(\cdot,\tau) := \theta_\tau(\cdot) F_\tau(\cdot) + (1-\theta_\tau(\cdot)) f_0(\cdot),
$$
namely $f(\cdot,\tau)$ is equal to $f_0$ outside the tube but connects smoothly to $F_\tau$ inside. 

We now show that this vector field is in $V_0$. Inside the tube, one has $||f(x,\tau)-f_0(x)|| = ||F_\tau(x)-f_0(x)|| \leq ||F_\tau(x) - f_0(g(x))|| + ||f_0(x)-f_0(g(x))||$. The first term goes to zero by the definition of $F_\tau$ since $\dot g_\tau \to \dot g = f_0(g)$ uniformly. The second term goes to zero uniformly by continuity of $f_0$ and the definition of the tubular neighborhood. Outside the tube $f(\cdot,\tau) = f_0(\cdot)$. In order to conclude, one has by construction $||g_\tau -g||_{C^0} \to 0$ and is the unique solution of $\dot x = f(x,\tau), x(0)=x_0$.

\paragraph{{\bf Step 5.} $\operatorname{co}(\mathcal{E}) \subset \mathscr{M}$.}
\mbox{}\\
We aim to show that any finite convex combination of Dirac measures belongs to $\mathscr{M}$. We establish the stronger result:
\be \label{co_in_M_singleton}
\forall~\mu \in \operatorname{co}(\mathcal{E}),~\exists \gamma \in \Gamma \text{ such that } \mathscr{M}(\gamma) = \{\mu\}.
\de 
It suffices to consider convex combinations of two distinct Dirac masses. Let $\theta \in (0,1)$ and $x, y \in \mathscr{S}_0$ with $x \neq y$. We claim that
$$
\exists \gamma_\theta \in \Gamma \text{ such that } \mathscr{M}(\gamma_\theta) = \left\{ \theta \delta_x + (1-\theta) \delta_y \right\}.
$$

Let $f_x, f_y \in \V$ be the regularizations constructed previously such that the inviscid limits of the associated trajectories converge to $x$ and $y$, respectively. Let us define $f_\theta$ by
\be \label{controlswitch}
\left\{
\begin{aligned}
f_\theta(\cdot, s) & := a_\theta(s) f_x(\cdot, s) + \left(1 - a_\theta(s)\right) f_y(\cdot, s), \\
a_\theta(s) &= \frac{1}{2} + \frac{1}{2} \tanh\left( s \left( \sin {2\pi}{s} + c_\theta \right) \right), \quad c_\theta = \sin\left( \pi \left( \theta - \frac{1}{2} \right) \right).
\end{aligned}
\right.
\de 

This construction ensures that
$$
\lim_{\tau  \to \infty} \frac{1}{\tau} \int_0^\tau a_\theta(s) \, ds = \theta,
$$
and that $f_\theta$ approximates $f_x$ (resp. $f_y$) when $a_\theta(s) \approx 1$ (resp. $0$). Indeed, define $\gamma_\theta(s) := 
\phi_t^{s,\theta}(x_0)$ the solution corresponding to the vector field $f_\theta$ above, we prove that
$$
\lim_{\tau \to \infty} \frac{1}{\tau} \int_0^\tau \delta_{\gamma_\theta(s)} \, ds
=
\lim_{\tau \to \infty} \frac{1}{\tau}   
   \int_0^\tau \delta_{\hat \gamma_\theta(s)} \, ds
$$
where
$$
\hat{\gamma}_\theta(s) := x \, \hat{a}_\theta(s) + y \left(1 - \hat{a}_\theta(s)\right), \quad
\hat{a}_\theta(s) := \frac{1}{2} + \frac{1}{2} \operatorname{sign} \left( \sin 2\pi s + c_\theta \right) \in \{0,1\}.
$$
It follows  from $\gamma_\theta(s) - \hat{\gamma}_\theta(s)\underset{s\to \infty}{=} o(1)$ and the dominated convergence theorem. Since $\hat{a}_\theta$ takes values in $\{0,1\}$, we obtain
$$
\lim_{\tau \to \infty} \frac{1}{\tau} \int_0^\tau \hat{a}_\theta(s) ds = \theta,
$$
and similarly for $1 - \hat{a}_\theta(s)$, yielding the result.
\\

It remains to verify that $f_\theta \in \V$. Indeed,
$$
\| f_\theta - f_0 \|_\infty \leq \| a_\theta (f_x - f_0) + (1 - a_\theta)(f_y - f_0) \|_\infty 
\leq \| f_x - f_0 \|_\infty + \| f_y - f_0 \|_\infty,
$$
which goes to zero by assumption. Since $f_x,f_y \in \operatorname{Lip}(H;H)$ one has $f_\theta$ Lipschitz as well.

This construction extends to any finite convex combination of Dirac measures. Therefore,
\begin{equation} \label{coEMM0}
\operatorname{co}(\mathcal{E}) \subseteq \mathscr{M} \subseteq \mathscr{M}_0 = \overline{\operatorname{co}}(\mathcal{E}).
\end{equation}

\paragraph{{\bf Step 6.} $\forall \mu \in {\mathscr M}_0$,
$\exists \gamma_\mu \in \Gamma$ such that ${\mathscr M}(\gamma_\mu) = \{\mu\}$ and therefore $\mathscr{M}$ is closed.}
\mbox{}\\
The strategy is to renormalize a sequence $\mu_n \to \mu$ and glue together the corresponding curves $\gamma_n$ in a controlled way. This is feasible because no bounded variation constraint is imposed—only the asymptotic behavior via Birkhoff averages is relevant.

Then, the conclusion will follow by taking the closure of 
Eq.~\eqref{coEMM0}), so that one has $\thickbar{\mathscr{M}}  = 
\overline{\operatorname{co}}(\mathcal{E}) $, namely
one can choose $\mu_n \in \operatorname{co}(\mathcal{E}) \subset {\mathscr M}$. Therefore from \eqref{co_in_M_singleton},  there exists a curve $\gamma_n$ such that
$$
\mathscr{M}(\gamma_n) = \{ \mu_n \}.
$$

\smallskip
\begin{figure}[htbp] \centering\includegraphics[scale=0.5]{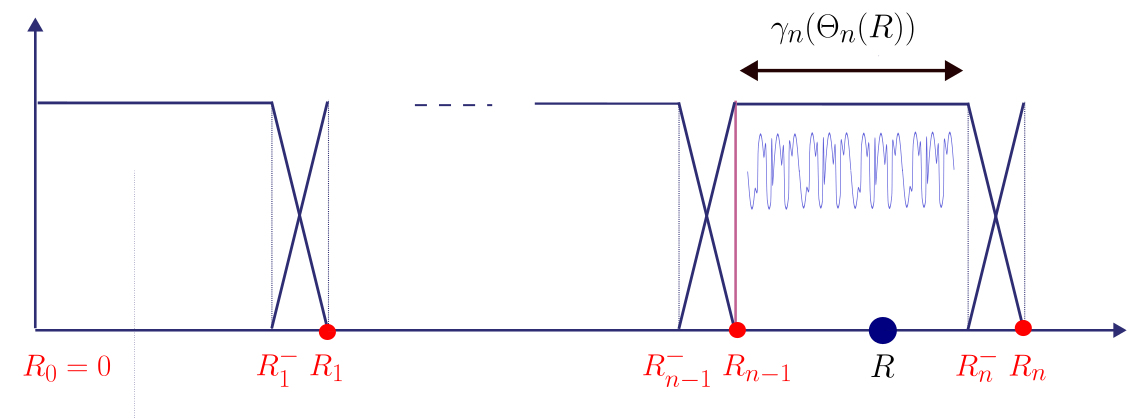}
\caption{A sketch view of the rescaling and concatenation of the sequence $\mu_n$.}
\label{gamu}
\end{figure}

Let $(\R_n^-)$, $(\R_n)$, and $(\lambda_n)$ be increasing sequences with $\R_n^-, \R_n, \lambda_n \to \infty$, $\R_0 = \R_0^- = 0$, and $\R_n^- < \R_n$, where $|\R_n - \R_n^-| \ll 1$. Let $S \geq 0$ and 
introduce a smooth partition of unity $(\chi_n)_{n \geq 1}$ with $0 \leq \chi_n \leq 1$, such that (see Fig. \ref{gamu}):
$$
\operatorname{supp} \chi_n = U_n := [\R_{n-1}^-, \R_n], \qquad
\chi_n(S) = 1 \text{ on } J_n := [\R_{n-1}, \R_n^-],
\qquad \sum_{n \geq 1} \chi_n(S) = 1.
$$

Define the vector field and corresponding regularization curve:
$$
f(\cdot, S) := \sum_{n \geq 1} \chi_n(S) f_n\left( \cdot, \Theta_n(S) \right),
\quad \Theta_n(S) := \lambda_n \frac{S - \R_{n-1}}{\R_n^- - \R_{n-1}},
\quad \gamma(S) := \phi_t^S(x_0).
$$
In particular, for $S \in J_n$, we recover:
$$
\gamma(S) = \gamma_n(\Theta_n(S)).
$$

Because $(\chi_n)$ is a partition of unity and each $f_n \in \V$, we have $f \in \V$ since $f_n \in \operatorname{Lip}$, and moreover
$$
\lim_{S \to \infty} \| f(\cdot, S) - f_0(\cdot) \|_\infty
\leq \lim_{S \to \infty} \sum_{n \geq 1} \chi_n(S)
\| f_n(\cdot, \Theta_n(S)) - f_0(\cdot) \|_\infty = 0.
$$
\\
Fix $\R \in (\R_{n-1}, \R_n^-]$. Define:
$$
A := \frac{1}{\R} \int_0^R F(\gamma(S)) \, dS = A_0 + A_1 + A_2,
$$
with
\begin{equation}
\begin{aligned}
A_0 &= \frac{1}{\R} \sum_{k = 1}^{n-1} \int_{J_k}
F(\gamma_k(\Theta_k(S))) \, dS 
= \sum_{k=1}^{n-1} \frac{\R_k^- - \R_{k-1}}{\R}  \frac{1}{\lambda_k}
\int_0^{\lambda_k} F(\gamma_k(S)) \, dS, \\
A_1 &= \frac{1}{\R} \sum_{k = 1}^{n-1} \int_{\R_k^-}^{\R_k}
F(\gamma(S)) \, dS, \\
A_2 &= \frac{1}{\R} \int_{\R_{n-1}}^{\R} 
F(\gamma_n(\Theta_n(S))) \, dS 
= \frac{\R_n^- - \R_{n-1}}{\lambda_n R}
\int_0^{\Theta_n(R)} F(\gamma_n(S)) \, dS \\
&= \frac{\R-\R_{n-1}}{\R} 
\frac{1}{\Theta_n(\R)}
\int_0^{\Theta_n(\R)} F(\gamma_n(S)) \, dS.
\end{aligned}
\end{equation}

Define the error term
$$
\epsilon_k(l) := \frac{1}{l} \int_0^l F(\gamma_k(S)) \, dS - \langle \mu, F \rangle,~\text{and}~\delta_k := \R_k-\R_k^- \ll 1.
$$
Since $\mu_k \rightharpoonup \mu$ and $\mathscr{M}(\gamma_k) = \{ \mu_k \}$, we have $\epsilon_k(l) \to 0$ as $l \to \infty$.
We obtain
\be  \label{A0A2}
A_0 + A_2 - \langle \mu, F \rangle = 
\frac{\R-\R_{n-1}}{\R} \epsilon_n(\Theta_n(\R)) 
-\frac{1}{\R} \langle \mu,F \rangle \sum_{k=1}^{n-1} \delta_k 
+ \frac{1}{\R}\sum_{k=1}^{n-1} (\R_k^- - \R_{k-1}) \epsilon_k(\lambda_k).
\de 
We end up with
\begin{equation}
\begin{aligned}
A - \langle \mu,F \rangle
= \frac{1}{\R} \Big\{ (\R_n^--\R_{n-1}) \theta_R \epsilon_n(\lambda_n \theta_R)
+ \sum_{k=1}^{n-1} (\R_k^- - \R_{k-1}) \epsilon_k(\lambda_k) \\
- \langle \mu,F \rangle \sum_{k=1}^{n-1}\delta_k
+ \sum_{k=1}^{n-1}
\int_0^{\delta_k} F(\gamma(S+\R_k^-)) \, dS \Big\}.
\end{aligned}
\end{equation}
where  $\theta_\R = \frac{\R-\R_{n-1}}{\R_n^--\R_{n-1}} \in (0,1]$.
Thus, we have the estimate:
\begin{equation} \label{upest}
\begin{aligned}
\left| \frac{1}{\R} \int_0^\R F(\gamma(S)) \, dS 
- \langle \mu, F \rangle \right|
\leq {} \\
\frac{1}{\R}
\left(C + |\langle \mu, F \rangle|\right)
\sum_{k = 1}^{n-1} \delta_k
+ \sum_{k = 1}^{n-1}
|\epsilon_k(\lambda_k)| (\R_k - \R_{k-1})
+ (\R_n - \R_{n-1})
\theta_R |\epsilon_n(\lambda_n \theta_R)|.
\end{aligned}
\end{equation}
The terms involving $\delta_k$ are easily controlled by choosing $\delta_k$ small enough such that $\sum_{k \geq 1} \delta_k \leq M$. Similarly, since we are free to choose $\lambda_k$ as large as we want, the terms
$\epsilon_k$ can be made small and we choose $\lambda_k$ such that $\sum_{k \geq 1} |\epsilon_k(\lambda_k)| (\R_k - \R_{k-1}) \leq M$. The third term is also bounded uniformly in $\R$. As a consequence
$$
\left| \frac{1}{\R} \int_0^R F(\gamma(S)) \, dS - \langle \mu, F \rangle \right| \leq \frac{C}{\R} \to 0.
$$
Finally, if $\R \in [\R_n^-, \R_n]$, then we set $A_2 = 0$ and the residual term is absorbed into $A_1$, the bracket in \eqref{A0A2} becomes
$$
-\frac{1}{\R} \sum_{k = 1}^{n-1} \delta_k + \frac{\R_n^-}{\R} - 1,
$$
which leads to a similar bound using $\R - \R_n^- \leq \delta_n$ with $\sum_{k=1}^n \delta_k$ instead. The result follows.
The direct consequence is that ${\mathscr M}$ is closed for the weak topology.
Theorem is proven by taking the closure of (\ref{coEMM0}) with $\thickbar{\mathscr M} = {\mathscr M}$:
\be 
{\mathscr M} = {\mathscr M}_0 = \overline{\rm co}({\cal E}).
\de
\end{proof}
{
\begin{remark}
All the objects above depend on $(t,x_0)$. More explicitly,
Theorem~\ref{M0M} should be read as
$\mathscr M(t,x_0)=\mathscr M_0(t,x_0)={\cal P}(\mathscr S_{t,x_0})$,
where $\mathscr S_{t,x_0}$ is the inviscid endpoint set at time $t$
starting from $x_0$. As long as $({\cal P}_0)$ admits a unique inviscid
endpoint at $(t,x_0)$, the set $\mathscr S_{t,x_0}$ is a singleton and
the statistical set reduces to a single Dirac mass. When nonuniqueness appears,
$\mathscr S_{t,x_0}$ opens into a compact connected continuum, and
${\cal P}(\mathscr S_{t,x_0})$ is the closed convex hull of the Dirac masses
supported on this continuum. Thus the time variable may act as a bifurcation
parameter: at times for which $\mathscr S_{t,x_0}$ is nontrivial, the
Dirac singleton regime is destroyed and genuine non-Dirac limiting laws become
attainable. In the Strong $\SP$ regime, the selected laws $\mu_{t,x_0}$
form the natural Young-measure-valued description of the inviscid statistics.
A structural condition for the loss of uniqueness behind this transition is
discussed in Section~\ref{CNforSP}.
\end{remark}}
\bigskip 
Two straightforward consequences of Theorem \ref{M0M} can be inferred.
\begin{corollary}\label{MND}
Assume (\ref{P_0}) admits nonunique solutions. Then for every non-Dirac probability measure $\mu \in {\mathscr M}_0$, 
there exists a regularization \eqref{P_eps} (equivalently, a family $({\cal P}_\tau)$) that exhibits strong $\SP$ (relative to $(t,x_0,{\rm Leb}_\epsilon)$) with selected measure $\mu$. Moreover, the set ${\mathscr M}_0\setminus{\cal E}$ is open and convex, and
\be 
\overline{{\mathscr M}_0\setminus{\cal E}}={\mathscr M}.
\de 
\end{corollary}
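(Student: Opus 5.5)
The plan is to read off everything from Theorem~\ref{M0M}, together with elementary facts about Dirac masses in $\mathscr M_0={\cal P}(\mathscr S_0)$. Nonuniqueness of \eqref{P_0} means that $\mathscr S_0$ contains at least two distinct points. This is the only place the hypothesis enters, and it is exactly what makes $\mathscr M_0\setminus{\cal E}$ nonempty.

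\textbf{First claim.} Take a non-Dirac $\mu\in\mathscr M_0$. Theorem~\ref{M0M}, eq.~\eqref{singleton}, provides $\gamma_\mu\in\Gamma$ with $\mathscr M(\gamma_\mu)=\{\mu\}$. By the definition of $\Gamma$ in \eqref{Gamma}, $\gamma_\mu$ is generated by some $f\in\V$. Since $\mathscr M(\gamma_\mu)$ is a singleton, $\gamma_{\mu\#}\mathrm{Leb}_\tau\rightharpoonup\mu$. The limit is non-Dirac, so this is Strong $\SP$ in the sense of Definition~\ref{SPdef5}. Passing back to $\epsilon=1/\tau$ gives the statement for $({\cal P}_\epsilon)$.

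\textbf{Convexity of $\mathscr M_0\setminus{\cal E}$.} Let $\mu_1,\mu_2$ be non-Dirac and $\theta\in(0,1)$, and suppose $\theta\mu_1+(1-\theta)\mu_2=\delta_x$. The extremality argument of Step~1 in the proof of Theorem~\ref{M0M} applies. Evaluating on $\{x\}$ gives $\mu_1(\{x\})=\mu_2(\{x\})=1$, hence $\mu_1=\mu_2=\delta_x$, which is a contradiction. The endpoint cases $\theta\in\{0,1\}$ are trivial.

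\textbf{Openness (relative to $\mathscr M_0$).} It suffices to show that ${\cal E}$ is closed in $\mathscr M_0$. The map $x\mapsto\delta_x$ from $\mathscr S_0$ to $\mathscr M_0$ is weakly continuous: if $x_n\to x$, then $F(x_n)\to F(x)$ for every $F\in C_b$. Since $\mathscr S_0$ is compact, its image ${\cal E}$ is compact, hence closed in the Hausdorff (metrizable) weak topology. Therefore $\mathscr M_0\setminus{\cal E}$ is relatively open.

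\textbf{Closure.} Since $\mathscr M_0$ is closed, $\overline{\mathscr M_0\setminus{\cal E}}\subset\mathscr M_0=\mathscr M$. For the reverse inclusion it is enough to approximate each $\delta_x$, $x\in\mathscr S_0$. Fix $y\in\mathscr S_0$ with $y\neq x$, which exists by nonuniqueness, and set $\nu_n:=(1-\tfrac1n)\delta_x+\tfrac1n\delta_y$. Each $\nu_n$ is non-Dirac and lies in $\mathscr M_0$. Moreover $|\langle\nu_n-\delta_x,F\rangle|\le 2\|F\|_\infty/n\to0$, so $\nu_n\rightharpoonup\delta_x$.

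Thus ${\cal E}\subset\overline{\mathscr M_0\setminus{\cal E}}$, and hence $\overline{\mathscr M_0\setminus{\cal E}}=\mathscr M_0=\mathscr M$ by Theorem~\ref{M0M}. There is no real obstacle here. The only point needing care is that "open" must be understood relative to $\mathscr M_0$, through closedness of ${\cal E}$, rather than in the whole space of measures.
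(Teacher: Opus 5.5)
Your proof is correct. For the first claim it is exactly the paper's argument: take $\gamma_\mu$ from \eqref{singleton} in Theorem~\ref{M0M} and read off Strong $\SP$ from Definition~\ref{SPdef5}. The difference is that the paper's proof stops there and gives no argument for the ``moreover'' clauses. You supply them correctly:
\begin{itemize}
\item convexity, because Dirac masses are extreme points (evaluate a putative decomposition of $\delta_x$ on $\{x\}$);
\item relative openness, because ${\cal E}$ is the continuous image of the compact set ${\mathscr S}_0$ under $x\mapsto\delta_x$, hence closed;
\item density of ${\mathscr M}_0\setminus{\cal E}$, via the mixtures $(1-\frac1n)\delta_x+\frac1n\delta_y$.
\end{itemize}
Your insistence that openness be read relative to ${\mathscr M}_0$ is a genuine correction of the statement, not a technicality. ${\mathscr M}_0$ has empty interior in ${\cal P}(H)$: for $z\notin{\mathscr S}_0$ one has $(1-\frac1n)\mu+\frac1n\delta_z\rightharpoonup\mu$, and these measures lie outside ${\mathscr M}_0$.

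Two steps you pass over quickly are fine but deserve a word.

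First, identifying ``nonunique solutions'' with ${\mathscr S}_0$ containing two points is the endpoint-at-time-$t$ reading. The paper adopts it too, and it is forced. Under trajectory-level nonuniqueness with a singleton ${\mathscr S}_0$, the closure identity fails; an example is $t<t^\star$ in the $\sqrt{|x|}$ model of Section~\ref{exAmbrosio}.

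Second, ``passing back to $\epsilon=1/\tau$'' is not a mere relabeling. With $R=1/\epsilon$, the average $\frac1\epsilon\int_0^\epsilon F(\gamma_\mu(1/s))\,ds$ becomes the weighted tail average $J(R)=R\int_R^\infty F(\gamma_\mu(s))\,s^{-2}\,ds$. It is not the Ces\`aro mean $B(R)=\frac1R\int_0^R F(\gamma_\mu(s))\,ds$ that Theorem~\ref{M0M} controls. The needed implication, from $B(R)\to\langle\mu,F\rangle$ to $J(R)\to\langle\mu,F\rangle$, follows from the identity $J(R)=-B(R)+2R\int_R^\infty B(s)\,s^{-2}\,ds$. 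This is the case $g(\tau)=1/\tau$ of Appendix~\ref{Alg}.
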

\begin{proof}
Note that (\ref{P_0}) has a unique solution if and only if ${\mathscr M}_0 \setminus {\cal E} = \emptyset$, we therefore assume that nonunique solutions exist.
We then take some arbitrary non-Dirac measure $\mu$ in ${\mathscr M}_0 \setminus {\cal E}$ so that a corresponding $\gamma_\mu$ exists such that ${\mathscr M}(\gamma_\mu) = \{ \mu \}$ by Theorem \ref{M0M}. Note that
$\mu$ by definition cannot belong to ${\cal E}$.
By Definition \ref{SPdef5}, since ${\mathscr M}(\gamma_\mu)=\{\mu\}$ and $\mu$ is non-Dirac, the corresponding regularization exhibits strong $\SP$ with selected measure $\mu$. 
\end{proof}

Of independent interest, the following result highlights the inherent flexibility of $\SP$, showing that a single regularization can indeed realize the full spectrum of statistical behaviors.
\begin{proposition}\label{CHOCBAR2}
There exists a regularization $\gamma_{\rm un} \in \Gamma$ such that 
\be 
{\mathscr M}(\gamma_{\rm un}) = {\mathscr M}_0.
\de 
\end{proposition}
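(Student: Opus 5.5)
The plan is to reuse the gluing construction of Step~6 in the proof of Theorem~\ref{M0M}, but with slow blocks so that the Cesàro averages sweep through a dense family of measures. By Theorem~\ref{M0M}, $\mathscr M_0$ is a compact metrizable space in the weak topology (metrized, say, by the bounded-Lipschitz distance $d$). I therefore fix a countable dense sequence $(\nu_j)_{j\ge1}\subset \operatorname{co}(\mathcal E)$. From it I build an enumeration $(\mu_n)$ in which every $\nu_j$ appears infinitely often; for instance $\nu_1,\nu_1,\nu_2,\nu_1,\nu_2,\nu_3,\dots$. By \eqref{co_in_M_singleton}, each $\mu_n$ has a curve $\gamma_n\in\Gamma$, generated by some $f_n\in\V$, with $\mathscr M(\gamma_n)=\{\mu_n\}$.

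Next I concatenate these curves exactly as in Step~6. The partition of unity $\chi_n$ and the rescalings $\Theta_n$ are unchanged, and I set $f(\cdot,S)=\sum_n\chi_n(S)f_n(\cdot,\Theta_n(S))$. Membership $f\in\V$ follows by the same argument: for each $S$ only finitely many $\chi_n$ are nonzero, each $f_n$ is Lipschitz, and $\|f(\cdot,S)-f_0\|_\infty\to0$. The difference from Step~6 is the choice of block lengths. I now let them grow geometrically fast, requiring $(\R_n^- -\R_{n-1})/\R_n^-\to1$, for instance $\R_n = n\,\R_{n-1}+1$. I also choose $\lambda_n$ large enough that $\bigl|\frac1{\lambda}\int_0^{\lambda}F(\gamma_n)\,ds-\langle\mu_n,F\rangle\bigr|\le 1/n$ for all $\lambda\ge\lambda_n/n$ and for $F$ ranging over a fixed countable convergence-determining family $(F_m)_{m\le n}$. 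The transition intervals satisfy $\sum_n\delta_n<\infty$.

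Two inclusions then remain. For $\mathscr M_0\subset\mathscr M(\gamma_{\rm un})$, I evaluate the average at $\R=\R_n^-$. The decomposition $A_0+A_1+A_2$ of Step~6 gives
\[
\frac1{\R_n^-}\int_0^{\R_n^-}F(\gamma)\,dS=\frac{\R_n^- -\R_{n-1}}{\R_n^-}\,\frac1{\lambda_n}\int_0^{\lambda_n}F(\gamma_n)\,dS+O\!\left(\frac{\R_{n-1}+\sum\delta_k}{\R_n^-}\right).
\]
This lies within $o(1)$ of $\langle\mu_n,F\rangle$. Along the indices $n$ with $\mu_n=\nu_j$, the averages therefore converge to $\nu_j$, so $\nu_j\in\mathscr M(\gamma_{\rm un})$. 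Since $\mathscr M(\gamma_{\rm un})$ is closed (Step~3) and $(\nu_j)$ is dense, we get $\mathscr M_0\subset\mathscr M(\gamma_{\rm un})$. The reverse inclusion $\mathscr M(\gamma_{\rm un})\subset\mathscr M_0$ is Step~2.

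I expect no real obstruction beyond bookkeeping the error terms. The delicate point is the uniform control of $A_2$ for intermediate $\R$, but that is only needed if one wants a finer description of the limit set, and the argument above does not require it. The only genuine design requirement is that the block lengths dominate the whole past, so that a single block determines the average at its right endpoint. This is the opposite of Step~6, where past blocks had to carry weight.
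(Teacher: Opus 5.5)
Your proposal follows the paper's outline: concatenate regularizations that realize a dense sequence of targets, on blocks long enough that the Ces\`aro mean at the end of block $n$ is essentially $\mu_n$. It differs in two respects, both to your advantage.

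First, the paper's proof samples only two-point laws $\theta\delta_x+(1-\theta)\delta_y$, along a sequentially dense sequence in $\mathscr S_0\times\mathscr S_0\times[0,1]$. It then asserts that the accumulation set is $\overline{\operatorname{co}}(\mathcal E)$. But the set of two-point laws is already weakly closed, being the continuous image of a compact set. It is a proper subset of $\mathscr M_0$ as soon as $\mathscr S_0$ contains three points. So that last step needs exactly your ingredients: a dense sequence in all of $\operatorname{co}(\mathcal E)$ with each term repeated infinitely often, closedness of $\mathscr M(\gamma)$ from Step~3, and the free upper bound from Step~2. You are right that no control at intermediate $\R$ is then needed.

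Second, the paper glues the curves themselves, $\sum_n\chi_n\gamma_{x_n,y_n,\theta_n}$, in their native parameter. In the transition zones this produces convex combinations of two endpoints. These can stay away from $\mathscr S_0$ when $\mathscr S_0$ is not convex, so membership in $\Gamma$ is unclear. Gluing the vector fields, as you do, is the right way to stay in $\Gamma$. The paper's use of the native parameter does, however, avoid the defect described next.

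One step of yours fails as written, and it is inherited from Step~6: $f\in\V$ does not follow \emph{by the same argument}.
\begin{itemize}
\item The rescaling $\Theta_n(S)=\lambda_n(S-\R_{n-1})/(\R_n^{-}-\R_{n-1})$ restarts at $0$ in every block. Hence $f(\cdot,\R_{n-1})=f_n(\cdot,0)$, the most strongly regularized member of the $n$-th family.
\item Consequently $\limsup_{S\to\infty}\|f(\cdot,S)-f_0\|_\infty\ge\limsup_n\|f_n(\cdot,0)-f_0\|_\infty$, which has no reason to vanish.
\item Moreover $\Theta_{n+1}(S)<0$ on $[\R_n^{-},\R_n)$, where $\chi_{n+1}$ may be positive; this lies outside the domain of $f_{n+1}$.
\item Uniform boundedness of $f$ is not guaranteed either.
\end{itemize}

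The repair is cheap. Before gluing, replace $f_n(\cdot,\tau)$ by $f_n(\cdot,\sigma_n+\max(\tau,0))$, with $\sigma_n$ chosen so that $\sup_{\tau\ge\sigma_n}\|f_n(\cdot,\tau)-f_0\|_\infty\le 1/n$. Write $A_n(L):=\frac1L\int_0^L F(\gamma_n(s))\,ds$. Then
\[
\frac1L\int_0^L F(\gamma_n(s+\sigma_n))\,ds=\frac{L+\sigma_n}{L}A_n(L+\sigma_n)-\frac{\sigma_n}{L}A_n(\sigma_n)\to\langle\mu_n,F\rangle ,
\]
so the shifted curve still satisfies $\mathscr M=\{\mu_n\}$. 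Choose $\lambda_n$ afterwards, for the shifted curve. With this change, $\|f(\cdot,S)-f_0\|_\infty\le 1/(n-1)$ on $U_n$ and $f$ is bounded. The rest of your argument then goes through unchanged: evaluation at $\R_n^{-}$ with error $O(\R_{n-1}/\R_n^{-})$, density, and closedness.
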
\noindent
\textbf{Proof}. See Appendix~\ref{Chocbar2proof}.

\subsection{Universality classes}\label{univc}
Our set-theoretic description gives a natural and unambiguous way to define \emph{universality classes}. They gather all regularizations with the same statistical behavior in the inviscid limit. In Part \ref{part_2}, we will show that these are in fact genuine dynamical-system basins of attraction. It is therefore only in Part \ref{part_2} that one can develop some intuition for how large such classes can be; see Sections \ref{par1_4_3_2} and \ref{RGreg}.

\begin{definition}[Universality classes]\label{univdef}
Define the equivalence relation $\sim$ on $\Gamma$ by
$$
\gamma_1 \sim \gamma_2
\quad \Longleftrightarrow \quad
\mathscr{M}(\gamma_1)=\mathscr{M}(\gamma_2).
$$
For each $\gamma\in\Gamma$, the equivalence class $[\gamma]=
\{ \widetilde \gamma \in \Gamma: \widetilde\gamma \sim \gamma \}$
is called the universality class of $\gamma$.
The family of universality classes is the quotient set
$$
\Gamma/\!\sim \;=\; \{[\gamma]:\gamma\in\Gamma\},
$$
which yields the partition
$$
\Gamma=\bigsqcup_{[\gamma]\in \Gamma/\!\sim} [\gamma].
$$
\end{definition}
\begin{remark}
It is natural to expect that universality classes can be described through suitable compact connected subsets of $\mathscr{M}$. Indeed, for every $\gamma\in\Gamma$, the set
$
\mathscr{M}(\gamma)\subset \mathscr{M}
$
is compact and connected, so each equivalence class $[\gamma]$ is uniquely associated with such a subset. Thus the quotient $\Gamma/\!\sim$ may be regarded as indexed by the compact connected subsets of $\mathscr{M}$ that are realized as $\mathscr{M}(\gamma)$ for some $\gamma\in\Gamma$.
Theorem \ref{M0M} is a particular instance of this for singleton sets $\{\mu \}$. 
\end{remark}
\bigskip 
\begin{remark}
A useful consequence of Theorem~\ref{M0M} concerns the relation with
$\SP_{\rm turb}$. Let $\nu_\epsilon$ be the probability measure defined in
\eqref{spspsp}, and assume that $\nu_\epsilon\rightharpoonup\nu$. Then
$\nu\in\mathscr M_0$. Since Theorem~\ref{M0M} proves the identity
$\mathscr M=\mathscr M_0$, this limiting law is not only attainable by the
general turbulence-inspired construction: it can also be realized by a
one-parameter regularization curve. More precisely, there exists
$\gamma_{\rm turb}\in\Gamma$ such that
$$
(\gamma_{\rm turb})_\#{\rm Leb}_\epsilon \rightharpoonup \nu .
$$
This is a nontrivial consequence of the attainability theorem. Indeed,
$\SP_{\rm turb}$ is formulated through random perturbations of the initial
condition, whereas $\Gamma$ contains only one-parameter regularization curves
sampled by the ambient measure ${\rm Leb}_\epsilon$. The theorem therefore
shows that, at the level of selected limiting laws, every convergent
$\SP_{\rm turb}$ statistics has a one-parameter representative, and hence a
representative universality class in the sense of Definition~\ref{univdef}. If
$\nu$ is non-Dirac, this representative realizes Strong $\SP$.
\end{remark}

\section{A necessary condition for observing $\SP$: singularities in the inviscid system}\label{CNforSP}

Corollary~\ref{MND} shows that, once the inviscid problem \eqref{P_0} has nonunique solutions, one can construct a regularization \eqref{P_eps} exhibiting strong $\SP$. The next question is to understand the origin of this nonuniqueness. In this section we address it from a geometric, dynamical-systems viewpoint. The guiding picture is that uniqueness can be lost when the inviscid flow reaches, in finite time, a region where the vector field is too singular to determine a unique trajectory. We make this picture precise by deriving a necessary condition in terms of singular sets detected by Dini-type directional growth.

{The definition of such singular sets requires some care. A typical mechanism is an isolated non-Lipschitz singularity $x^\star\in\mathbb R^n$ with $f(x^\star)=0$, although this condition is not sufficient by itself; an example of this type, inspired by \cite{Drivas21,Drivas24}, is discussed in Appendix~\ref{Ex3}.} In higher dimension, however, loss of uniqueness need not be tied to an equilibrium point. There are also situations where $f(x^\star)\neq0$, but trajectories passing through $x^\star$ still fail to be uniquely determined. This second mechanism is also illustrated in Appendix~\ref{Ex3}. In both cases, the examples are supplemented by regularizations exhibiting strong $\SP$.

In what follows, we assume that $f$ is at least continuous, so that existence of solutions is guaranteed by Peano's theorem.
Numerous theorems on uniqueness are available for general nonautonomous systems of the form $\dot x = f(x,t),~x(t_0) = x_0$ (see, in chronological order \cite{Walter1970,Agarwal_Lak,Hartman2002,Bahouri2011}). These results 
typically take the form:
{\it if condition $(A)$ holds then the solution to the Cauchy problem is unique}.
Consequently by contraposition, there are many possible necessary conditions of the form
$\neg{(A)}$ for nonuniqueness. In order to obtain some sharper condition, one must look at 
a "disjunction" of different uniqueness theorems.

The one-dimensional autonomous case is much better understood.
In that setting, nonuniqueness can only occur through points in the zero set $\{f=0\}$.
In the situation relevant here, the singular set reduces locally to an isolated critical point $x^\star$.
Moreover, near $x^\star$, the vector field must satisfy a sufficiently strong one-sided growth condition on at least one side; in particular, local Lipschitz continuity at $x^\star$ rules out nonuniqueness.
The sharp condition is the finiteness of at least one of the integrals
$$
\int_{x^\star}^{x^\star+\eta}\frac{ds}{f(s)}
\qquad\text{or}\qquad
\int_{x^\star-\eta}^{x^\star}\frac{ds}{|f(s)|},
$$
with the appropriate sign assumptions on $f$.
As for the initial condition, the one-dimensional setting is highly constrained:
either $x(0)=x^\star$, or the trajectory must reach $x^\star$ in finite time from one side.
Typical examples yielding nonunique solutions are
$\dot x=\operatorname{sign}(x)|x|^\alpha$ and $\dot x=\pm |x|^\alpha$ with $\alpha\in(0,1)$.
By contrast, systems such as $\dot x=-\operatorname{sign}(x)|x|^\alpha$ or $\dot x=|x|^\alpha+c, c \neq 0$ do not exhibit nonuniqueness.


The situation in higher dimension is much less favorable, in particular 
$f(x^\star) = 0$ is no longer necessary which is the main reason this problem becomes nontrivial. 
The aim here is thus to characterize regions where 1) $f$ has some expanding behavior
2) $f$ is not Lipschitz. 

We consider here a similar idea to that of Dini derivatives (see Appendix \ref{DiniApp}). 
They correspond to a generalized notion of derivatives in contexts where $f$ is not even continuous. Dini derivatives share many desirable properties with classical derivatives.
For instance, being Lipschitz continuous is equivalent to having
finite Dini derivatives. They 
therefore provide a natural and practical way to express non-Lipschitz behavior.
We slightly extend the notion to the Osgood property.
To simplify the presentation, we will consider
$$
H = \mathbb{R}^n.
$$
Let us introduce the following definitions:
\begin{definition}\label{DiniME}
Let $f: \mathbb{R}^n \to \mathbb{R}^n$, 
Let $v \in \mathbb{S}^{n-1}$, then
\be 
\Lambda_\Omega^+(x,v) := \limsup_{t \to 0^+} 
\left \langle \frac{f(x+t v)-f(x)}{\Omega(t)},v
\right\rangle .
\de 
where $\Omega$ is a modulus of continuity: $\Omega: \mathbb{R}^+ \to \mathbb{R}^+, \Omega(0) = 0$ and $\Int_{0^+} \frac{dz}{\Omega(z)} = \infty$.

We also introduce a notion of stable finite-time set:
Let the flow map $\Phi: (t,x_0) \mapsto \Phi_t(x_0)$, where
$\Phi_t(x_0)$ is the solution at time $t$ of 
the Cauchy problem $\dot x = f(x), x(0)=x_0$, and consider some arbitrary set $\Gamma \subset \mathbb{R}^n$
\begin{equation}\label{Wm_G}
{\cal W}^-(\Gamma) \equiv  \bigcup_{\Phi} \left\{ x \in \mathbb{R}^n
 \left| \right. \exists t^\star < \infty,~t^\star \geq  0, \lim_{t \to +t^\star} {\rm dist}(\Phi_t(x),\Gamma) = 0 \right\} 
\end{equation}
\end{definition}
Note that $t^\star = t^\star(x,\Phi)$, in particular for $x \in \Gamma$, $t^\star(x)=0$ and thus $\Gamma \subset {\cal W}^-(\Gamma)$.
From this definition, one can derive a useful necessary condition for breaking uniqueness
and thereby guaranteeing $\SP$:
\begin{theorem}\label{CN}
Assume that (\ref{P_0}) has nonunique solutions, then one 
can find some $\Omega$ and a nonempty set $\Gamma^\star \subset \mathbb{R}^n$ such that
$x_0 \in {\cal W}^-(\Gamma^\star)$ 
where
$$
{\displaystyle \Gamma^\star = \big\{ x \in \mathbb{R}^n ~|~ \exists v, ||v||=1, 
 ~\Lambda^+_\Omega(x,v)= +\infty \big\}.}
$$
\end{theorem}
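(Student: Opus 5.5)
Nonuniqueness will be tied to a failure of a one-sided (monotonicity/Osgood-type) uniqueness criterion, and then we locate where the failure happens along the flow. We argue by contraposition. Suppose that for every modulus $\Omega$ with $\int_{0^+}dz/\Omega(z)=\infty$, the set $\Gamma^\star_\Omega$ of points having a direction $v$ with $\Lambda^+_\Omega(x,v)=+\infty$ is never reached in finite time from $x_0$. In particular this holds for a fixed modulus. Since $x_0\in\Gamma^\star\subset{\cal W}^-(\Gamma^\star)$ whenever $\Gamma^\star$ contains $x_0$, we may fix two distinct solutions $x_1\neq x_2$ of \eqref{P_0} with $x_1(0)=x_2(0)=x_0$. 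Let
$$
t_0:=\sup\{s: x_1=x_2 \text{ on } [0,s]\}<T,
$$
and set $p:=x_1(t_0)=x_2(t_0)$. Both $x_i$ pass through $p$ at the finite time $t_0$, so $x_0\in{\cal W}^-(\{p\})$. It therefore suffices to exhibit one $\Omega$ for which $p\in\Gamma^\star$, i.e.\ some direction $v$ with $\Lambda^+_\Omega(p,v)=+\infty$.

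The key step is a one-sided Osgood estimate near $p$. Put $e(s):=x_1(t_0+s)-x_2(t_0+s)$. Then $e(0)=0$ and $e\not\equiv0$ on every $[0,\eta]$. Write $r(s)=\|e(s)\|$. Wherever $r>0$,
$$
\dot r=\Big\langle f_0(x_1)-f_0(x_2),\tfrac{e}{r}\Big\rangle .
$$
Suppose, for contradiction, that for every unit $v$ one has $\Lambda^+_\Omega(p,v)<\infty$, with a bound locally uniform in $v$ (a compactness argument on $\mathbb S^{n-1}$, which is where care is needed). We would like to deduce $\dot r\le C\,\Omega(r)$ near $s=0$, and then Osgood's lemma, using $\int_{0^+}dz/\Omega=\infty$ and $r(0)=0$, forces $r\equiv0$, a contradiction.

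The obstacle is that the Dini quantity is anchored at $p$, whereas $\dot r$ involves the difference of $f_0$ at the two moving points $x_1,x_2$, neither of which equals $p$ for $s>0$. I would handle this in three moves.

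First, choose $\Omega$ adapted to the pair. Take $\Omega(z):=\sup\{\langle f_0(y)-f_0(y'),(y-y')/\|y-y'\|\rangle^+ : \|y-y'\|\le z,\ y,y'\in B(p,\eta)\}$, regularized upward to be an Osgood modulus if it is not already one. The freedom in the statement ("one can find some $\Omega$") permits this.

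Second, observe that if this $\Omega$ were Osgood, uniqueness would follow. Hence the one-sided modulus of $f_0$ at $p$ must be non-Osgood.

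Third, transfer the blow-up to a point and direction. Comparing against any Osgood modulus $\Omega'$, there are pairs $y_k,y_k'\to p$ along which the ratio $\langle f_0(y_k)-f_0(y_k'),v_k\rangle/\Omega'(\|y_k-y_k'\|)\to\infty$. After extracting $v_k\to v$, we want to localize this at a single base point. The natural try is to use $p$ itself, splitting the difference through $p$ via the triangle inequality to get $\Lambda^+_{\Omega'}(p,v)=+\infty$. If this fails because the pairs are only asymptotically anchored, the fallback is to enlarge $\Gamma^\star$ to its closure and use the distance formulation in \eqref{Wm_G}, which only requires $\mathrm{dist}(\Phi_t(x_0),\Gamma^\star)\to0$. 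This is exactly why the definition of ${\cal W}^-$ uses distances rather than membership.

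The step I expect to be hardest is this localization: passing from failure of a two-point one-sided Osgood condition to an infinite one-point directional Dini growth at (or accumulating at) the branching point $p$.
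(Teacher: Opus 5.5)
\textbf{There is a genuine gap: the localization step, which you flagged as the hardest, cannot be closed in the form you aim for.} Your reduction ``it suffices to exhibit one $\Omega$ with $p\in\Gamma^\star$'' asks for something that can fail for every $\Omega$.

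Take $n=2$ and $f_0=(1,g)$, with $g(x_1,x_2)=2x_1\psi(x_2/x_1^2)$ for $x_1>0$ and $g=0$ for $x_1\le 0$. Here $\psi$ is smooth with compact support in $(0,2)$ and $\psi(1)=1$; cut off in $x_1$ if you want boundedness.
\begin{itemize}
\item This $f_0$ is continuous and locally Lipschitz away from the origin.
\item From $x_0=0$, both $(s,0)$ and $(s,s^2)$ are solutions, so $p=0$.
\item Yet $\langle f_0(sv)-f_0(0),v\rangle=v_2\,g(sv)=0$ for every unit $v$ and all small $s>0$. Indeed $g(sv)=0$ if $v_1\le 0$, and for $v_1>0$ the argument $v_2/(sv_1^2)$ is eventually outside $(0,2)$. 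Hence $\Lambda^+_\Omega(0,v)=0$ for every $\Omega$.
\item For $\Omega(z)=z$, or any $\Omega\ge cz$, the set $\Gamma^\star$ is empty altogether.
\end{itemize}

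This is why your moves stall. The triangle inequality through $p$ compares a quotient at scale $\|y_k-y_k'\|$ with increments at the unrelated scale and direction of $y_k-p$. ``Regularizing $\Omega$ upward'' goes the wrong way: $\Omega_1\le\Omega_2$ implies $\Gamma^\star_{\Omega_2}\subset\Gamma^\star_{\Omega_1}$, and a non-Osgood modulus cannot be made Osgood by enlarging it.

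\textbf{The missing idea is to push $\Omega$ down and combine this with your distance fallback.} Only $\int_{0^+}dz/\Omega=\infty$ is imposed, so any $\Omega(z)=o(z)$ (e.g.\ $z^2$) is admissible. Then $z\in\Gamma^\star$ as soon as $\theta\mapsto\langle f_0(z+\theta v),v\rangle$ has positive upper right Dini derivative at $\theta=0$ for some unit $v$. The argument then runs as follows.
\begin{itemize}
\item Let $r(s)=\|x_1(s)-x_2(s)\|$. Since $r^2$ is $C^1$, vanishes at $t_0$, and is not identically zero on any $[t_0,t_0+\eta]$, the mean value theorem gives $s_k\downarrow t_0$ with $r(s_k)>0$ and $\langle f_0(x_1(s_k))-f_0(x_2(s_k)),v_k\rangle>0$, where $v_k=(x_1(s_k)-x_2(s_k))/r(s_k)$.
\item Set $h_k(\theta)=\langle f_0(x_2(s_k)+\theta v_k),v_k\rangle$. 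It is continuous on $[0,r(s_k)]$ with $h_k(r(s_k))>h_k(0)$.
\item By the Dini monotonicity lemma (a continuous function with $D^+h\le0$ everywhere is nonincreasing), some $\theta_k\in[0,r(s_k))$ has $D^+h_k(\theta_k)>0$.
\item Hence $z_k=x_2(s_k)+\theta_kv_k\in\Gamma^\star$ and $z_k\to p$. Then ${\rm dist}(x_1(t),\Gamma^\star)\to0$ as $t\to t_0$, i.e.\ $x_0\in{\cal W}^-(\Gamma^\star)$.
\end{itemize}

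\textbf{Comparison with the paper.} The paper argues by contraposition from a bound $\Lambda^+_\Omega\le C$ that it takes to be uniform on $M\times\mathbb S^{n-1}$ and in the range of $t$. Setting $t=\|y-x\|$ gives $\langle f(y)-f(x),y-x\rangle\le C\|y-x\|\,\Omega(\|y-x\|)$, and the one-sided Osgood lemma concludes; this is your second move. That uniformity does not follow from pointwise finiteness: in the example above, $\Lambda^+_\Omega$ with $\Omega(z)=z$ is finite everywhere. So the paper bypasses, rather than settles, the step you rightly identified as the crux.
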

\noindent {\bf Proof}: see Appendix \ref{DiniApp}.\\
The idea is to replace the classical notion of Jacobian of $f$
by some weaker notion involving Dini (directional) derivatives 
(say with $\Omega(z)=z$). The unit vectors $v$ indeed play the role of an eigenvector
and $\Gamma^\star$ simply detects regions in phase space where the eigenvalues blow up to $+\infty$. 
In addition, one can also characterize for free the set of initial conditions giving
spontaneous stochasticity by a straightforward generalization of a stable set.
Note that such a result is not sharp and cannot be sufficient. However, there is still room to improve this condition. 
\bigskip 

In dimension one, Theorem \ref{CN} should be read as a general necessary condition rather than as a sharp scalar criterion. For an autonomous scalar equation $\dot x=f(x)$, loss of uniqueness is tied to the presence of a point $x^\star$ such that $f(x^\star)=0$ and where the vector field fails an Osgood-type condition. Thus, in dimension one, the singular point responsible for nonuniqueness is also a critical point of the vector field. Theorem \ref{CN} detects the non-Osgood defect through the set $\Gamma^\star$, but it does not separately encode the scalar condition $f(x^\star)=0$. This is consistent with its purpose: the theorem is designed to capture more general mechanisms, in particular higher-dimensional situations where nonuniqueness may arise from directional non-Osgood behavior and need not reduce to the presence of an equilibrium point.
\bigskip

We now state an informal fixed-time consequence of Theorem \ref{CN}. Fix $t>0$ and let $\Gamma^\star$ be the singular set defined in Theorem \ref{CN}. In analogy with \eqref{Wm_G}, we define the time-$t$ stable set
$$
{\cal W}^-_t(\Gamma^\star)
\equiv
\bigcup_{\Phi}
\left\{
x\in\mathbb{R}^d
\left|
\right.
\exists t^\star\in[0,t),~
\lim_{s\to t^\star}
{\rm dist}(\Phi_s(x),\Gamma^\star)=0
\right\}.
$$

Let ${\cal D}_t$ denote the set of initial data for which the inviscid flow map $\Phi_t$ is well-defined as a single-valued map. Since Theorem \ref{CN} gives a necessary condition for loss of uniqueness, one obtains the informal inclusion
$$
\mathbb{R}^d\setminus{\cal D}_t
\subset
{\cal W}^-_t(\Gamma^\star).
$$
Equivalently,
\be \label{domainflowmapt}
\mathbb{R}^d\setminus{\cal W}^-_t(\Gamma^\star)
\subset
{\cal D}_t.
\de 
Thus an initial condition outside ${\cal W}^-_t(\Gamma^\star)$ cannot lose uniqueness before time $t$. On the other hand, belonging to ${\cal W}^-_t(\Gamma^\star)$ is only a necessary condition for loss of uniqueness: such an initial condition may still belong to ${\cal D}_t$.

\section{A simple illustrative example with explicit expressions}\label{exAmbrosio}
The aim of this section is to illustrate the abstract results of Part~I
through a single one-dimensional example. Further examples are collected in
Appendix~\ref{appendix_examples}. The fact that the main phenomena already
appear in systems with only one degree of freedom is itself noteworthy.
\subsection{Description of the system}
The example is inspired by \cite{ambrosio2004transport,Flandoli2009}. The inviscid system \eqref{P_0} is
$$
\dot x=\sqrt{|x|},
\qquad
x(0)=x_0<0.
$$
It has infinitely many solutions. The system is well-posed until time
$t^\star=2\sqrt{-x_0}$, we call it the \emph{pre-blowup branch}.
It then reaches the H\"older singularity at $x=0$. After that, the solution may remain at $0$ for an arbitrary waiting time $T\ge0$ before taking off again according to
$$
x(t)=\frac14\,(t-T-t^\star)^2,
\quad
t\ge t^\star+T, \quad t^\star=t^\star(x_0) = 2 \sqrt{-x_0}.
$$
We exploit the fact that $T$ is a free variable and consider the following regularizations \eqref{P_eps} adapted from \cite{ambrosio2004transport}.
\begin{equation}\label{regamb}
\tag{\ensuremath{{\rm Amb}_{T_\epsilon}}}
f(x,\epsilon)
=
\left\{
\begin{array}{ll}
\sqrt{|x|},
&
x\in(-\infty,-\epsilon^2],
\\[0.4em]
\epsilon,
&
x\in[-\epsilon^2,\epsilon T_\epsilon-\epsilon^2],
\\[0.4em]
\sqrt{x-\epsilon T_\epsilon+2\epsilon^2},
&
x\in(\epsilon T_\epsilon-\epsilon^2,+\infty).
\end{array}
\right.
\end{equation}
In order to have $f \in \V$; see \eqref{H0}, we must impose
$$
\epsilon T_\epsilon = o(1).
$$

This expression has the effect to regularize the singularity smoothly. 
The solution can be found explicitly, we give its full expression in Appendix \ref{flowamb}. Indeed, one does not need to perform numerical simulations for this system.

We first show how the regularization looks like visually. We also show how the
flow map behaves: in particular, one observes that the flow map of
the inviscid system is undefined on the interval $x \in (-\frac{t^2}{4},0)$. Regularizations induce well-defined flow maps in particular selecting compatible dynamics in the inviscid limit. We show both in Fig. \ref{Vec_Flow}.

We first briefly illustrate Theorem \ref{CN} and \eqref{domainflowmapt} on the inviscid system $\dot x=\sqrt{|x|}$. In this case the singular set is $\Gamma^\star=\{0\}$, and an initial point $x_0<0$ reaches $0$ at time $t^\star(x_0)=2\sqrt{-x_0}$. Hence, for fixed $t>0$, one has
$$
\Gamma^\star = \{0\},\qquad
{\cal W}^-_t = \left( -\frac{t^2}{4},0 \right],
\qquad
{\cal D}_t = \mathbb{R} \setminus {\cal W}^-_t.
$$
The endpoint $x_0=-t^2/4$ reaches the singular point exactly at time $t$, and therefore still belongs to the single-valued domain of the inviscid flow map ${\cal D}_t$.

\begin{figure}[H]
\hspace*{-0.5cm}
\centering
\includegraphics[scale=0.6]{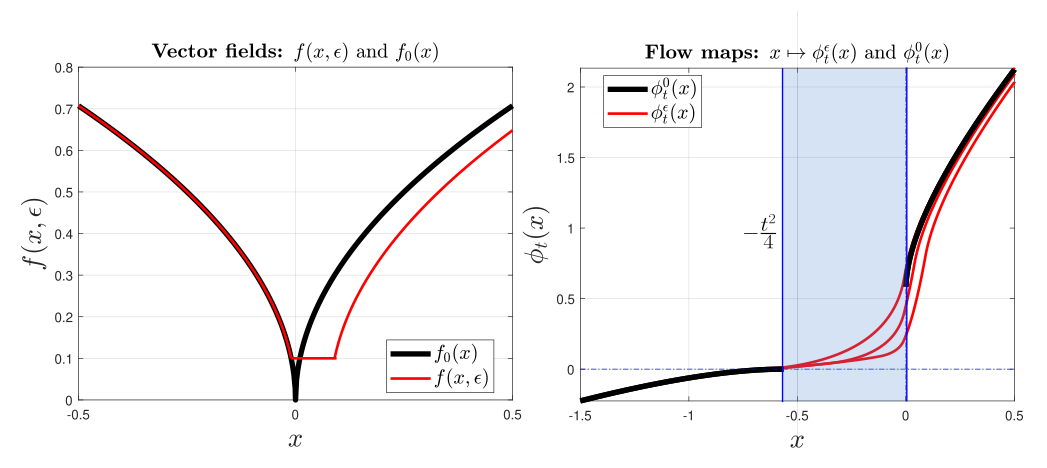}
\caption{Vector fields (left panel) and flow maps at $t=1.5$ (right panel) for $\epsilon=0.1$. the regularized vector field is shown for $T=1$ only, whereas the regularized flow map are shown with three different values $T=0.1,0.5,1$.
The inviscid flow map (black curve) is undefined on the interval $(-\frac{t^2}{4},0]$.}
\label{Vec_Flow}
\end{figure}

\subsection{Spontaneous stochasticity (weak and strong); Definition \ref{SPdef}}

Since the system is fully explicit, we can characterize when it is $\SP$. We focus on the Cauchy problem given a fixed initial condition $x_0 < 0$. We
assume that the system is observed after the inviscid hitting time
$t^\star=2\sqrt{-x_0}$, namely $t-t^\star>0$. The key choice is to use a
dependence of $T_\epsilon$ on $\epsilon$ such that $T_\epsilon$ has no limit as
$\epsilon\to0$, and we set
$$
T^-:=\liminf_{\epsilon\to0}T_\epsilon
<
T^+:=\limsup_{\epsilon\to0}T_\epsilon .
$$
Using the explicit expression of $\phi_t^\epsilon(x_0)$ given in
Appendix~\ref{flowamb}, in the region where the trajectory has already left
the plateau one has
\be \label{Gamma_T}
\gamma(\epsilon)
=
\phi_t^\epsilon(x_0)
=
\epsilon T_\epsilon-2\epsilon^2
+
\left(
\frac{t-t^\star-T_\epsilon}{2}+2\epsilon
\right)^2 .
\de
We define $\Gamma$ by $\Gamma(\epsilon,T_\epsilon)=\gamma(\epsilon)$, namely
$\Gamma(\epsilon,T)=\epsilon T-2\epsilon^2+
\left(\frac{t-t^\star-T}{2}+2\epsilon\right)^2$.

Then $\Gamma(\epsilon,T)=\Gamma_0(T)+o(1)$ as $\epsilon\to0$, with
$\Gamma_0(T)=\frac14(t-t^\star-T)^2$, as long as the limiting curve does not
hit zero. Moreover,
$$
\phi^-:=\liminf_{\epsilon\to0}\phi_t^\epsilon(x_0)
=
\frac14\left(t-t^\star-T^+\right)_+^2,
\qquad
\phi^+:=\limsup_{\epsilon\to0}\phi_t^\epsilon(x_0)
=
\frac14\left(t-t^\star-T^-\right)_+^2.
$$
Thus
$$
\phi^-<\phi^+
\quad\Longleftrightarrow\quad
T^-<t-t^\star .
$$
Therefore, under $T^-<T^+$, one has $\SP$ relative to
$(t,x_0,{\rm Leb}_\epsilon)$ whenever $T^-<t-t^\star$. This follows from
Proposition~\ref{SPDEFS} applied with ${\cal O}={\rm Id}$.

\paragraph{Strong $\SP$.}
We choose $T_\epsilon$ such that
$(T_\epsilon)_\#{\rm Leb}_\epsilon\rightharpoonup{\cal T}$; for instance,
$T_\epsilon=1+a\sin(1/\epsilon)$ with $0<a<1$. Then ${\cal T}$ is the shifted
and rescaled arcsine law on $[1-a,1+a]$, namely
${\cal T}(dT)=
{\bf 1}_{(1-a,1+a)}(T)\,dT/
\left(\pi\sqrt{a^2-(T-1)^2}\right)$.

If $t-t^\star>1+a$, the limiting curve does not hit zero on the support of
${\cal T}$, and we obtain strong $\SP$ with
$$
\gamma_\#{\rm Leb}_\epsilon
\rightharpoonup
\mu
=
(\Gamma_0)_\#{\cal T}.
$$
In this case one obtains the explicit probability measure
$$
\mu(dx)
=
\frac{dx}
{\pi\sqrt{x}\,
\sqrt{
a^2-\left(t-t^\star-1-2\sqrt{x}\right)^2
}},~x \in 
\left(
\frac14(t-t^\star-1-a)^2,
\frac14(t-t^\star-1+a)^2
\right).
$$
If the condition $t-t^\star>1+a$ is not satisfied, the limiting curve hits
zero and the limiting law has a different expression, involving the truncated
curve $\frac14(t-t^\star-T)_+^2$ and possibly an atomic contribution at zero.

\begin{figure}[H]
\centering
\includegraphics[scale=0.45]{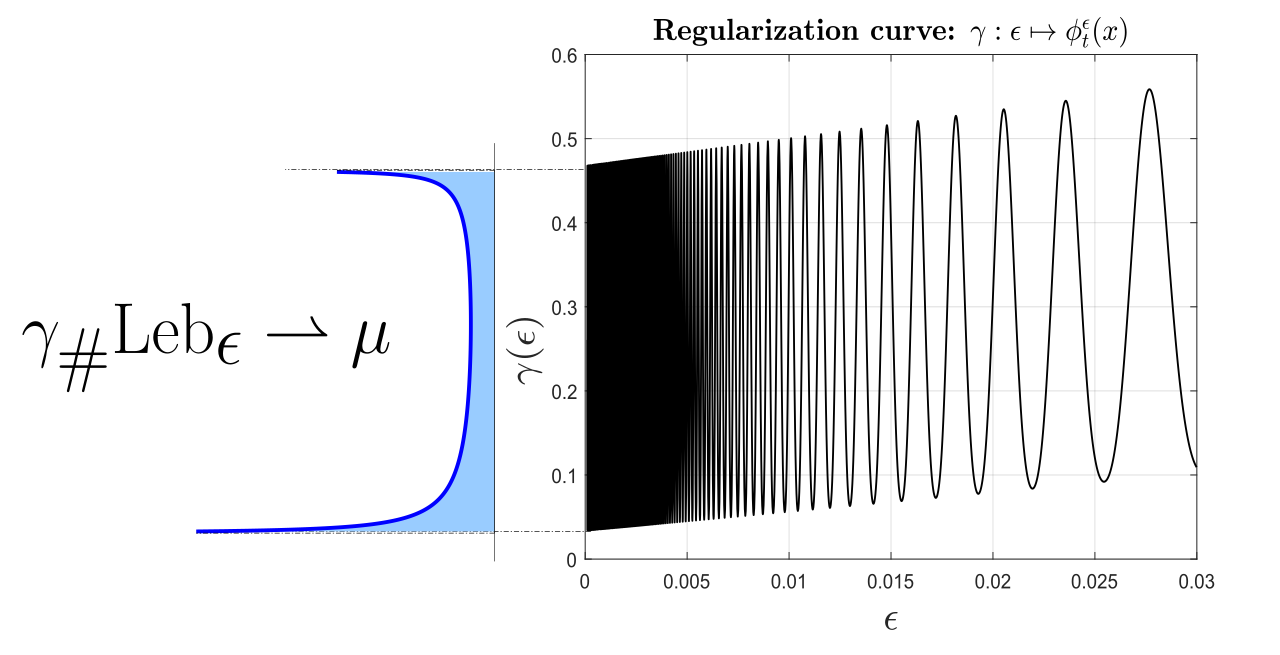}%
\caption{Regularization curve $\gamma$ for $t=2.5$, $x_0=-0.1$, $a=0.5$, using $T_\epsilon = 1 + a \sin 1/\epsilon$, together with the distribution of states $\mu$ in the limit: system \eqref{regamb} is Strong-$\SP$ relative to $(t,x_0,{\rm Leb}_\epsilon)$. }
\label{Amb_gamma}
\end{figure}

\paragraph{Weak $\SP$.}

It is also easy to obtain a situation where one has Weak-$\SP$. Consider
$$
T_\epsilon=1+a\sin(b\log\epsilon),
\qquad 0<a<1,\quad b\neq0.
$$
Let us investigate
$I_\epsilon:=\langle (s\mapsto T_s)_\#{\rm Leb}_\epsilon,F\rangle$. After the
change of variables $s=\epsilon r$, one obtains
$$
I_\epsilon
=
\int_0^1
F\left(1+a\sin(b\log\epsilon+b\log r)\right)\,dr .
$$
We now choose subsequences $\epsilon_n^{(\theta)}$ such that
$b\log\epsilon_n^{(\theta)}$ converges to $\theta$ modulo $2\pi$. Then
$$
(s\mapsto T_s)_\#{\rm Leb}_{\epsilon_n^{(\theta)}}
\rightharpoonup
{\cal T}_\theta,
$$
where
$$
{\cal T}_\theta
=
\left(
r\mapsto 1+a\sin(\theta+b\log r)
\right)_\#{\rm Leb}_1 .
$$
Thus different subsequences may select different limiting laws
${\cal T}_\theta$. Following the same argument as in the strong $\SP$ case,
one obtains the corresponding family of inviscid measures
$$
\mu_\theta
=
(\Gamma_0)_\#{\cal T}_\theta .
$$
Hence the regularization curve has subsequential limiting statistics, and
Weak-$\SP$ holds.
From this, we also deduce that the set $\mathscr{M}(\gamma)$ defined in \eqref{Mgamma_def} is
$$
\mathscr{M}(\gamma) = \{ \mu~:~\exists \epsilon_n \to 0 ~\mbox{such that}~
\gamma_\# {\rm Leb}_{\epsilon_n} \rightharpoonup \mu \} = 
\{ \mu_\theta,~\theta \in [0,2\pi) \}.
$$
This set is not convex (see remark in \ref{step3}).
We illustrate these results in Fig.~\ref{Amb_gamma_weak}. In particular, we
plot the variance of the regularization curve as a simple diagnostic for
Weak-$\SP$. Note, however, that this criterion is only sufficient and not
necessary: the second moment may converge while higher-order moments do not.

\begin{figure}[H]
\centering
\includegraphics[scale=0.5]{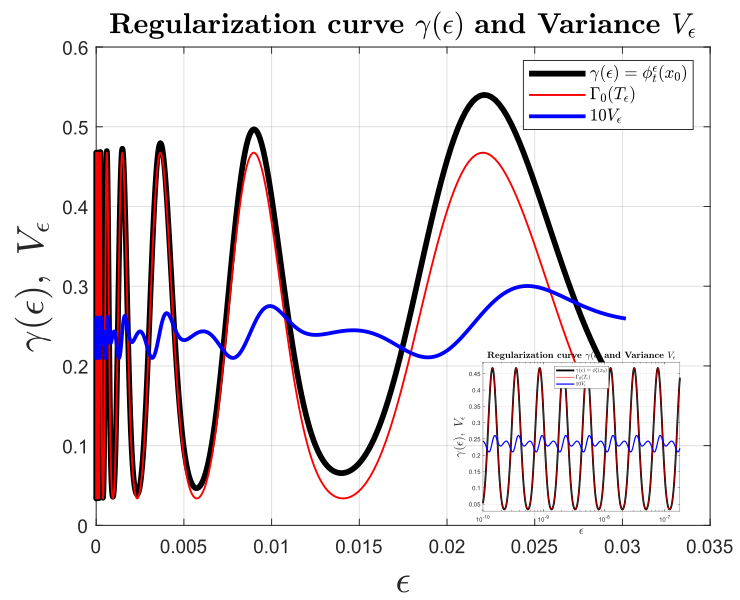}%
\caption{Regularization curve $\gamma$ for $t=2.5$, $x_0=-0.1$, $a=0.5$, $b = 7$, using $T_\epsilon = 1 + a \sin b \log \epsilon$, together with the variance $V_\epsilon$ of $\gamma$. 
The inset provides a logarithmic-scale view in $\epsilon$ of
the same curves suggesting log-periodic behavior.
System \eqref{regamb} is Weak-$\SP$ relative to $(t,x_0,{\rm Leb}_\epsilon)$. }
\label{Amb_gamma_weak}
\end{figure}

This example illustrates that the ambient measure must be adapted to the scale on which the realization map oscillates. With the flat measure ${\rm Leb}_\epsilon$, the map is sampled on the wrong scale: its phase does not sweep through enough oscillations inside the Lebesgue window to produce an averaged limiting law. The resulting statistics retain a dependence on the phase of $\log\epsilon$, in a way reminiscent of a stroboscopic effect.

\subsection{Randomization of $T$: illustration of Definition \ref{genSPdef}}

Instead of taking a deterministic regularization with $T=T_\epsilon$, one can
randomize the variable $T$. This fits Definition~\ref{genSPdef} as follows.
The vector field \eqref{regamb} is now written as
\be \label{randTamb}
\tag{\ensuremath{{\rm Amb}_{\theta}}}
f(x,\theta) = 
\left\{
\begin{array}{ll}
\sqrt{|x|},
&
x\in(-\infty,-\epsilon^2],
\\[0.4em]
\epsilon,
&
x\in[-\epsilon^2,\chi-\epsilon^2],
\\[0.4em]
\sqrt{x-\chi+2\epsilon^2},
&
x\in(\chi-\epsilon^2,+\infty),
\end{array}
\right.
\qquad
\theta=(\epsilon,\chi).
\de
One checks that $\|f(\cdot,\theta)-f_0\|_\infty\to0$ as
$\|\theta\|\to0$. We thus take
$$
\Theta=\mathbb R_0^+\times\mathbb R_0^+,
\qquad
\theta=(\epsilon,\chi),
\qquad
\mathbb P_\epsilon(ds,d\chi)
=
{\rm Leb}_\epsilon(ds)\,\rho_s(d\chi),
$$
where
$$
\rho_s=(T\mapsto sT)_\#\rho.
$$
Equivalently, if $\chi=sT$ and $T$ is distributed according to $\rho$, then
$\rho_s$ is the induced law of $\chi$. More explicitly, for every bounded
continuous function $F$,
$
\langle \rho_s,F\rangle
=
\int_{\mathbb R_0^+} F(\chi)\,\rho_s(d\chi)
=
\int_{\mathbb R_0^+} F(sT)\,\rho(dT).
$
\\

Figure~\ref{AmbrandT} shows the randomized case obtained from
\be \label{rho3}
\rho(dT)
=
\frac{2}{3}\,{\bf 1}_{\rm W}(T)\,dT,
\qquad
{\rm W}
=
\left[1,\frac32\right]\cup\left[2,\frac52\right]\cup\left[3,\frac72\right].
\de
Assuming $t-t^\star>\sup{\rm W}=7/2$, the limiting statistics are absolutely
continuous and given by
$$
\mu(dy)
=
\frac23\,
\frac{{\bf 1}_{\rm W}\!\left(t-t^\star-2\sqrt{y}\right)}
{\sqrt{y}}\,dy.
$$
The same figure also shows, in red, the deterministic extremal solution
corresponding to $\rho(dT)=\delta_0(dT)$; see
Definition~\ref{extremalsol}. This second case does not produce spontaneous
stochasticity, since it selects a unique trajectory.

\begin{figure}[H]
\centering
\includegraphics[scale=0.55]{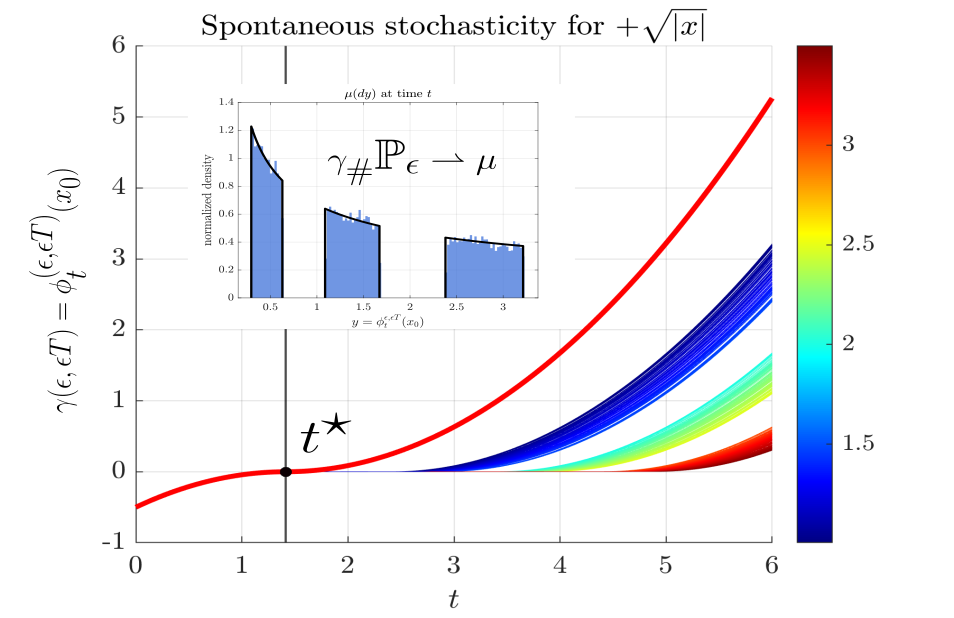}%
\caption{The system \eqref{randTamb} is $\SP$ relative to
$(t,x_0,\mathbb P_\epsilon)$ for the law \eqref{rho3}. The three
bands correspond to the three waiting-time intervals in ${\rm W}$, while the
red curve is the extremal solution obtained from $\rho=\delta_0$. Here
$\epsilon=10^{-3}$, $x_0=-0.5$, and $O(1000)$ i.i.d. realizations are used.}
\label{AmbrandT}
\end{figure}

Theorem~\ref{M0M} states that it is possible to design a regularization of
$\sqrt{|x|}$ that converges to any target probability distribution in
${\mathscr M}_0={\cal P}({\mathscr S}_0)$, where here
$$
{\mathscr S}_0
=
\left[0,\frac14(t-t^\star)^2\right],
\qquad
t\geq t^\star.
$$
In particular, our random-$T$ regularized system has a representative
one-parameter regularized system, not necessarily of the form \eqref{regamb}.

\subsection{Failure of TBM criterion.}
Note, somewhat surprisingly, that if the initial condition is kept uniformly
away from the singularity, then the regularized flow maps remain uniformly
Lipschitz with respect to the initial condition. More precisely, let $x_0<0$ and
choose an interval $I\subset(-\infty,0)$ with $x_0\in I$. Then, for every
fixed $t>t^\star$, the explicit formula in Appendix~\ref{flowamb} yields a
constant $C_{t,I}>0$, independent of $\epsilon$, such that, for all
$\epsilon>0$ small enough,
$$
\left|
\phi_t^\epsilon(x)-\phi_t^\epsilon(y)
\right|
\leq
C_{t,I}|x-y|,
\qquad
x,y\in I.
$$
In particular, if $x_0+\operatorname{supp}\rho_\epsilon\subset I$ for all
$\epsilon>0$ small enough and $\rho_\epsilon\rightharpoonup\delta_0$, then
TBM cannot hold.

\part{\Large Spontaneous stochasticity with a semigroup structure}\label{part_2}

Part~\ref{part_1} has left us with a structural fact: as soon as the inviscid
problem \eqref{P_0} admits nonunique solutions, essentially any statistics can
be selected in the inviscid limit. Recall that a \emph{regularization curve}
$\gamma$, see \eqref{gamma}, records the dependence of the flow map on the
regularization level,
$$
\gamma:\mathbb R^+\to H,
\qquad
\tau\mapsto\phi_t^\tau(x_0),
$$
the physical time $t$ and the initial condition $x_0$ being fixed.
Theorem~\ref{M0M} then asserts that the set of limiting laws attainable within
the class $\Gamma$ of admissible curves, see \eqref{Gamma}, is maximal:
$$
\mathscr M
=\bigcup_{\gamma\in\Gamma}\mathscr M(\gamma)
=\mathscr M_0
={\cal P}(\mathscr S_0).
$$

This flexibility, however, comes at a price: the class $\Gamma$ is so large
that it carries no information about how a particular law is actually
selected. In particular, the universality classes introduced in
Section~\ref{univc} remain, at this level of generality, purely
set-theoretic. The purpose of Part~\ref{part_2} is to restore structure by
restricting attention to regularizations that are themselves dynamical
objects.

The key step is to consider curves that arise as orbits of a semigroup
$({\cal R}_\tau)_{\tau\ge0}$ acting on $\rgh$, namely
$$
\gamma(\tau)=\gamma(\tau;y)={\cal R}_\tau y,
\qquad y\in\rgh .
$$
Such a curve is entirely specified by the pair $({\cal R},y)$, so that the
limit sets $\mathscr M(\gamma)$ of Definition~\ref{SPdef5} can be written in
the parametrized form $\mathscr M_{\cal R}(y)$. For simplicity, we restrict
throughout to the ambient measure ${\rm Leb}_\tau$; see \eqref{Leb}. Allowing
both the semigroup ${\cal R}$ and the initial state $y$ to vary yields the
dynamically generated subset
\be \label{MRG}
\mathscr{M}^{\rm (RG)}
:=\bigcup_{\cal R}\,\bigcup_{y\in\rgh}\mathscr M_{\cal R}(y)
\;\subset\;
\mathscr M=\mathscr M_0 .
\de
The inviscid statistics are now indexed by the semigroups ${\cal R}$ (or their
vector-field generators) and the initial states $y\in\rgh$, rather than by
arbitrary curves $\gamma\in\Gamma$. One expects the inclusion in \eqref{MRG}
to be strict in general: imposing a semigroup structure constrains the
attainable statistics. This loss of flexibility is precisely the point.
Theorem~\ref{RGattractors} and Corollary~\ref{coroDS} show that, for a fixed
semigroup ${\cal R}$, the family $\bigcup_{y\in\rgh}\mathscr M_{\cal R}(y)$ is
governed by the ergodic theory of ${\cal R}$: the limiting laws arise as
invariant measures of the RG dynamics, organized into statistical attractors,
and the universality classes of Section~\ref{univc} thereby acquire a genuine
dynamical meaning as basins of attraction; see also~\cite{MMRG26}.

Part~\ref{part_2} is organized as follows. Section~\ref{par1_4_3_1} sets up
the semigroup formalism: RG flows, statistical attractors, an elementary
illustrative example, and a comparison with the Feigenbaum--Mailybaev RG
approach. Section~\ref{par1_4_3_2} develops the associated Perron--Frobenius
formulation and proves the structural results mentioned above.
Section~\ref{RGreg} applies the framework to inviscid systems with an isolated
singularity, where the selection mechanism reduces to an exit problem near the
singularity, coupled to the RG dynamics. Finally, Section~\ref{par1_4_3_3}
treats regularization curves that are not semigroup orbits: lifting the
problem to the space of curves via the Bebutov flow restores a semigroup
structure in full generality and yields Theorem~\ref{thm:M0Mupup}, an abstract
statistical-attractor counterpart of Theorem~\ref{M0M}.

\section{Definitions}\label{par1_4_3_1}
The formalism is intentionally broad. Its purpose is not to prescribe a unique renormalization-group construction, but to isolate the minimal ingredients needed later on: a semigroup structure in the regularization parameter, a compatible family of regularized vector fields, and the associated notion of statistical attractor.

\subsection{Definition of the semigroup}
We recall that we start from an inviscid system $f_0$ at $(t,x_0)$ fixed and defined
by
$$
\dot x = f_0(x),~x(0) = x_0 \in H, ~t \in [0,T]
$$
We again study the inviscid limit $\tau \to \infty$ for a family of regularized problems of the form 
$$
\dot x = f(x,\theta),~x(0) = x_0, \qquad \theta = (\tau,y) \in \Theta,
$$
where the regularization space $\Theta$ has the product form
$$
\Theta=\mathbb R_0^+\times\rgh.
$$
The regularized vector field $f$ must be compatible
with the inviscid system \eqref{P_0}:
$$
\lim_{\tau\to\infty}
\|f(\cdot,\tau,y)-f_0(\cdot)\|_\infty
=0,
\qquad
\forall y\in\rgh.
$$
Our aim is to restrict the regularized vector fields $f$ to the case where the
dependence on the scale variable $\tau$ is itself generated by an autonomous
dynamical system. It yields the coupled system:
\begin{subequations}\label{RGODE}
\begin{empheq}[left=\empheqlbrace]{alignat=5}
&\dot x\hphantom{(\tau;y)}
&=&~~ f(x,\tau,y),
&\qquad& x(0)
&&= x_0,
\label{RGODE:x}
\\
&\frac{d\gamma}{d\tau}\hphantom{(\tau;y)}
&=&~~ \mathrm G(\gamma),
&\qquad& \gamma(0)
&&= y,
\label{RGODE:gamma}
\\
&\gamma(\tau;y)
&=&~~ \phi_t^{\tau,y}(x_0).
&\qquad&
&&
\label{RGODE:coupling}
\end{empheq}
\end{subequations}
The first line, \eqref{RGODE:x}, is the regularized Cauchy problem; the dot refers to the physical time. The second line,
\eqref{RGODE:gamma}, is an autonomous equation in the scale variable
$\tau$ (the RG time introduced below). The vector field
$\mathrm G$ is assumed globally Lipschitz, and therefore generates a
semigroup. We write $\gamma(\tau;y)$ for the solution of \eqref{RGODE:gamma} at time $\tau$
starting from $y$, so that the semigroup property reads
$$
\gamma(\tau_1+\tau_2;y)
=
\gamma(\tau_1;\gamma(\tau_2;y)).
$$
The third line, \eqref{RGODE:coupling}, is the compatibility condition between
the regularized Cauchy problem and this semigroup orbit. The quantity
$\phi_t^{\tau,y}(x_0)$ denotes the solution of \eqref{RGODE:x} at the fixed
time $t$, with initial condition $x_0$, and with parameter
$(\tau,y)$. Strictly speaking, the notation should keep track of the fixed
pair $(t,x_0)$; we suppress this dependence for ease of reading. 

From now on, we use the dynamical-system notation
\be \label{SRG}
{\cal R}_\tau y := \gamma(\tau;y),
\qquad
{\cal R}_{\tau_1+\tau_2}
=
{\cal R}_{\tau_1}\circ{\cal R}_{\tau_2}.
\de
Thus, in the present setting, a regularization curve is precisely an orbit
$\tau\mapsto{\cal R}_\tau y$ starting from some $y\in\rgh$. The main
object is the asymptotic behavior of this dynamical system as
$\tau\to\infty$, which corresponds here to removing the regularization and
approaching the inviscid system.

\bigskip

\begin{remark}
Although \eqref{RGODE} may look overdetermined, the coupling is imposed only
at the fixed observation $(t,x_0)$. It does not require the whole
regularized flow $(s,x)\mapsto\phi_s^{\tau,y}(x)$ to be generated by the
semigroup; it only requires the endpoint realization
$(\tau,y)\mapsto\phi_t^{\tau,y}(x_0)$ to form a semigroup orbit. Thus the
condition is a slice constraint on the observed inviscid-limit statistics,
rather than a rigidity assumption on the full dynamics.
Explicit
examples are discussed in Section~\ref{RGreg}.
\end{remark}
\bigskip
\begin{remark}
The coupling in \eqref{RGODE} calls for some additional explanations. Two equivalent viewpoints are possible: either the family $f$ is the primitive object and \eqref{RGODE:coupling} is the requirement that its endpoint realization be generated by some Lipschitz $\mathrm G$; or one prescribes the generator $\mathrm G$ and looks for regularizations $f$ realizing it. The second point of view is the one adopted in the explicit constructions of Section~\ref{RGreg}. A similar philosophy underlies the construction of Feigenbaum‑like renormalization groups \cite{AM_NLN25}; see Section~\ref{FMRG}.
\end{remark}

\bigskip

\begin{remark}\label{remark_fmrg}
 One can generalize this formulation by allowing the semigroup to act on an
auxiliary RG space $\mathbb H$, rather than directly on the physical state
space $H$. In that case $y\in\mathbb H$, and the physical endpoint must be
recovered through a map $\Pi:\mathbb H\to H$. The coupling
condition \eqref{RGODE:coupling} is then replaced by
$
\Pi\bigl(\gamma(\tau;y)\bigr)
=
\phi_t^{\tau,y}(x_0).
$
This formulation is more general, since the RG state may contain information
beyond the physical endpoint itself; in the present work we keep the simpler
case $\mathbb H=H$ and $\Pi=\mathrm{Id}$.
\end{remark}
\bigskip

\subsection{RG flow and statistical attractors}

\begin{definition}[RG flow]\label{RGflow}
A family of strongly continuous maps $({\cal R}_\tau)_{\tau \ge 0}$, ${\cal R}_\tau:\rgh \to \rgh$, is called an \emph{RG flow} associated with \eqref{P_0} if it satisfies
\begin{enumerate}
  \item[(i)] {\bf Semigroup}: ${\cal R}_0={\rm Id}$ and $\mathcal{R}_{\tau+\tau'}=\mathcal{R}_\tau\circ\mathcal{R}_{\tau'}$.
  \item[(ii)] {\bf Regularization}: The solution of \eqref{RGODE:x} at time $t$ and initial condition $x_0$ is ${\cal R}_\tau y$. It therefore becomes increasingly singular (less regularized) as $\tau\to\infty$. 
  We will say that the regularization is {\it autonomous} and call \emph{orbit}, the curve $\gamma:\tau \mapsto {\cal R}_\tau y$.
\end{enumerate}
\end{definition}\noindent 
In concrete terms, the point of Definition \ref{RGflow} is the following. Instead of viewing each regularization as an isolated curve $\tau\mapsto\gamma(\tau)$, we organize all such curves as orbits of a single dynamical system $({\cal R}_\tau)$. This gives access to the usual tools of dynamical systems and ergodic theory, and later allows us to relate $\SP$ to invariant sets, attractors, and basin structure in the RG dynamics.

Although $(\mathcal{R}_\tau)$ acts deterministically on individual states $y \in \rgh$, it induces a corresponding evolution on probability measures via its pushforward; see Definitions in Appendix \ref{PFTP}.
It allows one to study the evolution not just of individual regularizations, but of entire ensembles, thus providing a natural statistical framework.
Within this setting, we define:
\begin{definition}[Statistical Attractor]\label{DefstatAtt}
Let $X$ be a Polish space.  Denote by ${\cal P}(X)$
the space of all Borel probability measures on $X$, endowed with the topology of weak convergence.  Suppose
$(S_\tau)_{\tau\ge0}$ is a continuous semigroup on $\mathcal P(X)$.
A set $\mathcal A\subset\mathcal P(X)$ is called a \emph{statistical attractor} for $(S_\tau)$ if:
\begin{enumerate}
\item[(i)] {\bf Compactness}: $\mathcal A$ is nonempty and compact in the weak topology.
  \item[]
  \item[(ii)] {\bf Invariance}: It is \emph{forward-invariant}: 
    $$
      S_\tau(\mathcal A)\;\subset\;\mathcal A
      \quad\forall\,\tau\ge0.
    $$
  \item[(iii)] {\bf attraction}: Its \emph{Birkhoff basin of attraction} satisfies ${\cal B}({\cal A}) \neq \emptyset$, with
    $$
      \mathcal B(\mathcal A)
      \;:=\;
      \bigl\{\nu\in\mathcal P(X)\colon\omega(\nu)\subset\mathcal A\bigr\},~\text{with}~
      \omega(\nu)
      \;:=\;
      \bigcap_{R>0}
      \overline{\Bigl\{
        \tfrac1{R'}\!\int_0^{R'} S_\tau\nu \,d\tau 
        \;\colon\; R'\ge R
      \Bigr\}}
      \;\subset\;\mathcal P(X),
    $$
\end{enumerate}
\end{definition}
Definition \ref{DefstatAtt} differs from the classical definition in~\cite{KARABACAK_ASHWIN_2011} by omitting the minimality assumption, making it strictly weaker than the notion of \emph{minimal attractors} introduced by Ilyashenko~\cite{arnold1999bifurcation}. This broader formulation allows us to interpret emergent statistical laws as attractors of the dynamics, encompassing not only the asymptotic behavior of deterministic regularizations as defined in~\eqref{RGODE}, but also that of random regularizations.

In this work, we will be concerned in particular with the choice\
of the Perron-Frobenius operator: $S_\tau := ({\cal R}_\tau)_\#$ (see 
Appendix \ref{PFTP}).
We give next a simple example of a Perron-Frobenius operator together with its statistical attractor. 
\subsection{A simple example}
Although the Perron--Frobenius operator $({\cal R}_\tau)_\#$ is linear on probability measures, this does not make its dynamics trivial. The linearity is with respect to the superposition of measures; it does not erase the motion induced by the underlying dynamical system. Thus the Perron--Frobenius dynamics should not be understood as producing nonlinear bending in the affine space of measures. Rather, it transports probability mass along trajectories. The following elementary example illustrates that the raw Perron--Frobenius orbit may keep moving, while its Birkhoff averages select invariant statistical laws.

Let
$$
X=\mathbb S^1_a\cup\mathbb S^1_b,
\qquad
\mathbb S^1_a=\mathbb R/\mathbb Z,
\qquad
\mathbb S^1_b=\mathbb R/\mathbb Z.
$$
We write $d\theta_a$ and $d\theta_b$ for the normalized Lebesgue measures on the two circles. Define the dynamics by
$$
{\cal R}_\tau\theta_a=\theta_a+\omega_a\tau
\quad\text{on }\mathbb S^1_a,
\qquad
{\cal R}_\tau\theta_b=\theta_b+\omega_b\tau
\quad\text{on }\mathbb S^1_b.
$$
Let $\mu\in{\cal P}(X)$ be of the form, for $p \in [0,1]$
$$
\mu
= p \mu_a + (1-p) \mu_b,~{\rm with}~\mu_a(d\theta_a) = \rho_a(\theta_a) d\theta_a,~\mu_b(d\theta_b) = \rho_b(\theta_a) d\theta_b,
$$
where $\rho_a$ and $\rho_b$ are probability densities on their respective circles. Then
$$
({\cal R}_\tau)_\#\mu 
= p ~({\cal R}_\tau)_\# \mu_a + (1-p) ~({\cal R}_\tau)_\# \mu_b 
=
p\,\rho_a(\theta_a-\omega_a\tau)\,d\theta_a
+
(1-p)\,\rho_b(\theta_b-\omega_b\tau)\,d\theta_b.
$$
Hence, unless the densities are already constant, the raw Perron--Frobenius orbit $\tau\mapsto({\cal R}_\tau)_\#\mu$ does not converge: the two conditional densities keep rotating on their respective circles.
By contrast, the Birkhoff averages converge weakly. Since
$1/T \int_0^T  \rho_a(\theta_a-\omega_a \tau) d\tau \to \int_{\mathbb{S}_a^1} \rho_a(ds)  = 1$, one has
$$
\frac1T\int_0^T({\cal R}_\tau)_\#\mu\,d\tau
\;\rightharpoonup\;
p\,d\theta_a+(1-p)\,d\theta_b.
$$
Thus the corresponding statistical attractor is the convex segment
$$
{\cal A}
=
\left\{
p\,d\theta_a+(1-p)\,d\theta_b:\ p\in[0,1]
\right\}.
$$
This example separates the two effects: the raw Perron--Frobenius dynamics transports measures nontrivially, despite its linearity, while the Birkhoff average filters out this motion and selects invariant statistical laws.

\subsection{Comparison with Feigenbaum-Mailybaev RG approach}
\label{FMRG}

In recent years, a renormalization-group formalism for shell models has been developed by Mailybaev and collaborators \cite{Maily2012,Maily2016,AM_Ra23,AM_Rb23,AM_NLN25}; we refer to this approach as FMRG. As in the present framework, a level-$\tau$ regularization is mapped to a level-$\tau'\geq\tau$ regularization by a semigroup operator ${\cal R}$. There are, however, several distinctions that we now clarify.
\\
\paragraph{}     
    The FMRG construction intrinsically relies on an infinite-dimensional phase space; in its present form, it therefore cannot be defined for finite-dimensional systems. It distinguishes a specific renormalization operator ${\cal R}$. It is built from the scale symmetries of the inviscid system and corresponds to a regularization procedure reminiscent of Galerkin truncation. More precisely, its definition involves a general version of the Feigenbaum operator \cite{Feigenbaum1976}. This Feigenbaum-type transformation zooms into the flow map toward finer scales and then rescales it back in a way that is self-consistent with the inviscid vector field. It should not be confused with the Wilson RG scheme, where fine-scale fluctuations are integrated out in order to produce an effective description at coarser scales. In this sense, the two approaches run in opposite directions in scale space.

    The semigroup operator ${\cal R}$ in FMRG has the property of selecting, in the limit $\tau\to\infty$, particular solutions of the inviscid system. These solutions involve scale symmetries in a nontrivial way, including scale-invariant solutions, which makes the framework particularly relevant for physics and, in particular, for turbulence.

    In the present work, by contrast, we consider all admissible semigroup operators ${\cal R}$, specified through their generators $\mathrm{G}$ in \eqref{RGODE:gamma}.

\paragraph{}
The domain on which ${\cal R}$ acts is also different. In FMRG, the operator acts on the space of full flow maps $(t,x)\mapsto \phi_t(x)$. In the present construction, by contrast, both the physical time $t$ and the initial condition $x_0$ are fixed, so that the relevant object is only the state reached at that time; see \eqref{SRG}. This reduction is precisely what makes the construction finite-dimensional. In FMRG, the self-consistency of the renormalization operator requires \emph{nonlocal} expressions involving both time and initial data, and therefore the full flow map cannot be replaced by its value at a single time and position.

{This distinction can be formulated more intrinsically by introducing an abstract RG space $\mathbb H$, together with an observation map $\Pi:\mathbb H\to H$; see Remark \ref{remark_fmrg} above. In such a formulation, the semigroup acts on $\mathbb H$, while the physical endpoint is obtained only after projection:
$$
\Pi({\cal R}_\tau y)=\phi_t^{\tau,y}(x_0).
$$
The present construction corresponds to the closed reduced case $\mathbb H=H$ and $\Pi=\mathrm{Id}$, where the observed endpoint itself is a sufficient coordinate for the RG dynamics. By contrast, FMRG should be viewed as the case where \(\mathbb H$ is a space of full flow maps and $\Pi$ is the evaluation at the fixed pair $(t,x_0)$. The semigroup property in $\mathbb H$ need not close after this projection: two distinct RG states may have the same observed endpoint while evolving differently under the semigroup. Thus the finite-dimensional formulation used here is a closed projected version of the full flow-map RG picture.}

\section{Autonomous regularizations: Perron--Frobenius formulation}
\label{par1_4_3_2}

We now make precise the idea that the sets ${\mathscr M}(\gamma)$ appearing in Theorem~\ref{M0M} become, in the RG setting, genuine \emph{statistical attractors}. This requires passing from the pointwise RG flow ${\cal R}_\tau$ on $ \rgh $ to the induced Perron--Frobenius semigroup acting on probability measures. This passage is not only natural but essentially forced by the definition of ${\mathscr M}(\gamma)$, which is already expressed in terms of empirical measures along an orbit.
Recall that ${\mathscr M}(\gamma)$ in Definition~\ref{SPdef5} is
$$
\mathscr{M}(\gamma)
= \left\{ \mu~:~\exists\,\tau_n \to \infty \ \text{such that} \
\frac{1}{\tau_n} \int_0^{\tau_n} \delta_{\gamma(s)}\,ds \to \mu \right\}.
$$
In the present autonomous setting, $\gamma$ is an orbit of ${\cal R}$, namely
$$
\gamma(s;y)={\cal R}_s y
$$
for some $y\in \rgh$. Hence, for every bounded continuous $F$,
$$
\langle ({\cal R}_\tau)_\#\delta_y , F\rangle
= \langle \delta_y , F\circ{\cal R}_\tau\rangle
= F({\cal R}_\tau y)
= \langle \delta_{{\cal R}_\tau y} , F\rangle .
$$
Thus
\begin{equation}\label{Diracid}
\delta_{{\cal R}_\tau y}=({\cal R}_\tau)_\#\delta_y .
\end{equation}
This identity is the key point: the empirical measures defining
${\mathscr M}(\gamma)$ are precisely Birkhoff averages of the
Perron--Frobenius orbit of the initial Dirac mass $\delta_y$. Therefore the
correct statistical object is not only the pointwise orbit
$\tau\mapsto{\cal R}_\tau y$, but the lifted orbit
$\tau\mapsto({\cal R}_\tau)_\#\delta_y$ in ${\cal P}(\rgh)$.

This motivates the definition
\be \label{MRy}
\mathscr{M}_{\cal R}(y)
:= \left\{ \mu~:~\exists\,\tau_n \to \infty \ \text{such that} \
\frac{1}{\tau_n} \int_0^{\tau_n} ({\cal R}_s)_\#\delta_y\,ds \to \mu \right\}.
\de 
By \eqref{Diracid}, this is exactly the RG version of
${\mathscr M}(\gamma)$ for the orbit $\gamma(s;y)={\cal R}_s y$.

Here $({\cal R}_\tau)_\#=(y\mapsto{\cal R}_\tau y)_\#$ is the
Perron--Frobenius, or transfer, operator associated with the map
${\cal R}_\tau$; see Appendix~\ref{PFTP}. It should be distinguished from
the pushforward $(\tau\mapsto{\cal R}_\tau y)_\#$, which instead pushes
forward measures on the $\tau$ variable. The former evolves probability
measures on $\rgh$, while the latter records the distribution of a single
orbit under a prescribed sampling of RG time.

The Perron--Frobenius lift also allows one to act on arbitrary probability
measures $\mu\in{\cal P}(H)$, not only on Dirac masses. The operators
$({\cal R}_\tau)_\#$ form a Markov semigroup: they preserve positivity and
total mass, and satisfy the Chapman--Kolmogorov relation
\begin{equation}\label{Chapman}
({\cal R}_{\tau+\tau'})_\#
=
({\cal R}_\tau)_\#\circ({\cal R}_{\tau'})_\# .
\end{equation}
Thus, while ${\cal R}_\tau$ describes the evolution of individual
regularization states, the Perron--Frobenius semigroup describes the evolution
of statistical ensembles. It is this lifted dynamics that makes limiting
statistics into statistical attractors.
Putting these remarks together, we obtain the following result.

\begin{theorem}[Statistical attractors for $({\cal R}_\tau)_\#$]\label{RGattractors}
Let $({\cal R}_\tau)_{\tau\ge0}$ be an RG flow acting continuously on a
(possibly infinite-dimensional) space $\rgh$.
Assume there exists a compact absorbing set $K\subset \rgh$ such that
$$
{\cal R}_\tau K\subset K,
\qquad \forall \tau\ge0.
$$
Since $K$ is absorbing, no generality is lost by restricting attention to initial data in $K$.
For $\mu\in{\cal P}(K)$, define the Birkhoff $\omega$-limit set
$$
\omega(\mu):=
\bigcap_{R>0}
\overline{\left\{
\mu_{R'}(\mu):R'\ge R
\right\}},
\qquad
\mu_R(\mu):=\frac1R\int_0^R({\cal R}_{\tau'})_\#\mu\,d\tau'.
$$
Then:
\begin{enumerate}
\item[(i)]
For every $\mu\in{\cal P}(K)$, the set $\omega(\mu)$ is nonempty, compact, and invariant under $({\cal R}_\tau)_\#$. In particular, $\omega(\mu)$ is a statistical attractor in the sense of Definition~\ref{DefstatAtt}.

\item[(ii)]
The following inclusions hold:
\begin{equation}\label{inclusionsRG}
{\cal E}_{\cal R}\subset{\mathscr M}_{\cal R}\subset
{\mathscr M}_{\cal R}^\#
=
\operatorname{Inv}({\cal R})
=
\overline{\operatorname{co}}\bigl({\cal E}_{\cal R}\bigr),
\end{equation}
where
$$
{\mathscr M}_{\cal R}:=\bigcup_{y\in K}\omega(\delta_y)
=\bigcup_{y\in K}{\mathscr M}_{\cal R}(y),
\qquad
{\mathscr M}_{\cal R}^\#:=\bigcup_{\mu\in{\cal P}(K)}\omega(\mu),
$$
$$
\operatorname{Inv}({\cal R})
:=
\bigl\{
\eta\in{\cal P}(K):({\cal R}_\tau)_\#\eta=\eta,\ \forall \tau\ge0
\bigr\},
$$
and
$$
{\cal E}_{\cal R}
:=
\bigl\{
\mu\in\operatorname{Inv}({\cal R}):\omega(\delta_y)=\{\mu\}
\ \text{for } \mu\text{-a.e. } y
\bigr\}.
$$
Moreover, ${\mathscr M}_{\cal R}^\#$ is a compact and convex subset of ${\cal P}(K)$, whereas ${\mathscr M}_{\cal R}$ is, in general, neither closed nor convex.
\end{enumerate}
\end{theorem}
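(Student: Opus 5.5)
The plan is the classical Krylov--Bogoliubov argument combined with ergodic decomposition. Throughout, write $P_\tau:=({\cal R}_\tau)_\#$ acting on ${\cal P}(K)$. Since $K$ is compact, ${\cal P}(K)$ is weakly compact and metrizable, and each $P_\tau$ is weakly continuous: if $\nu_n\rightharpoonup\nu$ then $\langle P_\tau\nu_n,F\rangle=\langle\nu_n,F\circ{\cal R}_\tau\rangle\to\langle P_\tau \nu,F\rangle$, because $F\circ{\cal R}_\tau\in C(K)$. For the Birkhoff averages $\mu_R(\mu)$ I need joint measurability and continuity of $\tau\mapsto F({\cal R}_\tau y)$. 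This follows from strong continuity of the RG flow, so $\mu_R(\mu)$ is a well-defined element of ${\cal P}(K)$.

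\emph{Step (i).} Nonemptiness and compactness of $\omega(\mu)$ are immediate. It is a decreasing intersection of nonempty closed subsets of the compact set ${\cal P}(K)$, so Cantor's intersection property applies. Invariance is the key computation. For fixed $\tau\ge0$,
$$
P_\tau\mu_R(\mu)-\mu_R(\mu)=\frac1R\Big(\int_R^{R+\tau}P_s\mu\,ds-\int_0^{\tau}P_s\mu\,ds\Big),
$$
which has total variation at most $2\tau/R\to0$. If $\mu_{R_n}(\mu)\rightharpoonup\eta$ with $R_n\to\infty$, continuity of $P_\tau$ gives $P_\tau\mu_{R_n}(\mu)\rightharpoonup P_\tau\eta$, so $P_\tau\eta=\eta$. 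Hence every point of $\omega(\mu)$ is in fact a fixed point. In particular $P_\tau(\omega(\mu))=\omega(\mu)$, so forward invariance holds. Finally $\mu\in{\cal B}(\omega(\mu))$ trivially, so the basin is nonempty and Definition~\ref{DefstatAtt} is satisfied.

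\emph{Step (ii).} The argument in (i) already gives ${\mathscr M}_{\cal R}^\#\subset\operatorname{Inv}({\cal R})$. The converse holds because for $\eta$ invariant $\mu_R(\eta)=\eta$ for all $R$, so $\omega(\eta)=\{\eta\}$. Thus ${\mathscr M}_{\cal R}^\#=\operatorname{Inv}({\cal R})$. This set is convex by linearity of $P_\tau$ and closed by continuity of $P_\tau$, hence compact. The inclusion ${\mathscr M}_{\cal R}\subset{\mathscr M}_{\cal R}^\#$ is trivial by taking $\mu=\delta_y$, and the identification with ${\mathscr M}_{\cal R}(y)$ is exactly \eqref{Diracid}. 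For ${\cal E}_{\cal R}\subset{\mathscr M}_{\cal R}$, take $\mu\in{\cal E}_{\cal R}$. The set of $y$ with $\omega(\delta_y)=\{\mu\}$ has full $\mu$-measure and is therefore nonempty, so any such $y$ exhibits $\mu\in{\mathscr M}_{\cal R}(y)$.

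The remaining identity $\operatorname{Inv}({\cal R})=\overline{\operatorname{co}}({\cal E}_{\cal R})$ is where the real work lies, and I expect it to be the main obstacle. I would proceed as follows.
\begin{enumerate}
\item Show that ${\cal E}_{\cal R}$ coincides with the set of ergodic invariant measures. If $\mu$ is ergodic, Birkhoff's ergodic theorem for the continuous-time flow, applied to a countable dense family $\{F_j\}\subset C(K)$ (available since $K$ is a compact metric space), gives $\frac1R\int_0^RF_j({\cal R}_s y)\,ds\to\langle\mu,F_j\rangle$ for $\mu$-a.e.\ $y$ and all $j$ simultaneously. Hence $\omega(\delta_y)=\{\mu\}$ for $\mu$-a.e.\ $y$. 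Conversely, if $\omega(\delta_y)=\{\mu\}$ $\mu$-a.e., then every invariant function has a.s.\ constant time averages, which forces ergodicity.
\item Use that the ergodic measures are exactly the extreme points of the compact convex set $\operatorname{Inv}({\cal R})$. This is the standard argument: an invariant set $A$ with $0<\mu(A)<1$ splits $\mu$ into two conditioned invariant measures, so a non-ergodic $\mu$ is not extreme.
\item Apply Krein--Milman in the locally convex space of signed measures, exactly as in Step~1 of the proof of Theorem~\ref{M0M}. Alternatively, apply Choquet's ergodic decomposition, which yields the same conclusion.
\end{enumerate}
The delicate points are the passage from separate to joint continuity of $(\tau,y)\mapsto{\cal R}_\tau y$ needed for measurability, and the countable-dense-family argument in the first item of the list above.

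The last claim is that ${\mathscr M}_{\cal R}$ is in general neither closed nor convex. I would settle it by examples. For nonconvexity, take a flow with two distinct fixed points $a,b$ and no orbit whose time averages converge to $\frac12(\delta_a+\delta_b)$. Then $\delta_a,\delta_b\in{\mathscr M}_{\cal R}$ but their midpoint is not; a gradient flow on an interval whose endpoints are attracting fixed points works. Nonclosedness can be illustrated by a family of periodic orbits accumulating on a heteroclinic cycle: the limit of the orbit measures is attained by no single orbit. I would note this only briefly, since the statement only asserts it in general.
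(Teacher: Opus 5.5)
Your proposal is correct and follows essentially the same route as the paper. The shared steps are the $2\tau/R$ shift estimate showing that every Birkhoff limit point is invariant, the identity $\omega(\eta)=\{\eta\}$ for invariant $\eta$ giving ${\mathscr M}_{\cal R}^\#=\operatorname{Inv}({\cal R})$, the invariant-set splitting showing that extreme points are ergodic, and Krein--Milman. Your argument is in fact slightly more complete than the paper's in two places: the paper identifies the ergodic measures with ${\cal E}_{\cal R}$ without the Birkhoff and countable-dense-family argument you supply, and it only asserts, without examples, that ${\mathscr M}_{\cal R}$ can fail to be closed or convex. The paper also proves ergodic $\Rightarrow$ extreme via Radon--Nikodym, a direction you rightly do not need for the identity.
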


\begin{proof}
We first show that $\omega(\mu)$ is nonempty and compact for every $\mu\in{\cal P}(K)$.
Since $K$ is compact and forward-invariant, one has
$$
\operatorname{supp}\bigl(({\cal R}_\tau)_\#\mu\bigr)\subset K,
\qquad \forall \tau\ge0,
$$
and therefore also
$$
\operatorname{supp}\bigl(\mu_R(\mu)\bigr)\subset K,
\qquad \forall R>0.
$$
Hence the family $\{\mu_R(\mu):R>0\}$ is tight. By Prokhorov's theorem, it is relatively compact in ${\cal P}(K)$, so every sequence $\mu_{R_n}(\mu)$ admits a weakly convergent subsequence. It follows that
$$
\omega(\mu)
=
\bigcap_{R>0}
\overline{\left\{
\mu_{R'}(\mu):R'\ge R
\right\}}
$$
is nonempty. Since it is an intersection of closed subsets of the compact space ${\cal P}(K)$, it is compact as well.

We next show invariance under $({\cal R}_\tau)_\#$.
An equivalent description of the $\omega$-limit set is
$$
\omega(\mu)
=
\left\{
\eta\in{\cal P}(K):
\exists\,R_n\to\infty,\ \mu_{R_n}(\mu)\rightharpoonup\eta
\right\}.
$$
Fix $\tau>0$ and $\eta\in\omega(\mu)$. Then there exists $R_n\to\infty$ such that
$
\mu_{R_n}(\mu)\rightharpoonup\eta.
$
By continuity of the pushforward map,
$
({\cal R}_\tau)_\#\eta
=
\lim_{n\to\infty}({\cal R}_\tau)_\#\mu_{R_n}(\mu).
$
Using the semigroup property,
$$
({\cal R}_\tau)_\#\mu_{R_n}(\mu)
=
\frac1{R_n}\int_0^{R_n}({\cal R}_{\tau+s})_\#\mu\,ds
=
\mu_{R_n}(\mu)+E_n,
$$
where
$$
E_n
:=
\frac1{R_n}\int_{R_n}^{R_n+\tau}({\cal R}_s)_\#\mu\,ds
-
\frac1{R_n}\int_0^\tau({\cal R}_s)_\#\mu\,ds.
$$
For every $F\in C_b(H;\mathbb R)$,
$$
\bigl|\langle E_n,F\rangle\bigr|
\le
\frac{2\tau}{R_n}\|F\|_\infty
\longrightarrow 0.
$$
Hence
$$
({\cal R}_\tau)_\#\eta
=
\lim_{n\to\infty}\mu_{R_n}(\mu)
=
\eta.
$$
Thus every element of $\omega(\mu)$ is invariant under $({\cal R}_\tau)_\#$, and in particular $\omega(\mu)$ is forward-invariant.
Moreover, by definition, the measure $\mu$ belongs to the Birkhoff basin of $\omega(\mu)$. Therefore $\omega(\mu)$ is a statistical attractor in the sense of Definition~\ref{DefstatAtt}. This proves (i).

We now turn to (ii). The inclusions
$$
{\cal E}_{\cal R}\subset{\mathscr M}_{\cal R}\subset{\mathscr M}_{\cal R}^\#
$$
follow directly from the definitions.
By part (i), every element of ${\mathscr M}_{\cal R}^\#$ is invariant, hence
$$
{\mathscr M}_{\cal R}^\#\subset \operatorname{Inv}({\cal R}).
$$
Conversely, if $\eta\in\operatorname{Inv}({\cal R})$, then
$$
({\cal R}_\tau)_\#\eta=\eta,
\qquad \forall \tau\ge0,
$$
so that
$
\mu_R(\eta)=\eta,~\forall R>0.
$
Therefore
$
\omega(\eta)=\{\eta\},
$
and thus
$$
\operatorname{Inv}({\cal R})\subset{\mathscr M}_{\cal R}^\#.
$$
This proves
$$
{\mathscr M}_{\cal R}^\#=\operatorname{Inv}({\cal R}).
$$

The set $\operatorname{Inv}({\cal R})$ is compact because it is closed in the compact space ${\cal P}(K)$.
It is also convex: if $\eta_1,\eta_2\in\operatorname{Inv}({\cal R})$ and $\theta\in[0,1]$, then
$$
({\cal R}_\tau)_\#\bigl(\theta\eta_1+(1-\theta)\eta_2\bigr)
=
\theta({\cal R}_\tau)_\#\eta_1+(1-\theta)({\cal R}_\tau)_\#\eta_2
=
\theta\eta_1+(1-\theta)\eta_2.
$$
Hence ${\mathscr M}_{\cal R}^\#=\operatorname{Inv}({\cal R})$ is compact and convex.

It remains to justify the identity
$$
\operatorname{Inv}({\cal R})=\overline{\operatorname{co}}\bigl({\cal E}_{\cal R}\bigr).
$$
We first show that ergodic invariant measures are extreme points of
$\operatorname{Inv}({\cal R})$.
Let $\mu\in\operatorname{Inv}({\cal R})$ be ergodic, and suppose
$$
\mu=\theta\mu_1+(1-\theta)\mu_2,
\qquad
\mu_1,\mu_2\in\operatorname{Inv}({\cal R}),
\qquad
\theta\in(0,1).
$$
Then $\mu_i\ll\mu$, so by the Radon--Nikodym theorem there exist
$f_i=\frac{d\mu_i}{d\mu}\ge0$ such that
$$
\int f_i\,d\mu=1,
\qquad
1=\theta f_1+(1-\theta)f_2
\quad \mu\text{-a.e.}
$$
For every Borel set $A\subset H$ and every $\tau\ge0$,
$$
\int_A f_i\,d\mu
=
\mu_i(A)
=
\mu_i({\cal R}_\tau^{-1}A)
=
\int_{{\cal R}_\tau^{-1}A}f_i\,d\mu
=
\int_A f_i\circ{\cal R}_\tau\,d\mu,
$$
where we used invariance of both $\mu_i$ and $\mu$.
Hence
$$
f_i\circ{\cal R}_\tau=f_i
\qquad \mu\text{-a.e.}
$$
By ergodicity of $\mu$, every invariant function in $L^1(\mu)$ is $\mu$-a.e. constant. Thus $f_i\equiv c_i$ $\mu$-a.e. Since $\int f_i\,d\mu=1$, one gets $c_i=1$, so $\mu_i=\mu$. Therefore $\mu$ is an extreme point of $\operatorname{Inv}({\cal R})$.
Conversely, we show that every extreme point of $\operatorname{Inv}({\cal R})$ is ergodic.
We argue by contraposition.
Assume that $\mu\in\operatorname{Inv}({\cal R})$ is not ergodic. Then there exists an invariant Borel set $B_\star$ such that
$$
0<\mu(B_\star)<1.
$$
Define
$$
\mu_1(B):=\frac{\mu(B_\star\cap B)}{\mu(B_\star)},
\qquad
\mu_2(B):=\frac{\mu(B_\star^c\cap B)}{\mu(B_\star^c)}.
$$
Since $B_\star$ is invariant, both $\mu_1$ and $\mu_2$ are invariant. Moreover,
$$
\mu
=
\mu(B_\star)\mu_1+\mu(B_\star^c)\mu_2
=
\mu(B_\star)\mu_1+\bigl(1-\mu(B_\star)\bigr)\mu_2,
$$
which is a nontrivial convex decomposition. Therefore $\mu$ is not an extreme point.

We have thus shown that the extreme points of $\operatorname{Inv}({\cal R})$ are precisely the ergodic invariant measures, namely the elements of ${\cal E}_{\cal R}$. Since $\operatorname{Inv}({\cal R})$ is compact and convex, the Krein--Milman theorem gives
$$
\operatorname{Inv}({\cal R})
=
\overline{\operatorname{co}}\bigl({\cal E}_{\cal R}\bigr).
$$
This concludes the proof.
\end{proof}

The interest of Theorem~\ref{RGattractors} is not merely that it recovers classical facts from ergodic theory in the present setting. Its main contribution is to show that, once a semigroup structure is imposed on the regularizations, the sets ${\mathscr M}_{\cal R}(y)$ are no longer arbitrary accumulation sets: they become organized by invariant and ergodic measures of the RG dynamics, and hence acquire a genuine dynamical meaning. In contrast with the set-theoretic flexibility of Part~I, where essentially all probability measures in ${\mathscr M}_0$ may be realized, the semigroup framework produces a more rigid and therefore more informative subclass, governed by attractors, invariant measures, and basin structure. This is precisely what makes the RG viewpoint interesting: it does not merely rephrase $\SP$, but connects it to the deeper architecture of dynamical systems.

A simple illustration of these results is given in Section~\ref{RGreg}.
The inclusions in \eqref{inclusionsRG} take different forms depending on the underlying dynamical system.
The main point is that, once a semigroup structure is imposed, the measures selected in the inviscid limit are no longer arbitrary: they are constrained by the recurrent and ergodic structure of the RG flow.
As a first consequence, one obtains the following nonexhaustive list of scenarios
(see the dynamical-systems definitions recalled in Appendix~\ref{PFTP}).

\begin{corollary}\label{coroDS}
Let $({\cal R}_\tau)_{\tau\ge0}$ be an RG flow as in Theorem~\ref{RGattractors}.
If $\mu\in{\cal E}_{\cal R}$ is non-Dirac, then for $\mu$-a.e.\ $y$ one has
$$
{\mathscr M}_{\cal R}(y)=\{\mu\},
$$
and therefore the corresponding regularization exhibits Strong-$\SP$ with selected measure $\mu$.

Typical scenarios include:
\begin{enumerate}
    \item {\bf Uniquely ergodic:}
    $$
    {\cal E}_{\cal R}
    =
    {\mathscr M}_{\cal R}
    =
    {\mathscr M}_{\cal R}^{\#}
    =
    \operatorname{Inv}({\cal R})
    =
    \{\mu\}.
    $$
    In particular, if $\mu$ is non-Dirac, then Strong-$\SP$ holds for every $y\in K$.

    \item {\bf Axiom A:}
    Assume that the nonwandering set $\Omega$ decomposes into finitely many basic sets $\Lambda_i$ (compact, invariant, and topologically transitive). Then
    $$
    {\mathscr M}_{\cal R}
    \subset
    \bigcup_i \operatorname{Inv}\!\left(\left.{\cal R}\right|_{\Lambda_i}\right)
    \subset
    \operatorname{Inv}({\cal R})
    =
    \overline{\operatorname{co}}
    \left(
    \bigcup_i {\cal E}\!\left(\left.{\cal R}\right|_{\Lambda_i}\right)
    \right).
    $$
    Under the standard generic-point results available for Axiom~A basic sets, the first inclusion is in fact an equality.

    \item {\bf Morse--Smale systems:}
    The set ${\mathscr M}_{\cal R}$ consists of ergodic measures supported on attracting equilibria and attracting periodic orbits. In particular,
    $$
    {\mathscr M}_{\cal R}\subset {\cal E}_{\cal R}\subset \operatorname{Inv}({\cal R}).
    $$
    If, in addition, the system is gradient (that is, it has no periodic orbit), then ${\mathscr M}_{\cal R}$ contains only Dirac masses at equilibria, and hence one cannot obtain $\SP$.
\end{enumerate}
\end{corollary}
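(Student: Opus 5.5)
The main claim, that a non-Dirac $\mu\in{\cal E}_{\cal R}$ gives ${\mathscr M}_{\cal R}(y)=\{\mu\}$ for $\mu$-a.e.\ $y$, will be derived directly from the definition of ${\cal E}_{\cal R}$ in Theorem~\ref{RGattractors}. That set consists of the invariant $\mu$ with $\omega(\delta_y)=\{\mu\}$ for $\mu$-a.e.\ $y$. By \eqref{MRy} and \eqref{Diracid} we have $\omega(\delta_y)={\mathscr M}_{\cal R}(y)={\mathscr M}(\gamma)$ for the orbit $\gamma(\tau)={\cal R}_\tau y$. Definition~\ref{SPdef5} then gives Strong-$\SP$ whenever $\mu$ is non-Dirac.

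To make this consistent with the ergodic-theoretic reading of ${\cal E}_{\cal R}$ used in the proof of Theorem~\ref{RGattractors}, I would also check that ergodic measures are generic. Take a countable convergence-determining family $\{F_j\}\subset C_b(K)$, which exists because $K$ is compact metric. Apply Birkhoff's ergodic theorem for the continuous-time flow to each $F_j$, and intersect the countably many full-measure sets. On this intersection the Cesàro averages $\frac1R\int_0^R F_j({\cal R}_s y)\,ds$ converge to $\langle\mu,F_j\rangle$ for every $j$. Hence $\omega(\delta_y)=\{\mu\}$ for $\mu$-a.e.\ $y$.

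For the scenarios, I would proceed case by case.

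\textbf{Uniquely ergodic case.} Here $\operatorname{Inv}({\cal R})=\{\mu\}$. By Theorem~\ref{RGattractors}(i), each $\omega(\delta_y)$ is a nonempty subset of $\operatorname{Inv}({\cal R})$, so $\omega(\delta_y)=\{\mu\}$ for every $y\in K$. All the sets in \eqref{inclusionsRG} therefore collapse to $\{\mu\}$, and the first claim gives Strong-$\SP$ for every $y$.

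\textbf{Axiom A case.} Every element of $\omega(\delta_y)$ is invariant, so it is supported on the nonwandering set $\Omega$ by Poincaré recurrence. I then need to show that for each $y$ the limiting measures live on a single basic set $\Lambda_i$. I would use the spectral decomposition and the filtration (no-cycle) property: the orbit of $y$ converges to $W^s(\Lambda_i)$ for a unique $i$. This gives the first inclusion. The remaining identity follows from Krein--Milman, as in Theorem~\ref{RGattractors}, since ergodic measures are supported on some $\Lambda_i$. The equality claim rests on Bowen's generic-point results for basic sets, and I would cite it rather than prove it.

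\textbf{Morse--Smale case.} Every orbit converges to a hyperbolic critical element, either an equilibrium or a periodic orbit. The Birkhoff average then converges to the Dirac mass at the equilibrium, or to the uniform measure on the periodic orbit in time parametrization. Each of these is ergodic. In the gradient case only Dirac masses remain, so Strong-$\SP$ is excluded.

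One point I would state carefully: for generic $y$ the limit lies on an attracting element, but $y$ sitting on stable manifolds of saddles gives saddle measures. So the inclusion ${\mathscr M}_{\cal R}\subset{\cal E}_{\cal R}$ is the robust statement, and "attracting" holds off a meagre set.

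The main obstacle is the Axiom A inclusion. Showing that an individual orbit's empirical measures cannot mix several basic sets requires the no-cycle structure and a shadowing-type argument. Everything else is a bookkeeping consequence of Theorem~\ref{RGattractors}.
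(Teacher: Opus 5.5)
Your treatment of the main claim is the paper's own. The paper gives no separate proof of Corollary~\ref{coroDS}. The first assertion is read off directly from the definition of ${\cal E}_{\cal R}$ in Theorem~\ref{RGattractors}, together with $\omega(\delta_y)={\mathscr M}_{\cal R}(y)$ and Definition~\ref{SPdef5}. The scenarios are simply asserted as standard dynamical-systems facts. Your case-by-case arguments therefore go beyond what the paper supplies. So does your Birkhoff check over a countable dense family in $C(K)$, which shows that ergodic measures lie in ${\cal E}_{\cal R}$; the proof of the theorem takes that identification for granted, and this is exactly the direction you need for the Axiom~A identity.

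\textbf{Axiom~A case.} Here you have misjudged the obstacle. The filtration/no-cycle property is not among the hypotheses, so an argument built on it proves the first inclusion only for Axiom~A flows without cycles. Neither no-cycle nor shadowing is needed; the single-basic-set claim follows from connectedness.
\begin{enumerate}
\item For a continuous semiflow on compact $K$, $\omega(y)$ is a decreasing intersection of the compact connected sets $\overline{\{{\cal R}_\tau y:\tau\ge T\}}$. Hence it is nonempty, compact and connected.
\item Every point of $\omega(y)$ is nonwandering.
\item Since $\Omega$ is a finite disjoint union of compact sets $\Lambda_i$, connectedness forces $\omega(y)\subset\Lambda_i$ for a single $i$.
\item Every $\eta\in\omega(\delta_y)$ is supported in $\omega(y)$. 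Indeed, the orbit eventually stays in any open neighbourhood of $\omega(y)$, so the empirical measures give vanishing mass to the complement, and the open-set Portmanteau inequality passes this to the limit.
\item Each such $\eta$ is invariant by Theorem~\ref{RGattractors}(i), hence $\eta\in\operatorname{Inv}({\cal R}|_{\Lambda_i})$.
\end{enumerate}
The same argument gives ``every orbit converges to a single critical element'' in the Morse--Smale case, which you currently only assert.

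\textbf{Morse--Smale case.} Your caveat is correct, and it actually corrects the statement. Since ${\mathscr M}_{\cal R}$ is a union over all $y\in K$, taking $y$ to be a saddle equilibrium (or a point on a saddle's stable manifold) produces a Dirac mass at a non-attracting equilibrium. So in general only ${\mathscr M}_{\cal R}\subset{\cal E}_{\cal R}$ survives, and ``attracting'' holds only off a meagre set. In the gradient case, also note that each $\omega(\delta_y)$ is a singleton because the orbit converges. This excludes Weak-$\SP$ as well as Strong-$\SP$, which is what the paper's ``one cannot obtain $\SP$'' asserts.
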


It is natural to ask whether the same degree of flexibility as in Theorem~\ref{M0M} can still be achieved within the RG framework.
Theorem~\ref{RGattractors} strongly suggests that this is a more rigid problem: once a semigroup structure is imposed, the admissible limiting measures must arise as invariant measures of a single dynamical system, and are therefore subject to genuine geometric and ergodic constraints.
In particular, producing non-Dirac limits requires recurrent dynamics inside ${\mathscr S}_0$, which is far from automatic on a compact set with boundary.
At the very least, in the conservative case one expects the generator $g$ to satisfy
$
\operatorname{div} g = 0~\text{in }{\mathscr S}_0,~
g\cdot \mathbf n = 0~
\text{on }\partial{\mathscr S}_0,
$
so that the flow preserves volume and remains tangent to the boundary.
Whether such constraints still allow one to recover the full flexibility of Part~I remains open.
More precisely, do we have
\begin{equation}
\bigcup_{\cal R} {\mathscr M}_{\cal R} = {\mathscr M}_0,
\end{equation}
where ${\cal R}$ ranges over all RG flows in the sense of Definition~\ref{RGflow}?

\section{Application to systems with an isolated singularity}\label{RGreg}

In light of Section~\ref{CNforSP}, a necessary condition for observing $\SP$ is that the inviscid system possess a singular set and that the initial condition $x_0$ belongs to its stable set. We therefore consider the simplest nontrivial situation, namely an inviscid vector field with an isolated non-Lipschitz singularity at the origin. This setting illustrates concretely how the abstract framework of Section~\ref{par1_4_3_2}, together with Theorem~\ref{RGattractors}, can be applied in practice, while also suggesting new directions opened up by the RG viewpoint. More generally, the same strategy extends to singular sets; see Theorem~\ref{CN}.

The basic idea is to regularize the vector field only inside a controlled neighborhood of the singularity, with this neighborhood shrinking to zero in the inviscid limit. The essential information is then not the detailed regularized dynamics itself, but the way trajectories exit the shrinking region and are subsequently propagated by the inviscid flow.

We assume that the inviscid vector field $f_0\in C_b(\mathbb R^d;\mathbb R^d)$ is Lipschitz on $\mathbb R^d\setminus\{0\}$, satisfies $f_0(0)=0$, and has an isolated non-Lipschitz singularity at the origin, assumed to be of expanding type. Typical examples are systems of the form
$$
f_0(x)=|x|^\alpha F_0\!\left(\frac{x}{|x|}\right),
\quad
$$
for $\alpha \in (0,1)$;
see \cite{Drivas21,Drivas24}. 
To avoid an unnecessary discussion of the dependence on $x_{0}$ and on the observation time $t$ (see Theorem~\ref{CN}), we simply fix $x_{0}=0$ and $t=1$.

For $\tau\ge0$, $\rho_\tau\ge0$, and $x\in\mathbb R^d$, we define the regularized vector field $f$ in~\eqref{RGODE:x}, by
\begin{equation}\label{reguiso}
f(x,\tau,y)=
\begin{cases}
h(x,\tau,y), & |x|\le \rho_\tau,\\
f_0(x),      & |x|\ge \rho_\tau,
\end{cases}
\qquad
\lim_{\tau\to\infty}\rho_\tau=0,
\end{equation}
where, for every $(\tau,y)$, the map $x\mapsto h(x,\tau,y)$ is Lipschitz and is chosen so that $f(\cdot,\tau,y)$ is continuous across the boundary of the ball $B_{\rho_\tau}$. 
{We may also write the second autonomous equation \eqref{RGODE:gamma} explicitly, but we do not need it since it is fully described
by the associated nonlinear semigroup  $(\mathcal R_\tau)_{\tau\ge0}$; see \eqref{RGODE:coupling} and \eqref{RGconstr} below.}

The key quantity is the \emph{last exit time} from $B_{\rho_\tau}$, rather than the detailed flow inside.
Let $X_\tau(s)$ denotes the solution at time $s\ge0$ of
\begin{equation}\label{inside}
\dot X=h(X,\tau,y),
\qquad
X(0)=0,
\end{equation}
with the dependence on $y$ left implicit. Define the last exit time from $B_{\rho_\tau}$ by
$$
t^\star_\tau
:=
\sup\bigl\{
s\in[0,1]:\|X_\tau(s)\|=\rho_\tau
\bigr\},
$$
with the convention $t^\star_\tau=+\infty$ if $\|X_\tau(s)\|<\rho_\tau$ for all $s\in[0,1]$. In other words, $t^\star_\tau$ is the last time before $t=1$ at which the inner trajectory meets the boundary $\partial B_{\rho_\tau}$.

Let $s \mapsto Z_\tau(s)$ denotes the solution of the inviscid system with the \emph{modified initial condition} induced by the last exit:
\begin{equation}\label{reguP0sol}
\dot Z=f_0(Z),
\qquad
s\ge t^\star_\tau,
\qquad
Z_\tau(t^\star_\tau)=X_\tau(t^\star_\tau).
\end{equation}
Then, from \eqref{RGODE}, one has
\begin{equation}\label{RGconstr}
{\cal R}_\tau y=
\begin{cases}
X_\tau(1), & t^\star_\tau=+\infty,\\
Z_\tau(1), & t^\star_\tau\le1.
\end{cases}
\end{equation}
A sketch of this construction is shown in Figure~\ref{RG_exit}. The problem is therefore reduced to understanding the asymptotic behavior of \eqref{RGconstr} as $\tau\to\infty$.

\begin{figure}[htbp]
\centerline{\includegraphics[scale=0.6]{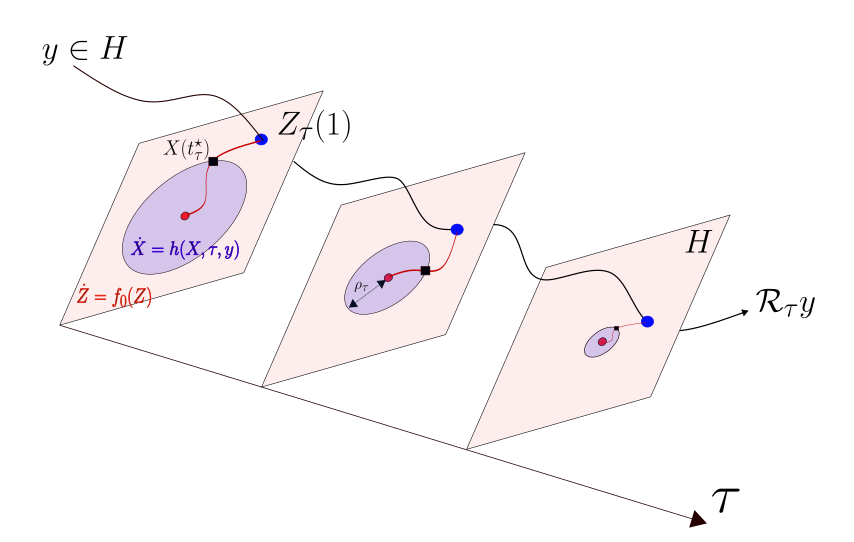}}
\caption{Regularization near an isolated singularity and matching with the outer inviscid flow at the last exit time $t^\star_\tau$.}
\label{RG_exit}
\end{figure}

The key point is that, in this construction, the detailed dynamics inside $B_{\rho_\tau}$ matters only through the exit configuration it produces. Once the trajectory has left the shrinking regularization region, the subsequent evolution is governed entirely by the inviscid flow. In this sense, the selection mechanism is encoded by the asymptotic behavior of the exit time and exit location.

In full generality, one should not expect $t^\star_\tau$ to have a limit as $\tau\to\infty$. We first consider the trapped regime, namely
\begin{equation}\label{trap}
\forall \tau\ge0,
\qquad
t^\star_\tau=+\infty.
\end{equation}
In that case, the trajectory never reaches $\partial B_{\rho_\tau}$ before time $t=1$, and \eqref{RGconstr} gives
$$
{\cal R}_\tau y = X_\tau(1).
$$
Since the entire motion remains inside $B_{\rho_\tau}$ and $\rho_\tau\to0$, the trapped regime forces
$$
{\cal R}_\tau y \longrightarrow 0
\qquad\text{as }\tau\to\infty.
$$
In particular, the RG dynamics must admit $0$ as an invariant state and must attract the corresponding orbit toward it. Otherwise, the regularization encoded by ${\cal R}_\tau$ cannot be realized.

The more interesting case is when the trapped regime does not occur. Assume that
\begin{equation}\label{notrap}
\forall \tau\ge0,
\qquad
t^\star_\tau\le1.
\end{equation}
Let ${\cal L}_y$ denote the set of accumulation points, as $\tau\to\infty$, of the outer inviscid solutions \eqref{reguP0sol} observed at time $t=1$:
\begin{equation}\label{LyDef}
{\cal L}_y
:=
\bigcap_{T\ge0}
\overline{
\left\{
Z_\tau(1):\tau\ge T
\right\}
}
\subset {\mathscr S}_0,
\end{equation}
where ${\mathscr S}_0$ is defined in \eqref{S0}. Since \eqref{RGconstr} gives ${\cal R}_\tau y=Z_\tau(1)$ in this regime, one immediately obtains
\begin{equation}\label{omegay_equals_Ly}
\omega(y)
:=
\bigcap_{T\ge0}
\overline{
\left\{
{\cal R}_\tau y:\tau\ge T
\right\}
}
=
{\cal L}_y
\subset {\mathscr S}_0.
\end{equation}
This identity is useful because it shows that one does not need the precise form of the regularization \eqref{reguiso}; what matters is the limit set of the semigroup generated by $g$, subject only to the constraint that it lie in ${\mathscr S}_0$.

Note that the state-space $\omega$-limit set $\omega(y)$ is not the same object as the measure-space limit set $\omega(\delta_y)$ from Theorem~\ref{RGattractors}. By construction, the latter consists of invariant probability measures supported on $\omega(y)$. Consequently, if
$$
\omega(\delta_y)=\{\mu\}
\qquad\text{with }\mu \text{ non-Dirac},
$$
then \eqref{reguiso} yields Strong-$\SP$ for the RG regularization associated with $y$.

There is also an intermediate possibility in which neither \eqref{trap} nor \eqref{notrap} holds uniformly in $\tau$. A simple toy example is obtained by taking
$$
h(X,\tau,y)=\sin^2(\pi\tau),
$$
so that the inner dynamics oscillates rapidly with $\tau$. In that case, whenever $\sin(\pi\tau)\neq0$, one has an exit time of order
$$
t^\star_\tau \sim \frac{\rho_\tau}{\sin^2(\pi\tau)},
$$
whereas for $\tau\in\mathbb N$ the trajectory remains trapped at the origin and $t^\star_\tau=+\infty$. Thus both behaviors may coexist along different subsequences.

Independently, the notion of \emph{extremal} solution, often cited as the one selected in inviscid limits, has a simple formulation in the present context.

\begin{definition}[Extremal solutions]\label{extremalsol}
An extremal solution of \eqref{P_0} is a solution whose value at time $t$ belongs to $\partial{\mathscr S}_0$.
\end{definition}

The immediate consequence is that all regularizations for which the exit time tends to zero as $\tau\to\infty$ select only extremal states:
$$
\bigcup_{y\in{\cal Y}_0}\omega(y)\subset\partial{\mathscr S}_0,
\qquad
{\cal Y}_0:=\left\{
y:\lim_{\tau\to\infty}t^\star_\tau=0
\right\}.
$$

\paragraph{A simple example}

We consider the classical one-dimensional problem
$$
\dot x=f_0(x),
\qquad
x(0)=0,
\qquad
f_0(x)=x^{1/3},
$$
which has explicit solutions; see also Section~\ref{Ex1}. Observing the system at time $t=1$, one obtains
$$
{\mathscr S}_0
=
\left[
-\left(\frac23\right)^{3/2},
\left(\frac23\right)^{3/2}
\right].
$$

Our aim is to build an explicit system of the form \eqref{RGODE}. Choose a monotone decreasing generator $g$ with a unique equilibrium at $\chi$, namely
$$
G(Y)=\chi-Y,
\qquad
{\cal R}_\tau y=\chi+(y-\chi)e^{-\tau}.
$$
Let $t^\star_\tau$ be the last exit time from the interval $(-\rho_\tau,\rho_\tau)$ for the inner system \eqref{inside}. In this one-dimensional setting, once the trajectory leaves $(-\rho_\tau,\rho_\tau)$, it cannot re-enter for the class of regularizations considered here.

Assume that \eqref{notrap} holds, namely that the trajectory exits before the observation time $t=1$. Then one must solve \eqref{reguP0sol}. If $X_\tau(t^\star_\tau)=x_\tau$, the corresponding inviscid solution is
$$
Z_\tau(t)
=
\operatorname{sign}(x_\tau)
\left(
|x_\tau|^{2/3}
+\frac23(t-t^\star_\tau)
\right)^{3/2}.
$$
Since \eqref{RGconstr} imposes
$$
{\cal R}_\tau y = Z_\tau(1),
$$
the limit $\tau\to\infty$ forces a relation between the asymptotic exit time and the selected state $\chi$. Because ${\cal R}_\tau y\to\chi$, one obtains
\begin{equation}\label{T_star}
1-\frac32 |\chi|^{2/3}
=
\lim_{\tau\to\infty} t^\star_\tau
\in[0,1].
\end{equation}
Accordingly, extremal solutions correspond to the regime $t^\star_\tau\to0$, i.e. $\chi = \pm (\frac23)^{3/2}$ whereas trapped solutions at the origin correspond to the limiting regime $t^\star_\tau\to1$, i.e. $\chi=0$.

Thus, for each selected state
$$
\chi\in
{\mathscr S}_0=
\left[
-\left(\frac23\right)^{3/2},
\left(\frac23\right)^{3/2}
\right],
$$
one obtains a simple RG dynamical system ${\cal R}_\tau$ selecting $\chi$.

That only Dirac masses are selected here is not a defect of this particular
construction: it is a one-dimensional rigidity. A continuous autonomous flow
on an interval is necessarily gradient-like: it has no periodic orbit, every
$\omega$-limit set reduces to equilibria, and hence every ergodic invariant
measure is a Dirac mass at an equilibrium. By Corollary~\ref{coroDS}
(Morse--Smale, gradient case), the sets ${\mathscr M}_{\cal R}(y)$ then
contain only such Dirac masses, so that autonomous RG regularizations acting
on an interval can reach only the extreme points
${\cal E}=\{\delta_x\}_{x\in{\mathscr S}_0}$ of $\mathscr M_0$, and never the
convex combinations required for $\SP$. The obstruction is dimensional, not
dynamical: Example~\ref{Ex1} shows that nonautonomous regularizations of the
same inviscid system do produce $\SP$, while in higher-dimensional analogues
the availability of recurrent sets allows Strong-$\SP$ regularizations with
measures supported, for instance, on limit cycles.

Within this rigid one-dimensional setting, the RG viewpoint nevertheless
retains genuine content: it determines \emph{which} Dirac mass is selected,
and organizes the answer through the basin structure of the generator.

For the same inviscid system, one may produce a regularization with $n$ distinct classes, not necessarily of comparable size. A different choice of RG flow may lead to a completely different scenario.

Assume that the RG semigroup generator $g$ vanishes at several points within the interval
$$
\left[
-\left(\frac23\right)^{3/2},
\left(\frac23\right)^{3/2}
\right].
$$
Suppose that $g$ has $n$ roots $\rho_k$ separated by $n-1$ saddle points $\sigma_k$:
$$
-\left(\frac23\right)^{3/2}
\le
\rho_1<\sigma_1<\rho_2<\sigma_2<\cdots<\sigma_{n-1}<\rho_n
\le
\left(\frac23\right)^{3/2}.
$$
The corresponding RG flow then has $n$ basins of attraction, denoted ${\cal B}_k$, such that all regularizations with parameter $y\in{\cal B}_k$ select the same state $\rho_k$ in the inviscid limit. These basins are
$$
{\cal B}_k=(\sigma_{k-1},\sigma_k),
\qquad
k=1,\dots,n,
$$
with the convention $\sigma_0=-\infty$ and $\sigma_n=+\infty$. In this setting one typically obtains a Morse--Smale system satisfying Corollary~\ref{coroDS}. In higher-dimensional analogues, this mechanism may produce Strong-$\SP$ regularizations with measures supported on limit cycles, although this cannot occur in one dimension, where continuous autonomous dynamics is necessarily gradient-like.

An extreme version is obtained by taking $g|_{{\mathscr S}_0}=0$, with
$$
g(\phi)>0
\quad\text{for }\phi<-\left(\frac23\right)^{3/2},
\qquad
g(\phi)<0
\quad\text{for }\phi>\left(\frac23\right)^{3/2},
$$
so that ${\mathscr S}_0$ attracts all trajectories with $y\notin{\mathscr S}_0$. Then ${\cal R}_\tau$ has ${\mathscr S}_0$ as a global attractor, and every $\mu\in{\cal P}({\mathscr S}_0)$ is an invariant measure for ${\cal R}_\tau$. By Theorem~\ref{RGattractors}, one has in this case
$$
{\cal E}_{\cal R}
=
\{\delta_x\}_{x\in{\mathscr S}_0}
=
{\mathscr M}_{\cal R}
\subset
{\mathscr M}_{\cal R}^{\#}
=
\operatorname{Inv}({\cal R})
=
{\cal P}({\mathscr S}_0).
$$
This dynamical system is highly degenerate and is far from hyperbolic. Nevertheless, the example can be generalized to arbitrary systems by imposing
$$
g|_{{\mathscr S}_0}=0,
$$
which gives an alternative simple proof that
$$
{\cal E}_{\cal R}={\cal E}\subset{\mathscr M}_0;
$$
see \ref{EinM}.

To summarize, this example shows how a local singularity can generate a whole hierarchy of selection mechanisms once it is coupled to an RG dynamics. The role of the shrinking regularization region is to convert the singular behavior of the inviscid system into an exit problem, while the RG flow determines how the corresponding exit data are organized asymptotically. In this way, the problem of selecting inviscid states or statistics is recast as a dynamical question on attractors, invariant measures, and basins of attraction. The one-dimensional example only illustrates the most rigid case; in higher dimensions, the richness of the available dynamics is expected to make rigidity largely irrelevant and to restore a high degree of flexibility.

\section{Non-autonomous regularizations}\label{par1_4_3_3}
The previous framework is effective when the regularization curve is generated by a single semigroup orbit. However, not every regularization curve needs to arise in this way. To recover a semigroup structure in complete generality, one can lift the problem from the state space $H$ to the space of curves itself. This is the role of the Bebutov flow, which acts by translation in the regularization parameter.

\begin{definition}[Bebutov]\label{Bebutov}
The \emph{Bebutov flow} $(\mathscr{S}_\tau)_{\tau \in \mathbb{R}^+}$ is the semigroup acting on the topological space $\Gamma$, endowed with the compact-open topology. It is defined by left translations
\be 
\mathscr{S}_\tau: \Gamma \to \Gamma, \quad (\mathscr{S}_\tau \gamma)(\tau') := \gamma(\tau + \tau').
\de 
In addition, define the evaluation map
\be \label{evalmap}
\Psi_\tau: \Gamma \to H, \quad \Psi_\tau(\gamma) := \gamma(\tau).
\de 
One has $\gamma(\tau+\tau_0) = \Psi_{\tau_0}(\mathscr{S}_\tau \gamma)$, and for brevity we write $\Psi := \Psi_{\tau_0}$ for some arbitrary $\tau_0 \geq 0$. The curve $\gamma$ is thus regarded as the \emph{trace} of the dynamical system $(\mathscr{S}_\tau)$. Moreover, $(\mathscr{S}_\tau)$ is a continuous semigroup on $\Gamma$ for the compact-open topology.
\end{definition}

This construction corresponds to Theorem 7.2.1 in \cite{Lasota1994} and is a classical result in ergodic theory.
This shift acts on the space $\Gamma$ of regularization curves defined in \eqref{Gamma} and induces a corresponding pushforward semigroup $(\mathscr{S}_\tau)_\#$ on the space of probability measures $\mathcal{P}(\Gamma)$.

Finally, since one does not expect $\Gamma$ to be a Polish space care must be taken when considering ${\cal P}(\Gamma)$. However, since only the inviscid limit $\tau \to \infty$ ($\epsilon \to 0$ for \eqref{P_eps}) matters, one introduces the \emph{$\Psi$-topology}:
\begin{definition}[$\Psi$-topology]
Let $X$ and $Y$ be topological spaces with $Y$ Polish, let $\Psi: X \to Y$. We define the $\Psi$-topology, the coarsest topology that makes the pushforward map $\Psi_\#:{\cal P}(X) \to 
{\cal P}(Y)$ continuous:
 $$
 \nu_n \overset{\scriptscriptstyle \Psi}{\rightharpoonup} \nu \Longleftrightarrow \Psi_\# \nu_n \rightharpoonup \Psi_\# \nu.  
 $$ 
 It induces a quotient topology on ${\cal P}(X)\big/\!\sim$ via the equivalence relation:
 $$
 \nu_1 \sim \nu_2 \Longleftrightarrow \Psi_\# \nu_1 = \Psi_\# \nu_2.
 $$
\end{definition}

In our case $X=\Gamma$, $Y=H$, and $\Psi$ is the evaluation map
in (\ref{evalmap}).
In other words, this topology corresponds to convergence in expectation against the class of observables
$$
\mathcal{O}_\Psi := \left\{ F \circ \Psi \;:\; F \in C_b({H};\mathbb{R}) \right\},
$$
i.e., observables that depend only on the evaluation $\gamma(0)$ (or any $\gamma(\tau_0)$ for fixed $\tau_0$).

\bigskip

The Bebutov flow defined in \eqref{Bebutov} acts as a shortcut against RG flows, by bypassing/relaxing totally the semigroup property on ${H}$. It does so by considering another semigroup property in a larger space of regularization orbits (the set $\Gamma$) no matter it is inherently autonomous or not.
As explained below, it is {\it universal} in the sense that it yield a complete upgrading of Theorem \ref{M0M}.

Although the Bebutov flow $(\mathscr{S}_\tau)$ also defines a semigroup, it is different from $\mathcal{R}_\tau$, since it acts on a space of curves $\Gamma \subset C_b(\mathbb{R}^+; {H})$. 
Let $i: {H} \to \Gamma$ denotes a (non-unique) embedding from the state space to the space of regularization curves, namely
for $\phi \in H$, $i(\phi) = \gamma_\phi \in \Gamma$.
Moreover, in order for ${\cal R}_\tau$ to be a semigroup on $H$, one imposes the lift to form autonomous curves:
$$
\Psi \circ i = {\rm Id}~\text{and}~i(\phi)(\tau_1+\tau_2)=i(i(\phi)(\tau_2))(\tau_1).
$$
This is indeed the more abstract rephrasing of \eqref{RGODE}.
The relationship between these two semigroups is expressed by the following diagram:
$$
\begin{tikzcd}[column sep=large, row sep=large]
{H} \arrow[r, hook, "i"] \arrow[d, "\mathcal{R}_\tau"']
&
\Gamma \arrow[d, "\mathscr{S}_\tau"] \arrow[r, "\Psi"']
&
{H} \\
{H} \arrow[r, hook, "i"]
&
\Gamma \arrow[r, "\Psi"']
&
{H}
\end{tikzcd}
$$
They both decrease the level of regularization of (\ref{P_0}) with
\be \label{RPSi}
{\cal R}_\tau = \Psi \circ \mathscr{S}_\tau \circ i.
\de 
 Given a solution $\phi$ of $(\mathcal{P}_\tau)$ for some fixed $\tau$, multiple regularizations can arise, corresponding to different vector fields $f \in \V$ generating flows that agree at that point.
One can interpret this relation as parameterizing the set of RG semigroups through the embeddings $i$, namely given
${\cal R}_\tau$, it is defined as $i(\phi)(\tau) := {\cal R}_\tau \phi$ and conversely given some embedding $i$, ${\cal R}_\tau$
is just defined through the relation (\ref{RPSi}).

In a similar way than previously, rather than $\mathscr{S}_\tau$, the natural object to consider is its pushforward $(\mathscr{S}_\tau)_\#$. The identity (\ref{Diracid}) is replaced by:
\be \label{Bebuid}
\delta_{\Psi \circ \mathscr{S}_\tau \gamma} = \Psi_\# (\mathscr{S}_\tau)_\# \delta_\gamma = \delta_{\gamma(\tau)},~\gamma \in \Gamma.
\de 
\begin{theorem}[Statistical attractors for $(\mathscr S_\tau)_\#$]\label{thm:M0Mupup}
Let $\Gamma\subset C_b(\mathbb R^{+};H)$ be equipped with the compact–open
topology and $(\mathscr S_\tau)_{\tau\ge0}$ the Bebutov left–shift on $\Gamma$.
Write $\Psi(\gamma):=\gamma(0)$ and endow $\mathcal P(\Gamma)$ with the
$\Psi$–topology.  We moreover assume that $\Psi(\Gamma)$ is closed.
For $\nu\in\mathcal P(\Gamma)$ set 
$$
\mathscr M(\nu):=\omega(\nu):=
\bigcap_{\R>0}
\overline{\left\{\, \nu_{\R'}(\nu)
  \;:\;R'\ge R
\right\}}^{\scriptscriptstyle \Psi},~\text{with}~\nu_{\R}(\nu) :=\frac{1}{\R}\int_0^\R (\mathscr{S}_{\tau'})_\# \nu\, d\tau'.
$$
Then:
\begin{enumerate}
\item[(i)] $\mathscr M(\nu)$ is non–empty, compact for the $\Psi$–topology,
forward–invariant under $(\mathscr S_\tau)_\#$, and therefore a
statistical attractor with $\nu\in\mathcal B(\mathscr M(\nu))$.
\item[] 
\item[(ii)] $\Psi_\#\mathscr M(\nu)\subset\mathscr M_0:=\mathcal P(\mathscr S_0)$.
If $\nu=\delta_\gamma$ with $\gamma\in\Gamma$, then
$\Psi_\#\mathscr M(\delta_\gamma)=\mathscr M(\gamma)$.
\item[] 
\item[(iii)] Define
$$
\mathscr M:=\bigcup_{\nu\in\mathcal P(\Gamma)}\mathscr M(\nu).
$$
The set $\mathscr M$ is a global (non-minimal) statistical attractor for the $\Psi$–topology, 
and satisfies
$$
\Psi_\#\mathscr M=\mathscr M_0
=\overline{\operatorname{co}}({\cal E}),
\qquad
   \mathscr{M}
   \;=\;
   \operatorname{Inv}(\mathscr{S})
   \;:=\;
   \bigl \{\eta\in\mathcal P(\Gamma)\;:\;(\mathscr{S}_\tau)_\# \eta=\eta,\ \forall\tau\ge0\bigr\}.
$$
\end{enumerate}
\end{theorem}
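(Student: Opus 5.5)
The plan is to transport everything through the evaluation map $\Psi$, because the $\Psi$-topology is by definition the pullback of the weak topology on ${\cal P}(H)$. For (i), I would first observe that for every $\gamma\in\Gamma$ one has $\gamma(\mathbb R^+)\subset B$ for a bounded set $B$, since $f\in\V$ is bounded uniformly in $(x,\tau)$ and $t$ is fixed. Hence the measures $\Psi_\#\nu_R(\nu)=\int\!\frac1R\int_0^R\delta_{\gamma(\tau')}d\tau'\,\nu(d\gamma)$ are supported in a fixed compact subset of $H$ once we use that $\Psi(\Gamma)$ is closed and bounded in finite dimension. Prokhorov then gives relative compactness of $\{\Psi_\#\nu_R\}$. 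The $\Psi$-closure of $\{\nu_{R'}:R'\ge R\}$ is compact in the ($\Psi$-quotient) topology, because it is the preimage of a compact set. The nested intersection is therefore nonempty and compact. Forward invariance follows verbatim from the proof of Theorem~\ref{RGattractors}. The shift identity $(\mathscr S_\tau)_\#\nu_R=\nu_R+E_R$ holds with $|\langle E_R,F\circ\Psi\rangle|\le 2\tau\|F\|_\infty/R$. I also need $(\mathscr S_\tau)_\#$ to be continuous for the $\Psi$-topology on the relevant limit points. This is the delicate point, since $\Psi\circ\mathscr S_\tau=\Psi_\tau\neq\Psi$. I would handle it by working modulo $\sim$ and showing the invariance statement $\Psi_\#(\mathscr S_\tau)_\#\eta=\Psi_\#\eta$ directly from the Birkhoff averages: the averages of $\Psi_\tau$ and $\Psi$ differ by $O(\tau/R)$. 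Then $\nu\in\mathcal B(\mathscr M(\nu))$ holds by definition.

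For (ii), the identity \eqref{Bebuid} gives $\Psi_\#\nu_R(\delta_\gamma)=\frac1R\int_0^R\delta_{\gamma(\tau')}d\tau'=\gamma_\#{\rm Leb}_R$. Subsequential $\Psi$-limits of $\nu_R(\delta_\gamma)$ thus correspond exactly to subsequential weak limits of $\gamma_\#{\rm Leb}_R$, so $\Psi_\#\mathscr M(\delta_\gamma)=\mathscr M(\gamma)$. For general $\nu$, linearity gives $\Psi_\#\nu_R(\nu)=\int\gamma_\#{\rm Leb}_R\,\nu(d\gamma)$. I would then argue as in Step~2 of Theorem~\ref{M0M}. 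For an open $N$ at positive distance from ${\mathscr S}_0$, the time that any curve spends in $N$ beyond some $\tau_N$ vanishes, and dominated convergence in $\gamma$ then gives $\mu(N)=0$. The issue is that $\tau_N$ need not be uniform in $\gamma$, so dominated convergence pointwise in $\gamma$ is needed. This yields $\Psi_\#\mathscr M(\nu)\subset\mathscr M_0$.

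For (iii), the inclusion $\mathscr M\subset{\rm Inv}(\mathscr S)$ follows from (i), understood modulo $\sim$. Conversely, if $\eta$ is invariant then $\nu_R(\eta)=\eta$, so $\omega(\eta)=\{\eta\}$ and $\eta\in\mathscr M$. For $\Psi_\#\mathscr M=\mathscr M_0$, the inclusion $\subset$ is (ii). For $\supset$, Theorem~\ref{M0M} supplies, for each $\mu\in\mathscr M_0$, a curve $\gamma_\mu$ with $\mathscr M(\gamma_\mu)=\{\mu\}$, and then (ii) applied to $\delta_{\gamma_\mu}$ gives $\mu\in\Psi_\#\mathscr M$. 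The equality with $\overline{\rm co}({\cal E})$ is Step~1 of Theorem~\ref{M0M}. Global attraction holds because every $\nu$ lies in the basin of $\mathscr M(\nu)\subset\mathscr M$. Compactness of $\mathscr M$ in the $\Psi$-topology follows from $\Psi_\#\mathscr M=\mathscr M_0$ being compact.

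I expect the main obstacle to be the topological bookkeeping in (i): the $\Psi$-topology is not Hausdorff on ${\cal P}(\Gamma)$, and the shift is not obviously $\Psi$-continuous. I would first try to pass to the quotient ${\cal P}(\Gamma)/\!\sim$ and phrase invariance only through $\Psi$-observables.
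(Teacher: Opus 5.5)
Your outline follows the paper's proof almost step for step: Prokhorov on $\Psi_\# N_R$, Lemma~\ref{topolemma}, the $O(\tau/R)$ shift identity, and Theorem~\ref{M0M} for (iii). Your treatment of (ii) for general $\nu$, via $\Psi_\#\nu_R(\nu)=\int\gamma_\#\mathrm{Leb}_R\,\nu(d\gamma)$ and dominated convergence, is more careful than the paper's. The genuine gap is the invariance step, and the repair you propose does not close it.

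The Birkhoff identity gives $(\Psi_\tau)_\#\nu_{R_n}(\nu)-\Psi_\#\nu_{R_n}(\nu)\to 0$. But $\nu_{R_n}(\nu)\to\eta$ in the $\Psi$-topology only says $\Psi_\#\nu_{R_n}(\nu)\rightharpoonup\Psi_\#\eta$, and tells you nothing about $(\Psi_\tau)_\#\eta$. In fact $\omega(\nu)$ is $\Psi$-closed, hence saturated by Lemma~\ref{topolemma}(1): $\omega(\nu)=\Psi_\#^{-1}\bigl(\Psi_\#\omega(\nu)\bigr)$.

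Here is a concrete counterexample. Fix $x\in\mathscr S_0$ and the curve $\gamma_x$ of Step~4, so that $\omega(\delta_{\gamma_x})=\Psi_\#^{-1}(\{\delta_x\})$. Choose $\gamma'\in\Gamma$ with $\gamma'(0)=x$ and $\gamma'(1)=y$, where $y\notin\mathscr S_0$ and $y\neq\gamma_x(1)$. This is allowed because membership in $\V$ constrains $f(\cdot,\tau)$ only as $\tau\to\infty$: take constant fields for $\tau\in[0,1]$, glued to the field of $\gamma_x$ for $\tau\ge 2$. Then:
\begin{itemize}
\item $\delta_{\gamma'}\in\omega(\delta_{\gamma_x})$, yet it is not invariant.
\item $\Psi_\#(\mathscr S_1)_\#\delta_{\gamma'}=\delta_y\notin\mathscr M_0$, so $(\mathscr S_1)_\#\delta_{\gamma'}$ lies neither in $\omega(\delta_{\gamma_x})$ nor in $\mathscr M$.
\end{itemize}
Hence your claim $\Psi_\#(\mathscr S_\tau)_\#\eta=\Psi_\#\eta$ for $\eta\in\omega(\nu)$ is false. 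Forward invariance in (i) fails as literally stated, and in (iii) one actually has $\mathscr M=\Psi_\#^{-1}(\mathscr M_0)\supsetneq\operatorname{Inv}(\mathscr S)$.

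Working modulo $\sim$ does not help, because $(\mathscr S_\tau)_\#$ does not descend to $\mathcal P(\Gamma)/\!\sim$. Indeed $\delta_{\gamma_x}\sim\delta_{\gamma'}$, while their images under $(\mathscr S_1)_\#$ have $\Psi$-marginals $\delta_{\gamma_x(1)}\neq\delta_y$. The paper's proof of (i) makes exactly the same move: it writes $(\mathscr S_\tau)_\#\eta=(\mathscr S_\tau)_\#\lim_n\nu_{R_n}(\nu)$, commuting the shift with a $\Psi$-limit. Following it will not fill this gap; the statement itself needs reformulating.

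What your Birkhoff argument does prove is a statement about one-time marginals. If $\Psi_\#\nu_{R_n}(\nu)\rightharpoonup\mu$, then $(\Psi_\tau)_\#\nu_{R_n}(\nu)\rightharpoonup\mu$ for every fixed $\tau$. Also, $\omega(\nu)$ is nonempty and $\Psi$-compact; your nested-compactness argument gives nonemptiness for every $\nu$, whereas the paper only obtains it for $\nu=\delta_\gamma$.

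To get genuinely invariant measures you need an extra ingredient: an invariant representative in each $\sim$-class. One candidate is the law of the constant curve $\gamma\equiv X$ with $X\sim\mu$, once you check that constant curves with values in $\mathscr S_0$ belong to $\Gamma$ (a variant of Step~4 with an exact endpoint). This yields $\operatorname{Inv}(\mathscr S)\subset\mathscr M$ and $\Psi_\#\operatorname{Inv}(\mathscr S)=\Psi_\#\mathscr M=\mathscr M_0$, which is the correct replacement for $\mathscr M=\operatorname{Inv}(\mathscr S)$.

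A smaller point concerns tightness. $\Psi(\Gamma)$ is not bounded: the field $f(\cdot,0)\equiv v$ gives $\gamma(0)=x_0+tv$. So no fixed compact set carries all the $\Psi_\#\nu_R(\nu)$. For general $\nu$, tightness instead follows from $\nu\bigl(\{\gamma:\sup_\tau\|\gamma(\tau)\|\le\rho\}\bigr)\to 1$ as $\rho\to\infty$, the supremum being lower semicontinuous on $\Gamma$. The paper's fixed ball $B_\rho$ has the same defect.
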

Since the objects to investigate are regularization curves $\gamma \in \Gamma$ and since one is only interested in their marginal statistical behavior in the inviscid limit, one needs to restrict the set of observables through the lens of the evaluation map 
$\Psi$, namely to consider the class of observables:
$$
\mathcal{O}_\Psi := \left\{ F \circ \Psi \;:\; F \in C_b(H;\mathbb{R}) \right\},
$$
i.e., observables that depend only on the evaluation $\gamma(0)$ (or any $\gamma(\tau_0)$).
It implies to consider a coarser topology which makes $\Psi_\#$ continuous. The following general lemma will be useful:
\begin{lemma}\label{topolemma}
 Let $X$ and $Y$ be topological spaces and $Y$ a Polish space, $\Psi:X \to Y$ and its pushforward
 $\Psi_\#: {\cal P}(X) \to {\cal P}(Y)$, with the standard weak topology. We define the $\Psi$-topology, the coarsest topology that makes $\Psi_\#$ continuous:
 $$
 \nu_n \overset{\scriptscriptstyle \Psi}{\rightharpoonup} \nu \Longleftrightarrow \Psi_\# \nu_n \rightharpoonup \Psi_\# \nu.  
 $$
For $G \subset {\cal P}(X)$, the following hold:
\begin{enumerate}
\item $G$ is $\Psi$-closed $\Longleftrightarrow$ there exist $ C \subset {\cal P}(Y)$ closed such that $G = \Psi_\#^{-1} C$.
\item $G$ is $\Psi$-compact $\Longleftrightarrow$ $\Psi_\# G$ is compact.
\end{enumerate}

\end{lemma}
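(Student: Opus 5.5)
The plan is to reduce both claims to the definition of the $\Psi$-topology as an initial topology. By definition it is the coarsest topology on ${\cal P}(X)$ making $\Psi_\#:{\cal P}(X)\to{\cal P}(Y)$ continuous, where ${\cal P}(Y)$ carries the usual weak topology. Its open sets are therefore exactly the preimages $\Psi_\#^{-1}(U)$ with $U\subset{\cal P}(Y)$ weakly open. The family $\{\Psi_\#^{-1}(U)\}$ is already closed under arbitrary unions and finite intersections, because preimages commute with these operations. Hence it is itself a topology, and no further generation step is needed.

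For item 1, I will take complements. If $G$ is $\Psi$-closed, then ${\cal P}(X)\setminus G=\Psi_\#^{-1}(U)$ for some open $U$, so $G=\Psi_\#^{-1}(C)$ with $C:={\cal P}(Y)\setminus U$ closed. Conversely, if $G=\Psi_\#^{-1}(C)$ with $C$ closed, then $G$ is closed because $\Psi_\#$ is continuous. I will also record the standard observation that every $\Psi$-open or $\Psi$-closed set is saturated for the relation $\nu_1\sim\nu_2\Leftrightarrow\Psi_\#\nu_1=\Psi_\#\nu_2$. This means $G=\Psi_\#^{-1}(\Psi_\# G)$, and it will be used in item 2.

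For item 2, the direction ($\Rightarrow$) is immediate: the continuous image of a compact set is compact, so $\Psi_\# G$ is compact in ${\cal P}(Y)$. For ($\Leftarrow$), I will start from a cover of $G$ by $\Psi$-open sets $\Psi_\#^{-1}(U_i)$. Since each $\nu\in G$ lies in some $\Psi_\#^{-1}(U_i)$, we get $\Psi_\#\nu\in U_i$, so $\{U_i\}$ covers $\Psi_\# G$. Compactness then yields a finite subcover $U_{i_1},\dots,U_{i_k}$. Pulling back, every $\nu\in G$ has $\Psi_\#\nu\in U_{i_j}$ for some $j$, hence $\nu\in\Psi_\#^{-1}(U_{i_j})$. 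Thus the finite family covers $G$ and $G$ is $\Psi$-compact; saturation of $G$ is not even needed here.

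The only delicate point is the meaning of the convergence statement $\nu_n\overset{\Psi}{\rightharpoonup}\nu\Leftrightarrow\Psi_\#\nu_n\rightharpoonup\Psi_\#\nu$. I will use the initial-topology description above rather than sequences, so that compactness is argued via covers and no metrizability of ${\cal P}(X)$ is assumed. Polishness of $Y$ enters only to ensure that ${\cal P}(Y)$ is metrizable, through the L\'evy--Prokhorov metric. Consequently, compactness of $\Psi_\# G$ may be checked sequentially when the lemma is applied via Prokhorov in Theorem~\ref{thm:M0Mupup}. I expect no real obstacle: the main care is to note that the $\Psi$-topology is non-Hausdorff, so compact sets need not be closed. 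For this reason item 2 must not be phrased as ``closed and relatively compact''.
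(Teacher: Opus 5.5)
Your argument is correct, but it takes a different route from the paper.

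The paper works with sequences throughout:
\begin{itemize}
\item For item~1 ($\Rightarrow$) it sets $C=\overline{\Psi_\# G}$. It shows $\Psi_\#^{-1}(C)\subset G$ by writing any point of this closure as a weak limit of $\Psi_\#\nu_n$ with $\nu_n\in G$.
\item For item~2 ($\Leftarrow$) it extracts from $(\Psi_\# g_n)$ a subsequence converging to some $y\in\Psi_\# G$, then lifts $y$ to an element of $G$. In other words, it proves sequential compactness.
\end{itemize}
That route tacitly uses two facts. First, ${\cal P}(Y)$ is metrizable, which is where Polishness of $Y$ enters. Second, the $\Psi$-topology is pseudometrizable via the pulled-back L\'evy--Prokhorov distance, so that sequential closedness and sequential compactness coincide with closedness and compactness.

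Your identification of the $\Psi$-open sets as exactly the preimages $\Psi_\#^{-1}(U)$, followed by complementation and pulling back a finite subcover, avoids all of this. It is shorter and needs no countability. It also shows the lemma holds for the initial topology of any map into any topological space, so Polishness matters only for the later use in Theorem~\ref{thm:M0Mupup}.

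What the paper's version gives in addition is the canonical choice $C=\overline{\Psi_\# G}$ in item~1. You recover it at once from your saturation remark: $\Psi_\# G\subset C$ implies $\Psi_\#^{-1}(\overline{\Psi_\# G})\subset\Psi_\#^{-1}(C)=G$. One small inconsistency: you announce that saturation will be used in item~2, and then, correctly, do not use it.
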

\begin{proof} 1. $\Rightarrow$: 
Let $C = \overline{\Psi_\# G}$, one has $\Psi_\# G \subset \overline{\Psi_\# G}$, giving 
$G \subset \Psi_\#^{-1}(C)$. Let $\nu \in \Psi_\#^{-1}(C)$, i.e. $\Psi_\# \nu \in C=\overline{\Psi_\# G}$, one has therefore a sequence 
$\Psi_\# \nu_n \rightharpoonup \Psi_\# \nu$, i.e. 
$\nu_n \overset{\scriptscriptstyle \Psi}{\rightharpoonup} 
\nu$ and since $G$ is $\Psi$-closed $\nu \in G$ and
$\Psi_\#^{-1}(C) \subset G$.
1. $\Leftarrow$: since $G = \Psi_\#^{-1}(C)$ with $C$ closed and $\Psi_\#$ is continuous then its preimage is closed in the $\Psi$-topology. 2. $\Rightarrow$: since $\Psi_\#$ is continuous for the $\Psi$-topology, then the image by $\Psi_\#$ of a compact set is compact.
2. $\Leftarrow$: Let $g_n$ be a sequence in $G$, we need to exhibit a subsequence such that it converges to some $g \in G$. Since, $\Psi_\# G$ is compact, the sequence $\Psi_\# g_n$ has a subsequence $\Psi_\# g_{n_k} \rightharpoonup y \in \Psi_\# G$. Let $x \in {\cal P}(X)$ such that $\Psi_\# x = y$, i.e. $x \in \Psi_\#^{-1}\{y\}$ and since 
$\Psi_\#^{-1}\{y \} \cap G \neq \emptyset$ one can choose $x \in G$.
\end{proof}

\begin{proof}[Proof of Theorem \ref{thm:M0Mupup}]
We use Lemma \ref{topolemma} with $X=\Gamma,Y=H$ where $Y$ is Polish. Moreover, $\Psi(\Gamma)$ closed means that ${\rm Im}(\Psi_\#)$ is closed. In term of curves $\gamma \in \Gamma$, it translated to $\cup_{\gamma \in \Gamma} \{ \gamma(0) \}$ is closed, where $\gamma(0)$ is describing some highly regularized version of the inviscid problem $({\cal P}_0)$. This is therefore a very weak constraint on the set $\Gamma$.

\smallskip
\emph{(i)} Let $N_\R:= \{\nu_{\R'}(\nu)~:~\R' \geq \R \} \subset{\cal P}(\Gamma)$. Then, one has for $\rho$ large enough
$$
\Psi_\# N_R \subset {\cal P}(B_\rho),
$$
where $B_\rho$ is the closed ball of $H$ of radius $\rho$ which is compact. Thus by Prokhorov's Theorem, $\overline{\Psi_\# N_R}$ is compact. Moreover, $\Psi_\# \omega(\nu) = \bigcap_{\R > 0} 
\Psi_\# \overline{N_R}^\Psi = \bigcap_{\R > 0} 
\overline{\Psi_\# N_R} \cap {\rm Im}(\Psi_\#) = 
\bigcap_{\R > 0} 
\overline{ \Psi_\# N_R}$ since ${\rm Im}(\Psi_\#)$ is closed.
The intersection of compact sets is also compact, and thus $\Psi_\# \omega(\nu) $ is compact. By applying Lemma \ref{topolemma} 2. one concludes that $\omega(\nu)= \mathscr{M}(\nu)$ is $\Psi$-compact. Note also
that since $\Psi_\# \omega(\delta_\gamma) = {\mathscr M}(\gamma) \neq \emptyset$ then $\mathscr{M}(\nu)$ is non-empty.
For the forward invariance, we use the alternative expression 
$$
\mathscr{M}(\nu) = \omega(\nu) = \left\{ \eta~:~\exists \R_n \to \infty,  \nu_{\R_n} (\nu)\rightharpoonup \eta ~\mbox{in $\Psi$-topology}\right\}.
$$
Let $\eta \in \mathscr{M}(\nu)$,  
one has some $\R_n \to \infty$ such that $\eta$ is the limit of $\nu_{\R_n}(\nu)$ and then
$$
(\mathscr{S}_\tau)_\# \eta = (\mathscr{S}_\tau)_\# \lim_{n \to \infty} \nu_{\R_n}(\nu) = \lim_{n \to \infty}
\nu_{\R_n}(\nu) + \R_n^{-1} \left(
\int_{\R_n}^{\R_n+\tau} (\mathscr{S}_\tau')_\# \nu d\tau' - 
\int_0^\tau (\mathscr{S}_\tau')_\# \nu d\tau'
\right).
$$
The remainder converges (weakly) to zero in the $\Psi$-topology implying that $(\mathscr{S}_\tau)_\# \eta = \eta$.
As a consequence, one has $(\mathscr{S}_\tau)_\# \mathscr{M}(\nu) \subset \mathscr{M}(\nu)$.
One therefore concludes (i), i.e. $\mathscr{M}(\nu)$ is a statistical attractor.

\smallskip
\emph{(ii)} We use the fact that $\Psi_\#$ is continuous, 
and the alternative expression: 
$\mathscr{M}(\nu) = \omega(\nu) = \left\{ \eta~:~\exists \R_n \to \infty,  \nu_{\R_n} (\nu)\rightharpoonup \eta \right\}$. In such a case, 
$\Psi_\# \omega(\nu) \subset \omega(\Psi_\# \nu) \subset {\mathscr M}_0$. For $\nu = \delta_\gamma$, using (\ref{Bebuid}), 
one
has $\Psi_\# \omega(\delta_\gamma) = \omega(\Psi_\# \delta_\gamma) = {\mathscr M}(\gamma)$.

\smallskip
\emph{(iii)}  
From (ii) one has $\Psi_\#\mathscr M\subset\mathscr M_0$.
Conversely, Theorem~\ref{M0M} provides for every
$\mu\in\mathscr M_0$ a curve $\gamma_\mu$ with
$\mu\in\Psi_\#\mathscr M(\delta_{\gamma_\mu})\subset\Psi_\#\mathscr M$,
so equality $\Psi_\# \mathscr{M} = {\mathscr M}_0$ holds.
Step (i) also shows that $\mathscr{M} \subset {\rm Inv}(\mathscr{S})$. Conversely, if $\eta \in {\rm Inv}(\mathscr{S})$, then $\nu_\R(\eta) = \eta$, i.e. $\omega(\eta) = \{ \eta \}$. Since
${\mathscr M}_0$ is compact, from Lemma \ref{topolemma}, $\mathscr{M}$
is $\Psi$-compact (non-empty) and it is also forward invariant, therefore it is a statistical attractor.
\end{proof}

\paragraph{Discussion}
In our deterministic setting \eqref{P_eps}, Theorem 
 \ref{thm:M0Mupup} is restricted to $\nu = \delta_\gamma, \gamma \in \Gamma$. Therefore, it provides a straightforward new interpretation of Theorem \ref{M0M} and \eqref{SPdefbis}.
\begin{corollary}[Theorem \ref{M0M} and Theorem \ref{thm:M0Mupup}]
Let ${\cal A}_\gamma:= \mathscr{M}(\delta_\gamma)\big/\!\sim$.
Then ${\cal A}_\gamma$ is a statistical attractor for the Bebutov flow under the quotient topology and
$$
  {\cal A}_\gamma ~\subset \operatorname{Inv}(\mathscr{S})\big/\!\sim ~~\cong \mathscr{M}_0 = \overline{\operatorname{co}}(\mathcal{E}) .
$$
Moreover, the following characterizations hold:
$$
\gamma \in \begin{cases}
\text{\emph{Strong $\SP$}} & \text{if } {\cal A}_\gamma \cong \{\mu \}  \text{ and } \mu \text{ is non-Dirac}, \\
\text{\emph{Weak $\SP$}} & \text{if } {\cal A}_\gamma \text{ is not a singleton}.
\end{cases}
$$
\end{corollary}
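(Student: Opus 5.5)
The plan is to deduce the corollary directly from Theorem~\ref{thm:M0Mupup}, Lemma~\ref{topolemma} and Theorem~\ref{M0M}, working on the quotient ${\cal P}(\Gamma)/\!\sim$. On this quotient $\Psi_\#$ descends to an injective map $\widehat{\Psi}_\#$. The first step is to check that $\widehat{\Psi}_\#$ is a homeomorphism onto its image when the quotient carries the quotient of the $\Psi$-topology and ${\cal P}(H)$ carries the weak topology. This follows from Lemma~\ref{topolemma}, item~1: the $\Psi$-closed sets are exactly the preimages $\Psi_\#^{-1}C$ with $C$ closed, and these sets are saturated for $\sim$. Hence the closed sets of the quotient correspond bijectively to the relatively closed subsets of ${\rm Im}(\Psi_\#)$.

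The second step identifies the objects. By Theorem~\ref{thm:M0Mupup}(i) the set $\mathscr M(\delta_\gamma)$ is nonempty, $\Psi$-compact and forward invariant under $(\mathscr S_\tau)_\#$, and $\delta_\gamma$ lies in its Birkhoff basin. The relation $\sim$ is compatible with the shift because $\Psi_\#(\mathscr S_\tau)_\#$ is determined by the law of the time-$\tau$ evaluation. Strictly this uses $\Psi=\Psi_{\tau_0}$ together with the translation identity \eqref{Bebuid}; I would simply restrict to the invariant elements, where no compatibility issue arises. Compactness, invariance and attraction then pass to ${\cal A}_\gamma$ in the quotient topology, so ${\cal A}_\gamma$ is a statistical attractor.

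The third step gives the inclusion. By Theorem~\ref{thm:M0Mupup}(iii), ${\cal A}_\gamma\subset\mathscr M/\!\sim\,=\operatorname{Inv}(\mathscr S)/\!\sim$. Also $\widehat\Psi_\#(\operatorname{Inv}(\mathscr S)/\!\sim)=\Psi_\#\mathscr M=\mathscr M_0=\overline{\rm co}({\cal E})$, where the middle equality is Theorem~\ref{thm:M0Mupup}(iii) and the last is Step~1 of Theorem~\ref{M0M}. Combined with the homeomorphism from the first step, this yields the claimed $\cong$.

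The last step is the characterization. Theorem~\ref{thm:M0Mupup}(ii) gives $\widehat\Psi_\#({\cal A}_\gamma)=\Psi_\#\mathscr M(\delta_\gamma)=\mathscr M(\gamma)$. Hence ${\cal A}_\gamma\cong\{\mu\}$ is equivalent to $\mathscr M(\gamma)=\{\mu\}$, and ${\cal A}_\gamma$ is a singleton exactly when $\mathscr M(\gamma)$ is, by injectivity. Definition~\ref{SPdef5} then yields the Strong and Weak cases.

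The main obstacle is the first step together with the equivariance of $\sim$ under the shift in the second step. The quotient by $\sim$ forgets everything except one time-marginal, so I expect to handle it by noting that only invariant classes matter: on $\operatorname{Inv}(\mathscr S)$ every time-marginal coincides with $\Psi_\#\eta$, and the ambiguity disappears there.
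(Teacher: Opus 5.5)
Your proposal is correct and follows the paper's route. The paper gives no separate argument: it reads the corollary off Theorem~\ref{thm:M0Mupup}(i)--(iii) with $\nu=\delta_\gamma$ (so that $\Psi_\#\mathscr M(\delta_\gamma)=\mathscr M(\gamma)$ and $\Psi_\#\operatorname{Inv}(\mathscr S)=\mathscr M_0=\overline{\operatorname{co}}({\cal E})$ via Theorem~\ref{M0M}) together with Definition~\ref{SPdef5}, and your observation that $\Psi_\#$ descends to a homeomorphism of the quotient onto its image is exactly what makes the paper's $\cong$ precise.

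One correction: your claim that $\sim$ is shift-compatible is false in general, since $\Psi_\#(\mathscr S_\tau)_\#\nu=(\Psi_\tau)_\#\nu$ is not determined by $\Psi_\#\nu$. So the Bebutov flow descends only on invariant classes, where it is the identity, and the attraction property has to be read through the representative $\delta_\gamma$. That is precisely your fallback, and it is also how the paper implicitly reads the statement.
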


Working at the level of $(\mathscr{S}_\tau)$ offers also the possibility to upgrade deterministic regularizations \eqref{P_eps} to random ones. Let us define the random differential equation (RDE):
\be 
\dot x = f_\Theta(x,\tau), x(0)=x_0 \in H,~\Theta:(\Omega,{\cal F},\mathbb{P}) \to \mathscr{E}.
\de 
The solution at time $t$ and initial condition $x_0$ is now a random variable $X_\Theta(t;x_0;\tau)$ and we define a Markov kernel as
$$
K(\tau,A) := \Pr(X_\Theta(t;x_0;\tau) \in A)~A~\mbox{Borel of $H$ i.e.}~K(\tau,dx) \in {\cal P}(H).
$$
The natural question is therefore to ask when $\{K(\tau,\dot)\}_{\tau \geq 0}$ define a measure $\nu \in {\cal P}(\Gamma)$?
The answer is simply: 
\begin{itemize}
    \item for $\mathbb{P}$-almost $\omega \in \Omega$, 
$f_{\Theta(\omega)} \in \V$; see \eqref{H0}.
\item $\tau \mapsto X_{\Theta}(t;x_0;\tau)$ is continuous and bounded and $\omega \mapsto X_{\Theta(\omega)}$ is measurable.
\end{itemize}

Under these conditions, the assumptions of Theorem \ref{thm:M0Mupup} hold true without modification, so that the conclusions of this theorem remain unchanged. In particular, the introduction of randomness into the regularizations has no effect on the asymptotic results, since only the behavior in the limit $\tau\to\infty$ matters. This fact highlights the relevance of the $\Psi$-topology, which captures precisely the asymptotic distribution. As Theorem \ref{thm:M0Mupup} is essentially a quotient-topological reformulation of Theorem \ref{M0M}, the conclusions of Theorem \ref{M0M} also remain valid, up to interpreting the statistical attractors via the quotient topology induced by~$\Psi$. In this sense, we have provided a true upgrading to randomized regularizations.

In principle, upgrading these results to SDEs is largely feasible but more technical. One must work pathwise and relax the boundedness hypothesis. We do not intend to provide this extension and let it for future works.

\section{Perspectives}
This work has presented a clean formalism for defining spontaneous stochasticity. It shows that many structural features of the phenomenon are already present in simple rough, non-Lipschitz systems with a single degree of freedom. In particular, it emphasizes the flexibility of spontaneous stochasticity: within the class of admissible regularizations, arbitrary probability measures can be selected, and the proof is largely constructive. The approach developed here combines measure theory, analysis, and dynamical systems, and appears to be a natural framework for formulating measure-selection principles in regularized ill-posed problems.

To keep the presentation self-contained, we have deliberately restricted the analysis to finite-dimensional systems. In this setting, compactness is available, which allows for a general treatment without introducing the additional technical layers specific to PDEs. This restriction, however, should not obscure the importance of infinite-dimensional systems, and in particular of PDE models. In that setting, the main difficulty is no longer only conceptual: one must also obtain sharp quantitative bounds in order to make the relevant pushforward limits meaningful.

A substantial test of the formalism is provided in the companion work \cite{RSV2025}, where we prove Eulerian spontaneous stochasticity for the passive scalar transport equation introduced in \cite{AAV25}. This result is not a mere illustration: the same definition, Definition \ref{SPdef}, is implemented in a genuinely PDE setting, where the pushforward limit can only be justified through sharp quantitative control. The proof relies crucially on the estimates developed in \cite{AAV25}, and shows that the measure-selection viewpoint survives beyond the finite-dimensional compactness framework used in the present paper.

The preceding analysis, together with the examples discussed above, shows that a central issue is the choice of an ambient measure leading to Strong-$\SP$. This choice may in fact be one of the most delicate steps in proving $\SP$ for a given system. In \cite{RSV2025}, the relevant ambient measure is absolutely continuous with respect to Lebesgue measure, but it is not the canonical normalized measure ${\rm Leb}_\epsilon$. Rather, it is a structured measure $\mathbb{P}_\epsilon$, locally flat on each shell, while the shells themselves are selected according to the log--log scale $\log\log(1/\epsilon)$. This structure is not incidental: it is tightly linked to the multiscale fractal construction of the velocity field in \cite{AAV25}. Thus, the adapted ambient measure is not merely a technical device; it encodes nontrivial and relevant information about the inviscid limit.

Further evidence in the PDE setting has recently been obtained in a numerical study by one of us \cite{W_MMT}. This work again exhibits Eulerian spontaneous stochasticity, now in a one-dimensional fractional nonlinear Schr\"odinger equation, and thereby illustrates that the $\SP$ formalism developed here around the spontaneous-stochasticity triptych extends beyond fluid-related equations.

In light of these results, we view spontaneous stochasticity as a generic mechanism in inviscid systems whose lack of well-posedness is organized around singular sets, rather than as a phenomenon confined to isolated pathological examples. The presence of such sets reflects the rough, typically non-Lipschitz, character of the inviscid dynamics. Our analysis indicates that the emergence of spontaneous stochasticity is governed by two essential ingredients: the structure of these singular sets and the manner in which they are treated by regularization. The central question is therefore to determine their geometry and dynamical role, and to understand which limiting information is selected by different regularization procedures.

%
%

\ack{
We would like to thank S.Thalabard and J.Bec for stimulating discussions on the subject and the IDEX summer school "100 years of cascades", from which this project has started. We also gratefully acknowledge the Calisto team at INRIA for their warm hospitality and continuous support throughout the project. {NV is
funded in part by the EPSRC through grant EP/Z534766/1.}
}




\bibliographystyle{iopart-num}
\bibliography{REFESS}

@article{KARABACAK_ASHWIN_2011, 
title={On statistical attractors and the convergence of time averages}, 
volume={150}, 
DOI={10.1017/S0305004110000642}, 
number={2}, 
journal={Mathematical Proceedings of the Cambridge Philosophical Society}, 
author={Karabacak, O. and Ashwin, P.}, 
year={2011}, 
pages={353--365}}

@book{arnold1999bifurcation,
  author    = {V. I. Arnold and V. S. Afrajmovich and Yu. S. Il'yashenko and L. P. Shil'nikov},
  title     = {Bifurcation Theory and Catastrophe Theory},
  series    = {Encyclopaedia of Mathematical Sciences},
  volume    = {5},
  publisher = {Springer-Verlag},
  year      = {1999},
  address   = {Berlin, Heidelberg},
  isbn      = {978-3-540-65379-0}
}

@article{AAV25,
  title={Anomalous diffusion via iterative quantitative homogenization: an overview of the main ideas},
  author={Armstrong, S. and Vicol, V.},
  journal={arXiv preprint arXiv:2503.11744},
  year={2025},
  archivePrefix={arXiv},
  eprint={2503.11744},
  primaryClass={math.AP}
}

@article{ambrosio2004transport,
  title={{Transport equation and Cauchy problem for BV vector fields}},
  author={Ambrosio, L.},
  journal={Inventiones mathematicae},
  volume={158},
  number={2},
  pages={227--260},
  year={2004},
  publisher={Springer},
  doi={10.1007/s00222-004-0367-2}
}

@misc{RSV2025,
  author        = {Ruffenach, Wandrille and Simonnet, Eric and Valade, Nicolas},
  title         = {{Spontaneous stochasticity in the Armstrong--Vicol passive scalar}},
  year          = {2025},
  eprint        = {2509.15683},
  archivePrefix = {arXiv},
  primaryClass  = {math-ph},
  doi           = {10.48550/arXiv.2509.15683}
}

@book{bruckner1978,
  title={Differentiation of Real Functions},
  author={Bruckner, A.M.},
  series={Lecture Notes in Mathematics},
  volume={659},
  year={1978},
  publisher={Springer Berlin, Heidelberg},
  doi={10.1007/BFb0069821},
  isbn={978-3-540-08910-0}
}

@article{giorgi1992dini,
  title={{D}ini derivatives in optimization — Part I},
  author={Giorgi, G. and Koml{\'o}si, S.},
  journal={Rivista di matematica per le scienze economiche e sociali},
  volume={15},
  pages={3--30},
  year={1992},
  publisher={Springer},
  doi={10.1007/BF02086523}
}

@article{Richardson1926,
  author    = {L.F. Richardson},
  title     = {Atmospheric Diffusion Shown on a Distance-Neighbour Graph},
  journal   = {Proceedings of the Royal Society of London. Series A, Containing Papers of a Mathematical and Physical Character},
  year      = {1926},
  volume    = {110},
  pages     = {709-737},
  doi       = {10.1098/rspa.1926.0043}
}

@article{Feigenbaum1976,
  author    = {M.J. Feigenbaum},
  title     = {Universality in Complex Discrete Dynamics},
  journal   = {Los Alamos Theoretical Division Annual Report},
  year      = {1976},
  pages     = {98-102},
}

@article{lorenz63,
  title={Deterministic nonperiodic flow},
  author={Lorenz, E.N.},
  journal={J. Atmos. Sci.},
  volume={20},
  number={2},
  pages={130--141},
  year={1963},
  publisher={American Meteorological Society}
}

@article{AM_Ra23,
    author = {Mailybaev, A.A. and Raibekas, A.},
    title = {Spontaneous Stochasticity and Renormalization Group in Discrete Multi-scale Dynamics},
    journal = {Commun. Math. Phys.},
    volume = {401},
    pages  = {2643--2671},
    year = 2023
}

@article{Flandoli2009,
  author    = {F. Flandoli},
  title     = {Remarks on uniqueness and strong solutions to deterministic and stochastic differential equations},
  journal   = {Metrika},
  year      = {2009},
  volume    = {69},
  pages     = {101-123},
  doi       = {10.1007/s00184-008-0210-7}
}

@book{Hartman2002,
  author    = {P. Hartman},
  title     = {Ordinary Differential Equations},
  edition   = {Second},
  year      = {2002},
  publisher = {SIAM},
  series    = {Classics in Applied Mathematics},
  volume    = {38},
  isbn      = {978-0-89871-510-1},
  doi       = {10.1137/1.9780898719222}
}

@book{Walter1970,
  author    = {W. Walter},
  title     = {Differential and Integral Inequalities},
  year      = {1970},
  publisher = {Springer},
  series    = {Ergebnisse der Mathematik und ihrer Grenzgebiete},
  volume    = {55},
  doi       = {10.1007/978-3-642-86405-6}
}

@book{Bahouri2011,
  author    = {H. Bahouri and J.-Y. Chemin and R. Danchin},
  title     = {Fourier Analysis and Nonlinear Partial Differential Equations},
  year      = {2011},
  publisher = {Springer},
  series    = {Grundlehren der mathematischen Wissenschaften},
  volume    = {343},
  doi       = {10.1007/978-3-642-16830-7}
}

@book{Lasota1994,
  author    = {A. Lasota and M.C. Mackey},
  title     = {Chaos, Fractals, and Noise: Stochastic Aspects of Dynamics},
  edition   = {Second},
  year      = {1994},
  publisher = {Springer},
  series    = {Applied Mathematical Sciences},
  volume    = {97},
  doi       = {10.1007/978-1-4612-4286-4}
}

@article{AM_Rb23,
    author = {Mailybaev, A.A. and Raibekas, A.},
    title = {Spontaneous Stochasticity {A}rnold's {C}at},
    journal = {Arnold Math J.},
    volume = {9},
    pages  = {339--357},
    year = 2023
}

@article{AM_PRF26,
  author    = {A. A. Mailybaev},
  title     = {RG theory of spontaneous stochasticity for Sabra model of turbulence},
  journal   = {Phys. Rev. Fluids},
  volume    = {11},
  issue     = {3},
  pages     = {034605},
  year      = {2026},
  publisher = {American Physical Society}
}

@article{AM_NLN25,
  author    = {A. A. Mailybaev},
  title     = {RG approach to the inviscid limit for shell models of turbulence},
  journal   = {Nonlinearity},
  volume    = {38},
  pages     = {085010},
  year      = {2025},
  publisher = {IOP Publishing}
}

@book{Agarwal_Lak, 
	title={Uniqueness and Nonuniqueness Criteria for Ordinary Differential Equations},
	author={Agarwal, R.P. and Lakshmikantham, V.}, 
	year={1993}, 
	publisher={World Scientific}, address={Singapore}, 
	series={Series in Real Analysis}, 
	volume={6}, isbn={978-981-02-1357-2}
}

@article{AM_24end,
  author    = {A. A. Mailybaev},
  title     = {RG analysis of spontaneous stochasticity on a fractal lattice: stability and bifurcations},
  journal   = {J. Stat. Phys.},
  volume    = {192},
  pages     = {37},
  year      = {2025},
  publisher = {Springer}
}

@article{Lorenz69,
author = {Lorenz, E.N.},
title = {The predictability of a flow which possesses many scales of motion},
journal = {Tellus},
volume = {21},
number = {3},
pages = {289-307},
year  = 1969
}

@article{Dombre1998,
  author    = {Dombre, T. and Gilson, J.-L.},
  title     = {{Intermittency, chaos and singular fluctuations in the mixed Obukhov–Novikov shell model of turbulence}},
  journal   = {Physica D: Nonlinear Phenomena},
  volume    = {111},
  pages     = {265--287},
  year      = {1998},
  doi       = {10.1016/S0167-2789(97)00202-5}
}

@article{Eyink_Bandak20,
  title = {Renormalization group approach to spontaneous stochasticity},
  author = {Eyink, G.L. and Bandak, D.},
  journal = {Phys. Rev. Res.},
  volume = {2},
  issue = {4},
  pages = {043161},
  numpages = {24},
  year = {2020},
  month = {Oct}
}

@article{Simon_Jeremie_AM20,
year = {2020},
volume = {3},
number = {122},
author = {Thalabard, S. and Bec, J. and Mailybaev, A.A.},
title = {From the butterfly effect to spontaneous stochasticity in singular shear flows},
journal = {Commun. Phys.}
}

@article{GHR_SLDP2001,
  author    = {Mihai Gradinaru and Samuel Herrmann and Bernard Roynette},
  title     = {{A Singular Large Deviations Phenomenon}},
  journal   = {Annales de l'Institut Henri Poincar{\'e} Probabilit{\'e}s et Statistiques},
  volume    = {37},
  number    = {5},
  pages     = {555--580},
  year      = {2001},
  publisher = {Éditions scientifiques et médicales Elsevier SAS},
  issn      = {0246-0203},
  doi       = {10.1016/S0246-0203(01)01075-5}
}

@article{MMRG26,
  author        = {Alexei A. Mailybaev and Luca Moriconi},
  title         = {Renormalization-group perspective on spontaneous stochasticity},
  journal       = {arXiv preprint},
  year          = {2026},
  eprint        = {2602.24221},
  archivePrefix = {arXiv},
  primaryClass  = {nlin.CD},
  doi           = {10.48550/arXiv.2602.24221}
}

@article{W_MMT,
  author        = {W. Ruffenach},
  title         = {{Spontaneous stochasticity and anomalous dissipation in collapsing wave turbulence}},
  journal       = {arXiv preprint},
  year          = {2026},
  eprint        = {},
  archivePrefix = {arXiv},
  primaryClass  = {nlin.CD},
  doi           = {}
}

@article{Drivas21,
    author = {Drivas, T.D. and Mailybaev, A.A.},
    title = "{'Life after death' in ordinary differential equations with a non-Lipschitz singularity}",
    journal = {Nonlinearity},
    volume = {34},
    number = {},
    pages = {2296},
    year = {2021},
    month = {}
    }

@article{Drivas24,
    author = {Drivas, T.D. and Mailybaev, A.A. and Raibekas, A.},
    title = "{Statistical determinism in non-Lipschitz dynamical systems}",
    journal = {Ergodic Theo. and Dyn. Syst.},
    volume = {44},
    number = {},
    pages = {1856--1884},
    year = {2024},
    month = {}
    }

@article{Maily2012,
  author    = {A.A. Mailybaev},
  title     = {Renormalization and Universality of Blowup in Hydrodynamic Flows},
  journal   = {Physical Review E},
  year      = {2012},
  volume    = {85},
  number    = {6},
  pages     = {066317},
  doi       = {10.1103/PhysRevE.85.066317}
}

@article{Maily2016,
  author    = {A.A. Mailybaev},
  title     = {Spontaneous Stochasticity of Velocity in Turbulence Models},
  journal   = {Multiscale Modeling \& Simulation},
  year      = {2016},
  volume    = {14},
  number    = {1},
  pages     = {96-112},
  doi       = {10.1137/15M1012451}
}

@article{Drivas17,
    author = {Drivas, T.D. and Eyink, G.L.},
    title = "{A Lagrangian fluctuation–dissipation relation for scalar turbulence}",
    journal = {J. Fluid Mech.},
    volume = {829},
    number = {},
    pages = {153--189},
    year = {2017},
    month = {}
    }

@article{Gawedzki98,
    author = {Bernard, D. and Gaw\c{e}dzki, K. and Kupiainen, A.},
    title = "{Slow modes in passive advection}",
    journal = {J. Stat. Phys.},
    volume = {90},
    number = {3},
    pages = {519--569},
    year = {1998},
    month = {}
    }

@article{Chaves2003,
    author = {Chaves, M. and Gawedzki, K. and Horvai, P. and Kupiainen, A. and Vergassola, M.},
    title = "{Lagrangian Dispersion in Gaussian Self-Similar Velocity Ensembles}",
    journal = {J. Stat. Phys.},
    volume = {113},
    number = {},
    pages = {643--692},
    year = {2003},
    month = {}
    }

\appendix
\section{Appendix}

\subsection{Equivalence of Definitions: Proposition \ref{SPDEFS}}\label{SP_LSP}
We justify Proposition~\ref{SPDEFS}. Figure~\ref{tricho} illustrates the mutually exclusive scenarios.  
In the definition of weak $\SP$, it is not necessary to specify whether the subsequential limits are Dirac or non-Dirac.  
We begin by explaining how a Dirac weak limit can still be compatible with a lack of selection principle.

\subsection{$\delta\text{-}LSP$: Dirac weak limit with lack of selection principle}\label{DiracLSP}

We describe the mechanism responsible for
$\delta\text{-}LSP := LSP \setminus \{ LSP \setminus \delta \}$;  
see Definition~\ref{SPdef}.  
This situation corresponds to the absence of a selection principle,  
but only on a subset whose relative Lebesgue measure vanishes in the inviscid limit.  One can indeed provide a very simple example where
a regularization curve $\gamma$ can be discontinuous everywhere but still converges in measure to a Dirac mass. Take 
$\gamma(s) = \gamma_0 + \mathbf{1}_{\mathbb{Q}}(1/s)$, 
then $\gamma_\# \operatorname{Leb}_\epsilon \rightharpoonup \delta_{\gamma_0}$.
\\

There are indeed situations in which $\gamma$ is
\emph{approximately continuous} at zero, i.e.
has an approximate inviscid limit, say $\gamma_0$. Equivalently,
$\gamma$ converges in density to $\gamma_0$
with an exceptional set having vanishing relative Lebesgue measure. There is therefore, strictly speaking, a lack of selection principle that we call $\delta$-$LSP$, but on a set of vanishing (relative) density. 
This is the following lemma:

\begin{lemma}\label{aaecontinuity}
Assume that, there is a $\gamma_0 \in H$ such that $\gamma_\# \operatorname{Leb}_\epsilon \rightharpoonup \delta_{\gamma_0}$, then there exists a Borel set $B$ such that for all $\delta>0$
\be 
\operatorname{Leb}_\epsilon\left( \|\gamma(s)-\gamma_0 \|> \delta 
\right) \to 0~\text{as}~\epsilon \to 0.
\de 
\end{lemma}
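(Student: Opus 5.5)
The plan is to read the statement as a convergence-in-density claim and derive it from the portmanteau theorem applied to the pushforward measures $\mu_\epsilon := \gamma_\#\operatorname{Leb}_\epsilon$. For fixed $\delta>0$, I will consider the open set $U_\delta := \{x\in H : \|x-\gamma_0\|>\delta\}$. By definition of the pushforward,
$$
\operatorname{Leb}_\epsilon\bigl(\{s : \|\gamma(s)-\gamma_0\|>\delta\}\bigr) = \mu_\epsilon(U_\delta).
$$
The complement $C_\delta := \{\|x-\gamma_0\|\le\delta\}$ is closed. The portmanteau theorem, in the form $\limsup_\epsilon \mu_\epsilon(C)\le \mu(C)$ for closed $C$, is the natural tool, but it gives the inequality in the wrong direction for $U_\delta$. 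I will therefore use the open-set form instead. The open ball $B(\gamma_0,\delta)$ satisfies
$$
\liminf_{\epsilon\to0}\mu_\epsilon\bigl(B(\gamma_0,\delta)\bigr)\ \ge\ \delta_{\gamma_0}\bigl(B(\gamma_0,\delta)\bigr)=1 .
$$
Since $\mu_\epsilon(U_\delta)\le 1-\mu_\epsilon(B(\gamma_0,\delta))$, it follows that $\limsup_\epsilon \mu_\epsilon(U_\delta)\le 0$, which is the claim.

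To keep the argument self-contained without citing portmanteau, I can instead use an explicit test function. Take $F_\delta(x):=\min\bigl(1,\|x-\gamma_0\|/\delta\bigr)\in C_b(H;\mathbb R)$. It satisfies $F_\delta\ge \mathbf 1_{U_\delta}$ and $F_\delta(\gamma_0)=0$, so
$$
\mu_\epsilon(U_\delta)\le\langle\mu_\epsilon,F_\delta\rangle\to\langle\delta_{\gamma_0},F_\delta\rangle=0 .
$$
This is essentially a Markov/Chebyshev-type bound, and it uses only the definition of weak convergence.

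The Borel set $B$ in the statement is then identified as the exceptional set. For each $\delta$, set $B_\delta:=\gamma^{-1}(U_\delta)$; this is Borel because I will assume $\gamma$ measurable, as is implicit for the pushforward to be defined. Take $B:=\bigcup_{k}B_{1/k}=\gamma^{-1}(H\setminus\{\gamma_0\})$. Each $B_{1/k}$ has vanishing relative density at $0$. This is the precise sense in which $\gamma$ is \emph{approximately continuous} at $0$ with approximate limit $\gamma_0$: off sets of density zero, $\gamma(s)\to\gamma_0$. If one wants a single set $E$ of density zero with $\gamma(s)\to\gamma_0$ along $s\notin E$, I would add a standard diagonal step. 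Choose $\epsilon_k\downarrow0$ such that $\operatorname{Leb}_\epsilon(B_{1/k})<2^{-k}$ for $\epsilon\le\epsilon_k$. Then set $E:=\bigcup_k \bigl(B_{1/k}\cap(\epsilon_{k+1},\epsilon_k]\bigr)$ and check that its relative density tends to $0$.

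I expect the only real obstacle to be this optional diagonal step, specifically controlling the density of $E$ at scales $\epsilon$ between consecutive $\epsilon_k$. The tail pieces $B_{1/j}\cap(0,\epsilon]$ with $j>k$ are all contained in $B_{1/k}$ for $j\ge k$, since $B_{1/j}$ increases with $j$. Hence only finitely many pieces contribute nontrivially, and the required bound follows by splitting at the index $k$ with $\epsilon\in(\epsilon_{k+1},\epsilon_k]$. The main displayed convergence itself is immediate from the test-function argument.
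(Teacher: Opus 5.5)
Your core argument is correct and matches the paper's first step. The paper also bounds $\operatorname{Leb}_\epsilon(C_n)$, with $C_n=\{s:\|\gamma(s)-\gamma_0\|>1/n\}$, by $\langle\gamma_\#\operatorname{Leb}_\epsilon,F_n\rangle$, where $F_n$ is a continuous test function with $0\le F_n\le 1$, $F_n(\gamma_0)=0$, and $F_n=1$ off the ball. Your explicit $F_\delta=\min(1,\|x-\gamma_0\|/\delta)$, or the open-set portmanteau bound, does exactly this. Since the displayed conclusion does not involve $B$, this already proves the statement as printed.

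Where you diverge is the set $B$, and there your write-up has a flaw.

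\textbf{The set you name.} Your $B=\gamma^{-1}(H\setminus\{\gamma_0\})$ carries no information: it can have full density, e.g.\ $\gamma(s)=\gamma_0+sv$, so calling it the exceptional set is misleading. The paper means a set of full relative density along which $\gamma(s)\to\gamma_0$, which is the complement of your $E$. That is why your diagonal step is the real content.

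\textbf{Comparison with the paper's diagonal step.} Your construction is right and better set up than the paper's. You require $\operatorname{Leb}_\epsilon(B_{1/k})<2^{-k}$ for \emph{all} $\epsilon\le\epsilon_k$ and localize to the annuli $(\epsilon_{k+1},\epsilon_k]$. The paper instead imposes the bound only at the single scale $\epsilon_n$, and then needs $\operatorname{Leb}_\epsilon(C_n)\le\operatorname{Leb}_{\epsilon_n}(C_n)$ for $\epsilon<\epsilon_n$, which is not automatic. Also, its union $\bigcup_n([0,\epsilon_n]\setminus C_n)$ is a decreasing union, equal to $[0,\epsilon_1]\setminus C_1$.

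\textbf{The error in your verification.} Since $B_{1/j}\supseteq B_{1/k}$ for $j\ge k$, the tail pieces are \emph{not} contained in $B_{1/k}$. Moreover, infinitely many of them meet $(0,\epsilon]$, so it is false that only finitely many pieces contribute. The correct bound estimates each piece at its own scale and sums a geometric series. For $\epsilon\in(\epsilon_{k+1},\epsilon_k]$, only the pieces with $j\ge k$ meet $(0,\epsilon]$, and
\[
\operatorname{Leb}\bigl(E\cap(0,\epsilon]\bigr)\le \epsilon\operatorname{Leb}_\epsilon(B_{1/k})+\sum_{j>k}\epsilon_j\operatorname{Leb}_{\epsilon_j}(B_{1/j})<\epsilon\,2^{-k}+\epsilon_{k+1}2^{-k}<2^{1-k}\epsilon .
\]
Hence $\operatorname{Leb}_\epsilon(E)\to0$. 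Meanwhile $s\in(\epsilon_{k+1},\epsilon_k]\setminus E$ gives $\|\gamma(s)-\gamma_0\|\le 1/k$, so $\gamma(s)\to\gamma_0$ along the complement of $E$.
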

\begin{proof}
For all $F \in C_b(H;\mathbb{R})$, $\langle \gamma_\# \operatorname{Leb_\epsilon},F\rangle \to F(\gamma_0)$. We can therefore exhibit positive test functions $0 \leq F_n \leq 1 \in C_b(H;\mathbb{R}), n \in \mathbb{N}$ such that $F_n(\gamma_0)=0$, $F_n(\gamma(s)) = 1$ when $\| \gamma(s)-\gamma_0 \| > 1/n$,
 and
$\langle \gamma_\# \operatorname{Leb_\epsilon},F_n\rangle \to F_n(\gamma_0) = 0$. Define
$$
C_{n}:= \left\{ s ~:~\| \gamma(s)-\gamma_0 \| > \frac{1}{n} \right\}.
$$
One has,
\begin{align*}
\frac{\operatorname{Leb}_\epsilon(C_n)}
{\operatorname{Leb}(C_n \cap [0,\epsilon])}
\int_{C_n \cap [0,\epsilon]}
\underbrace{F_n(\gamma(s))}_{=1} \, ds
= & \\
\langle \gamma_\# \operatorname{Leb}_\epsilon, F_n \rangle 
\quad -
\frac{\operatorname{Leb}_\epsilon(C_n^c)}
{\operatorname{Leb}(C_n^c \cap [0,\epsilon])}
\int_{C_n^c \cap [0,\epsilon]}
\underbrace{F_n(\gamma(s))}_{\in [0,1]} \, ds 
&\leq
\langle \gamma_\# \operatorname{Leb}_\epsilon, F_n \rangle
\to 0.
\end{align*}
Therefore $\operatorname{Leb}_\epsilon(C_n) \to 0$ (and $\operatorname{Leb}_\epsilon(C_n^c) \to 1$) as $\epsilon \to 0$.
The technical difficulty is that this convergence does depend on $F_n$ above.
We need a diagonal argument. One can pick a decreasing sequence $\epsilon_n \to 0$ such that $\operatorname{Leb}_{\epsilon_n}(C_n) \leq e^{-n}$ for instance. Then we define
$$
B := \bigcup_{n \geq 1} [0,\epsilon_n] \setminus C_n.
$$
From this definition, for $\epsilon \in (0,\epsilon_n)$, one has $(0,\epsilon_n) \setminus C_n \subset [0,\epsilon] \cap B$, and
$$
1 \geq \operatorname{Leb}_\epsilon(B) \geq 1-\operatorname{Leb}_\epsilon(C_n) \geq
1-\operatorname{Leb}_{\epsilon_n}(C_n) \geq 1-e^{-n} \to 1.
$$
To conclude, for all $s \in (0,\epsilon_n)$, one has $\|\gamma(s)-\gamma_0\| \leq 1/n$ and taking $n=n(s):=\min \{ k~:~s \leq \epsilon_k \}$ and going to the limit, $\gamma(s) \to \gamma_0$ for $s \to 0 \in B$.
\end{proof}

For clarity, we now discuss in detail a simple example.  
Let $\gamma$ be a regularization curve (see~\eqref{gamma}), and assume --without making this link explicit -- that it corresponds to the regularization of an inviscid system~\eqref{P_0}.  
Let ${\cal K}_\epsilon \subset \mathbb{R}^+$ satisfy
$$
{\cal K}_\epsilon \cup {\cal K}_\epsilon^c = [0,\epsilon].
$$
As a concrete example, take
$$
K_\epsilon = \bigcup_{n \geq N(\epsilon)} 
\left[\frac1n-e^{-n},\frac1n+e^{-n}\right] \cap [0,\epsilon],  
\quad \lim_{\epsilon \to 0} N(\epsilon) = \infty.
$$
Let $\gamma_0, \gamma_1 \in {\mathscr S}_0 \subset H$ with $\gamma_0 \neq \gamma_1$,  
and define a smooth curve $\gamma: s \mapsto \gamma(s) \in H$ so that  
$\gamma(\frac1n) = \gamma_1$ and  
$\left. \gamma\right|_{{\cal K}_\epsilon^c} = \gamma_0$. 
\\

For $F \in C_b(H;\mathbb{R})$ we have
$$
\langle \gamma_\# \operatorname{Leb}_\epsilon, F \rangle 
= \frac{1}{\epsilon} \int_{{\cal K}_\epsilon^c} F(\gamma_0) \, ds 
+ \frac1\epsilon \int_{{\cal K}_\epsilon} F(\gamma(s)) \, ds
= \frac{|{\cal K}_\epsilon^c|}{\epsilon} F(\gamma_0)  
+ \frac{|{\cal K}_\epsilon|}{\epsilon} \frac{1}{|{\cal K}_\epsilon|} 
\int_{{\cal K}_\epsilon} F(\gamma(s)) \, ds.
$$
Since $F$ is bounded and $|{\cal K}_\epsilon| + |{\cal K}_\epsilon^c| = \epsilon$, it follows that
$$
\left| \langle \gamma_\# \operatorname{Leb}_\epsilon - \delta_{\gamma_0},F \rangle \right| 
\leq  C \left| \frac{|{\cal K}_\epsilon^c|}{\epsilon} -1 \right| 
+ C \frac{|{\cal K}_\epsilon|}{\epsilon} 
= 2 C \frac{|{\cal K}_\epsilon|}{\epsilon}.
$$
Moreover,
$$
|{\cal K}_\epsilon| = 2 \sum_{n \geq N(\epsilon)} e^{-n} 
= \frac{2}{e-1} e^{-(N(\epsilon)-1)}.
$$
Choosing, for instance,
$$
N(\epsilon) = \lfloor \epsilon^{-a} \rfloor, \quad a > 0,
$$
we deduce that
$$
\gamma_\# \operatorname{Leb}_\epsilon \rightharpoonup \delta_{\gamma_0}.
$$

In this construction, the regularization curve $\gamma$ oscillates rapidly as $\epsilon \to 0$.  
One can extract a subsequence $\epsilon_n = \frac1n$ with $\gamma(\epsilon_n) \to \gamma_1$,  
and another subsequence $\epsilon_n'$ with $\gamma(\epsilon_n') = \gamma_0$.  
Thus, although the regularization converges to the Dirac measure $\delta_{\gamma_0}$,  
there remains a lack of selection principle.

\subsection{Proof of Proposition \ref{SPDEFS}}
We indeed prove a slightly stronger version first, namely that
$LSP \Longleftrightarrow \LSPO$.
\paragraph{$LSP \Rightarrow \LSPO$}
\begin{proof}
It is simple: one has two subsequences $x^{\epsilon_n},x^{\epsilon_n'} \to x_1,x_2$ in $C([0,T];H)$, and since $x_1 \neq x_2$ there exists a $t \in (0,T)$ such that $x_1(t) \neq x_2(t)$. Defining ${\cal O}(x) = ||x(t)-x_1(t)||$, it is continuous so that ${\cal O}(x^{\epsilon_n}) \to 0$ and
${\cal O}(x^{\epsilon_n'}) \to {\cal O}(x_2(t)) > 0$. Therefore
$\LSPO$ holds.
\end{proof}
\bigskip

\paragraph{$\LSPO \Rightarrow LSP$}
The added difficulty is to recover convergence in $C([0,T];H)$ from the compactness of the family of regularized trajectories.

\begin{proof}
Let us assume by contradiction that $\LSPO$ holds while $LSP$ does not. Let us denote
$
\Psi_\epsilon(s):=\phi_s^\epsilon(x_0).
$
We show in the following that $\Psi_\epsilon$ converges uniformly in $C([0,T];H)$, giving a contradiction.
We have
$
\Psi_\epsilon(s)
=
x_0+\int_0^s f(\Psi_\epsilon(s'),\epsilon)\,ds',
$
so that
$
\|\Psi_\epsilon(s)\|
\leq
\|x_0\|+T\sup_x\|f(x,\epsilon)\|
\leq C
$
uniformly in $\epsilon$ and $s$. Similarly, since
$
\dot{\Psi}_\epsilon(s)=f(\Psi_\epsilon(s),\epsilon),
$
one has
$
\|\dot{\Psi}_\epsilon(s)\|\leq C.
$
The time derivatives being uniformly bounded, one obtains uniform equicontinuity. One can therefore use the Arzel\`a--Ascoli theorem, which shows that every sequence $\epsilon_n\to0$ admits a subsequence $\epsilon_{n_j}$ such that $\Psi_{\epsilon_{n_j}}$ converges uniformly in $C([0,T];H)$ to some $\Psi^\star$. We can write
$
\Psi_{\epsilon_{n_j}}(s)
=
x_0+\int_0^s f(\Psi_{\epsilon_{n_j}}(s'),\epsilon_{n_j})\,ds'.
$
By the uniform convergence of the subsequence, together with
$
\|f(\cdot,\epsilon_{n_j})-f_0(\cdot)\|_\infty\to0,
$
one obtains
$
\Psi^\star(s)
=
x_0+\int_0^s f_0(\Psi^\star(s'))\,ds'.
$
Thus every subsequential limit is a solution of $({\cal P}_0)$.

Since $LSP$ does not hold, two convergent subsequences cannot have distinct limits. Therefore, all convergent subsequences have the same limit, denoted by $\Psi^\star$. It follows that
$$
\Psi_\epsilon\to\Psi^\star
\quad\text{in }C([0,T];H).
$$
Indeed, otherwise there would exist $\eta>0$ and a sequence $\epsilon_n\to0$ such that
$
\|\Psi_{\epsilon_n}-\Psi^\star\|_{C([0,T];H)}\geq\eta.
$
By the Arzel\`a--Ascoli theorem, this sequence would admit a uniformly convergent subsequence, whose limit must again be $\Psi^\star$, giving a contradiction.

Using the continuity of the observable, one therefore has
$
{\cal O}(\phi_t^\epsilon(x_0))
=
{\cal O}(\Psi_\epsilon(t))
\to
{\cal O}(\Psi^\star(t)).
$
Hence
$$
\limsup_{\epsilon\to0}{\cal O}(\gamma(\epsilon))
=
\liminf_{\epsilon\to0}{\cal O}(\gamma(\epsilon)),
$$
which contradicts $\LSPO$.
\end{proof}

\paragraph{$\neg \SP \Longleftrightarrow \text{Selection principle} \bigcup \delta\text{-}LSP$}\label{bigissue}
\begin{proof}
We notice that the negation of $\SP$ is 
$\neg({\rm strong}~ \SP)$ and $\neg({\rm weak}~\SP)$.
Since the two notions are mutually exclusive. Then $\neg({\rm strong}~ \SP)$ is
either weak $\SP$ or $\gamma_\# {\rm Leb}_\epsilon \rightharpoonup \mu$ with $\mu$ a Dirac measure. Now $\neg({\rm weak}~\SP)$ is either strong $\SP$ or
$\gamma_\# {\rm Leb}_\epsilon \rightharpoonup \mu$ with $\mu$ a Dirac measure.
Therefore $\neg \SP$ is just weak convergence to a single Dirac mass, namely using Lemma \ref{aaecontinuity}, 
this is either a classical selection principle or
a lack of selection principle but occurring on a set whose relative measure w.r.t. Lebesgue is going to zero in the limit $\epsilon \to 0$; see Lemma \ref{aaecontinuity}.
\end{proof}
\begin{lemma}\label{OnotconstLemma}
Assume $H$ is finite-dimensional. If $\LSPO$ holds for the observable ${\cal O}\in C_b(H;\mathbb R)$ at
$(t,x_0)$, then ${\cal O}$ is not constant on ${\mathscr S}_0$.
\end{lemma}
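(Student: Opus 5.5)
I will argue by contraposition: assume ${\cal O}$ is constant on ${\mathscr S}_0$, say ${\cal O}|_{{\mathscr S}_0}\equiv c$, and show that ${\cal O}(\gamma(\epsilon))\to c$ as $\epsilon\to0$. This forces $\liminf=\limsup$ and so contradicts $\LSPO$. The natural tool is the compactness argument already used in the proof of $\LSPO\Rightarrow LSP$. Every accumulation point of $\gamma(\epsilon)=\phi_t^\epsilon(x_0)$ must lie in ${\mathscr S}_0$, and it then remains to use continuity of ${\cal O}$.

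First I will show that the accumulation set $A$ of $\gamma$ as $\epsilon\to0$ is contained in ${\mathscr S}_0$ (this was also asserted in Step~2 of the proof of Theorem~\ref{M0M}). Let $\epsilon_n\to0$ with $\gamma(\epsilon_n)\to a$. The trajectories $\Psi_{\epsilon_n}(s)=\phi_s^{\epsilon_n}(x_0)$ are uniformly bounded, because $\|f(\cdot,\epsilon)\|_\infty$ is bounded for small $\epsilon$ since $f(\cdot,\epsilon)\to f_0$ uniformly and $f_0\in C_b$. They are also equi-Lipschitz in $s$. By Arzelà--Ascoli there is a subsequence converging uniformly on $[0,T]$ to some $\Psi^\star$. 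Passing to the limit in the integral equation, using the uniform convergence $f(\cdot,\epsilon)\to f_0$ and the uniform continuity of $f_0$ on bounded sets of the finite-dimensional $H$, shows that $\Psi^\star$ solves $({\cal P}_0)$. Hence $a=\Psi^\star(t)\in{\mathscr S}_0$.

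Next I will rule out divergence of ${\cal O}(\gamma(\epsilon))$. Suppose there are $\eta>0$ and $\epsilon_n\to0$ with $|{\cal O}(\gamma(\epsilon_n))-c|\ge\eta$. The family $\gamma(\epsilon_n)$ is bounded, so finite dimensionality gives a subsequence converging to some $a$. By the previous step $a\in{\mathscr S}_0$, and continuity of ${\cal O}$ gives ${\cal O}(\gamma(\epsilon_{n_k}))\to{\cal O}(a)=c$, which is a contradiction. Therefore ${\cal O}(\gamma(\epsilon))\to c$, and $\liminf$ and $\limsup$ coincide.

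The only delicate point is the first step, namely identifying subsequential limits of the regularized trajectories as inviscid solutions. It requires controlling $f(\Psi_{\epsilon_n}(s'),\epsilon_n)-f_0(\Psi^\star(s'))$, which I will split into $\|f(\cdot,\epsilon_n)-f_0\|_\infty$ plus $|f_0(\Psi_{\epsilon_n})-f_0(\Psi^\star)|$. The second term vanishes by uniform continuity of $f_0$ on a compact ball, and this is exactly where finite dimensionality (compactness of closed balls) enters. Everything else is routine.
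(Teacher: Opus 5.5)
Your proof is correct and follows the paper's argument: you argue by contraposition, show that every accumulation point of $\gamma(\epsilon)$ lies in ${\mathscr S}_0$, and use continuity of ${\cal O}$ to force ${\cal O}(\gamma(\epsilon))\to c$; the paper simply cites Step~2 of Theorem~\ref{M0M} for the inclusion, whereas you re-derive it via the Arzel\`a--Ascoli argument from the proof of $\LSPO\Rightarrow LSP$. One small correction: passing to the limit in the integral equation needs only continuity of $f_0$ (plus dominated convergence, or continuity near the compact set $\Psi^\star([0,T])$), so finite dimensionality is really used in the Arzel\`a--Ascoli step and in extracting a convergent subsequence of $\gamma(\epsilon_n)$, not in that limit passage.
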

\begin{proof}
Assume ${\cal O}\equiv c$ on ${\mathscr S}_0$. Since every accumulation point
of $\gamma(s)$ as $s\to0$ lies in the compact set ${\mathscr S}_0$ (see Step~2
of the proof of Theorem~\ref{M0M}), one has
${\rm dist}(\gamma(s),{\mathscr S}_0)\to0$. By continuity of ${\cal O}$ and
compactness of ${\mathscr S}_0$, it follows that
${\cal O}(\gamma(s))\to c$, hence
$\liminf_{\epsilon\to0}{\cal O}(\gamma(\epsilon))
=\limsup_{\epsilon\to0}{\cal O}(\gamma(\epsilon))=c$,
contradicting $\LSPO$.
\end{proof}

\subsection{Algebraic reparameterizations}\label{Alg}
We discuss the fact that the problem in its native form 
with $\epsilon \to 0$ can indeed be reduced, under algebraic reparameterization,
to the study of genuine Birkhoff averages as $\tau \to \infty$.

We introduce the class of admissible algebraic reparameterizations
\begin{equation}\label{AlgDef}
\alg:=
\left\{
\begin{aligned}
g\in C^2 \;:\;& g>0,\quad g'<0,\quad \lim_{r\to\infty} g(r)=0,\\
& \exists\, a,b>0 \text{ such that } g(r)=r^{-a}e^{h(r)},\\
& \sup_{r\ge r_0}|h(r)|<\infty,\qquad
|h'(r)|\le \frac{C}{r^{1+b}}
\end{aligned}
\right\},
\end{equation}
as well as the class
\begin{equation}\label{BlgDef}
\blg:=
\left\{
\begin{aligned}
\beta\in C^2 \;:\;& \beta>0,\quad \beta'>0,\\
& \exists\, a,b>0 \text{ such that } \beta(r)=r^{a}e^{h(r)},\\
& \sup_{r\ge r_0}|h(r)|<\infty,\qquad
|h'(r)|\le \frac{C}{r^{1+b}}
\end{aligned}
\right\}.
\end{equation}

\begin{nnproperty*}
Let $\gamma(s):=\phi_t^s(x_0)$, and let $g\in\alg$.
Then the following invariance property holds:
$$
\gamma_\#\mathrm{Leb}_\epsilon \rightharpoonup \mu
\quad\text{as }\epsilon\to0
\qquad\Longleftrightarrow\qquad
(\gamma\circ g)_\#\mathrm{Leb}_\tau \rightharpoonup \mu
\quad\text{as }\tau\to\infty.
$$
Moreover, the same equivalence holds for all subsequential weak limits.
In addition,
$$
\alg\circ\blg=\alg.
$$
Consequently, for every $\gamma\in\Gamma$ and every $g\in\blg$, one has
$\gamma\circ g\in\Gamma$ and
$$
{\mathscr M}(\gamma\circ g)={\mathscr M}(\gamma).
$$
\end{nnproperty*}
In other words, inviscid statistics are unchanged  upon suitable reparameterization, one has for instance
$$
\lim_{R\to\infty}\frac{1}{R}\int_0^R F\bigl(e^{i\theta^\alpha}\bigr)\,d\theta
=
\frac{1}{2\pi}\int_0^{2\pi}F\bigl(e^{i\theta}\bigr)\,d\theta,
$$
independently of $\alpha$, for smooth enough observables $F$ on $\mathbb S^1$.
This is a manifestation of the Riemann-Lebesgue lemma and/or stationary phase.

Moreover, this property induces the straightforward equivalence relation on $\Gamma$:
$$
\gamma_1\sim\gamma_2
\qquad\Longleftrightarrow\qquad
\exists\, g\in\blg \text{ such that } \gamma_1=\gamma_2\circ g.
$$
\begin{proof}
We consider a $C^2$ diffeomorphism $g: \mathbb{R}^+ \to 
\mathbb{R}^+, \tau \mapsto g(\tau)$ such that
$\lim_{\tau \to \infty} g(\tau) = 0$ and is strictly decreasing $g' < 0$.
One assume without loss of generality that $\lim_{s \to \infty} \gamma(s)$ exists which correspond to the limit where $\epsilon$ for $({\cal P}_\epsilon)$ is large. We can thus extend $F\circ \gamma \circ g$ by continuity at  $\tau=0$.
Let us define $F_g(\tau) = F(\gamma \circ g(\tau))$ for all $\tau \geq 0$.
Before hand, we use the mean value theorem:
one has for some $\theta > 1$, $g(\theta \tau) - g(\tau) = (\theta-1) \tau g'(\zeta),
\zeta \in [\tau,\theta \tau]$ so that $\tau g'(\zeta) \to 0$ and in particular 
\be \label{tdg}
\lim_{\tau \to \infty} \tau g'(\tau) = 0.
\de 
First notice that after the change of variable $s = g(\tau)$ with $\epsilon = g(R), R = g^{-1}(\epsilon)$, it gives 
$$
\frac{1}{\epsilon} \int_0^\epsilon F(\gamma(s))ds = J(R),
$$
where
\be 
J(R):=  {\displaystyle -\frac{1}{g(R)} \int_R^\infty F_g(\tau) g'(\tau) d\tau},~\text{and we write}~
B(R) :=  {\displaystyle \frac{1}{R} \int_0^R F_g(\tau) d\tau}.
\de 
\subsection{$B \to \rho \Rightarrow J \to \rho$}
We can express $J$ as a function of $B$ using (\ref{tdg}) like
$$
J(R) = -\frac{1}{g(R)}\int_R^\infty (\tau B(\tau))' g'(\tau) d\tau =
\Omega(R) B(R) + \frac{1}{g(R)} \int_R^\infty \tau g''(\tau) B(\tau) d\tau,~~\Omega(R):= \frac{R g'(R)}{g(R)}.
$$
By remarking that $\Omega = -\frac{1}{g(R)} \int_R^\infty (\tau g'(\tau))' d\tau$, $1 = -\frac{1}{g(R)} \int_R^\infty g'(\tau)d\tau$, and $\tau g''(\tau) = (\tau g'(\tau))' - g'(\tau)$, one has
$$
J(R)-\rho = \Omega(R)(B(R)-\rho) + \int_R^\infty \omega_R'(\tau) (B(\tau)-\rho) d\tau~\text{with}~\omega_R(\tau) = \frac{\tau g'(\tau)-g(\tau)}{g(R)}.
$$
It yields
$$
|J(R)-\rho| \leq |\Omega(R)| ~|B(R)-\rho| +
\left( \int_R^\infty |\omega_R'| ~d\tau \right) ~\sup_{s \in [R,\infty)} |B(s)-\rho|.
$$
It imposes that there exist constants $C$ independent of $R$ such that
\be 
|\Omega(R)| \leq C~\text{and}~\int_R^\infty \left|\frac{\tau g''(\tau)}{g(R)}
\right| d\tau \leq C.
\de 
\end{proof}
\subsection{Regularization encoding all statistical behaviors: Proposition \ref{CHOCBAR2}}\label{Chocbar2proof}
\begin{proof}[Proof of Proposition \ref{CHOCBAR2}]
Let $b_n := (x_n,y_n,\theta_n) \in B:={\mathscr S}_0 \times {\mathscr S}_0 \times [0,1]$ a sequence which is sequentially dense, meaning that for all $b = (x,y,\theta) \in B$ one has a convergent subsequence $b_{n_k} \to b$. Let $T_n \to \infty$ a given sequence to choose. Call $\chi_n:\mathbb{R}^+ \to [0,1]$ a partition of unity such that $\sum_n \chi_n(s) = 1$ and
$\chi_n(s)=1, s \in [T_n+\delta,T_{n+1}-\delta]$ and zero if not. We define $\gamma_{\rm un}(\tau) = \sum_n \chi_n(\tau) \gamma_{x_n,y_n,\theta_n}(\tau)$ where $\gamma_{x,y,\theta}$ is the regularization curve constructed in the proof of Theorem \ref{M0M}, Section \ref{allprob}.
We must choose $T_n$ so that $T_{n+1}-T_n$ is large enough to distinguish at least one period of $a_\theta$ in (\ref{controlswitch}).
In the interval $(T_n,T_{n+1})$, the curve is approximating $\theta_n x_n + (1-\theta_n) y_n$. Due to the sequential density, for all $(x,y,\theta) \in B$, one can exhibit subsequences $n_k \to \infty$ such that $x_{n_k},y_{n_k},\theta_{n_k} \to x,y,\theta$, namely $\mu_{T_{n_k}} := \frac{1}{T_{n_k}} \int_0^{T_{n_k}} \delta_{\gamma_{\rm un}(\tau)} d\tau \rightharpoonup  \theta \delta_{x} + (1-\theta) \delta_{y}$. The set of accumulation points is therefore
$\overline{\operatorname{co}}({\cal E}) = {\mathscr M}_0$.
\end{proof}
\subsection{Proof of Theorem \ref{CN}} \label{DiniApp}
Although we do not use Dini derivatives explicitly, it is closely related to
Definition \ref{DiniME}. We just recall very few well-known properties.
\begin{definition}
Let $f(x)$ be a  real valued function defined in $(a,b)$, the four Dini derivatives at $x_0$ are defined as
\begin{align}
D^\pm f(x_0) = \limsup_{x \to x_0^\pm} \frac{f(x)-f(x_0)}{x-x_0}, \nonumber \\ 
D_\pm f(x_0) = \liminf_{x \to x_0^\pm} \frac{f(x)-f(x_0)}{x-x_0}. \nonumber
\end{align}
$\pm \infty$ is allowed.
\end{definition}
We state some Dini properties for general functions (not necessarily continuous)
\begin{itemize}
\item if $f$ is continuous in $(a,b)$, and $D(x) \in \{D_\pm^\pm f(x)\} > 0 (<0),~\forall x \in (a,b)$ then $f$ is
strictly increasing (decreasing). In other words, one needs only
one of the Dini derivatives. 
\item Let $f$ be continuous on $(a,b)$ with at least one of the 
Dini derivatives bounded  (e.g. $|D^+ f(x)| \leq C, \forall x \in (a,b)$)
then $f$ is  Lipschitz on $(a,b)$.
\end{itemize}
Deeper results can be found in \cite{giorgi1992dini,bruckner1978}.
\\
The proof of Theorem \ref{CN} is simple, since it is mostly a rephrasing of the non-Lipschitz
property using Dini-like quantities but translated in the "Osgood framework".
\begin{proof}
We show the contrapositive. We restrict $x$ to a compact set
$M \subset \mathbb{R}^n$. Therefore, $\exists C, \forall (x,v) \in M \times \mathbb{S}^{n-1}, 
\Lambda^+_\Omega(x,v) \leq C < + \infty$. We need to show that in this case
$\dot x =f(x), x(0)=x_0 \in M$ has a unique solution. Note that $f$ might not be necessarily Lipschitz. Consider $v = \frac{y-x}{||y-x||}$. Using the uniform bound above, for small enough $t>0$, one has
$\langle f(x+tv)-f(x),v \rangle \leq C \Omega(t)$, and taking $t=||y-x||$  gives the one-sided inequality
$$
\langle f(y)-f(x),y-x \rangle \leq C ||y-x|| \Omega(||y-x||).
$$
Denote $g(||y-x||^2) = ||y-x|| \Omega(||y-x||)$. Then by the one-sided 
Osgood Lemma, uniqueness
occurs if $\Int_{0^+} \frac{dz}{g(z)} = \Int_{0^+} \frac{dz}{\Omega(z)} = +\infty$ which is just the hypothesis of Theorem \ref{CN}. 
\end{proof}
Note that the nonautonomous case holds
as well: this is Giuliano's uniqueness Theorem with a time-dependent measurable constant (see Theorem 3.5.1 in 
\cite{Agarwal_Lak}).

\subsection{Example: $\SP_{\rm turb}$ without $\SP$}\label{tanh_eps}
We illustrate here Property \ref{modulus_spst} on  a simple example.
We consider the particular family,  $\rho_s = (g(s) {\rm I})_\# \rho$ for a given fixed probability measure $\rho$ on $H$, namely the random variable $\chi_s = g(s) \chi$.
Consider
\begin{equation}\label{tanhmodel}
\dot x = \tanh \frac{x}{\epsilon},
\qquad
x(0)=0.
\end{equation}
This system develops a discontinuity at $x=0$ as $\epsilon\to0$, and indeed
$f\notin\V$. However, this is irrelevant for the present illustration.
For every fixed $\epsilon>0$, the solution of \eqref{tanhmodel} with initial condition $x(0)=0$ is uniquely given by $x_\epsilon(t)\equiv0$.
Hence the deterministic vanishing-$\epsilon$ limit selects the Dirac mass
$$
\mu=\delta_0,
$$
so that $\SP$ does not occur. Now randomize the initial condition according to
$$
x(0)=g(\epsilon)u,
\qquad
u\sim\rho.
$$
The solution can be written explicitly as
$$
X_\epsilon(t;g(\epsilon)u)
=
\epsilon \operatorname{arsinh}
\left(
\exp\!\left(\frac{t}{\epsilon}\right)
\sinh\!\frac{g(\epsilon)u}{\epsilon}
\right).
$$
It is easy to infer that there is a transition according to whether
$$
e^{t/\epsilon}\frac{g(\epsilon)}{\epsilon}\to0
\qquad\text{or}\qquad
e^{t/\epsilon}\frac{g(\epsilon)}{\epsilon}\to\infty.
$$
In the first case, one has a direct illustration of Property \ref{modulus_spst} with 
$$
\mu=\mu_{\rm turb}=\delta_0.
$$
In the second case, \eqref{omegas} does not hold:
$$
X_\epsilon(t;g(\epsilon)u)\to t\,\operatorname{sign}(u),
$$
so that
$$
\mu=\delta_0,
\qquad
\mu_{\rm turb}
=
\rho((0,\infty))\,\delta_t
+
\rho((-\infty,0))\,\delta_{-t}
+
\rho(\{0\})\,\delta_0.
$$
Therefore, TBM and strong $\SP$-turb may hold even though $\SP$ fails, provided the cloud of initial conditions does not shrink sufficiently fast.

\subsection{Flow map expression for system \eqref{regamb}} \label{flowamb}
The interest of such a system is that we can integrate it explicitly on each
interval.
\subsubsection{Regularized flow map}

Let
$$
a_\epsilon=-\epsilon^2,
\qquad
b_\epsilon=\epsilon T_\epsilon-\epsilon^2,
\qquad
c_\epsilon=\epsilon T_\epsilon-2\epsilon^2.
$$
For $x\leq a_\epsilon$, set $\tau_-(x)=2(\sqrt{-x}-\epsilon)$, and for $x\in[a_\epsilon,b_\epsilon]$, set $\tau_0(x)=(b_\epsilon-x)/\epsilon$. Then the flow map $x\mapsto\phi_t^\epsilon(x)$ is given by
$$
\begin{array}{llcll}
x\leq a_\epsilon,
&
\phi_t^\epsilon(x)
&=&
\left\{
\begin{array}{l}
-\left(\sqrt{-x}-\dfrac{t}{2}\right)^2
\\[1em]
-\epsilon^2+\epsilon\left(t-\tau_-(x)\right)
\\[1em]
c_\epsilon+
\left(
\epsilon+\dfrac{t-\tau_-(x)-T_\epsilon}{2}
\right)^2
\end{array}
\right.
&
\begin{array}{l}
0\leq t\leq \tau_-(x),
\\[1em]
\tau_-(x)\leq t\leq \tau_-(x)+T_\epsilon,
\\[1em]
t\geq \tau_-(x)+T_\epsilon,
\end{array}
\\[4em]
x\in[a_\epsilon,b_\epsilon],
&
\phi_t^\epsilon(x)
&=&
\left\{
\begin{array}{l}
x+\epsilon t
\\[1em]
c_\epsilon+
\left(
\epsilon+\dfrac{t-\tau_0(x)}{2}
\right)^2
\end{array}
\right.
&
\begin{array}{l}
0\leq t\leq \tau_0(x),
\\[1em]
t\geq \tau_0(x),
\end{array}
\\[3em]
x>b_\epsilon,
&
\phi_t^\epsilon(x)
&=&
\left\{
\begin{array}{l}
c_\epsilon+
\left(
\sqrt{x-c_\epsilon}+\dfrac{t}{2}
\right)^2
\end{array}
\right.
&
\begin{array}{l}
t\geq0.
\end{array}
\end{array}
$$

\subsubsection{Inviscid flow map}
For the inviscid equation
$$
\dot x=\sqrt{|x|},
$$
the only obstruction to uniqueness is the point $x=0$. If $x<0$, the trajectory reaches this point at the time
$$
\tau_0(x)=2\sqrt{-x}.
$$
Before this time, the solution is uniquely determined; after it, the trajectory may either stay at $0$ for an arbitrary waiting time or leave along the positive branch. Thus the maximal single-valued part of the inviscid flow is
$$
\phi_t^0(x)
=
\begin{cases}
-\left(\sqrt{-x}-\dfrac{t}{2}\right)^2,
& x<0,\quad 0\leq t\leq \tau_0(x),
\\[1em]
\left(\sqrt{x}+\dfrac{t}{2}\right)^2,
& x>0,\quad t\geq0.
\end{cases}
$$
In particular, the endpoint is included:
$
\phi_{\tau_0(x)}^0(x)=0~\text{for } x<0.
$
For $x=0$, and for negative initial data after the hitting time, the inviscid problem is no longer single-valued. The graph is shown in Fig. \ref{Vec_Flow} (right panel).
\section{Numerics}\label{appendix_examples}
We give here various examples showing the great flexibility and diversity of behavior.

\subsection[dx/dt = x to the 1/3]{\texorpdfstring{$\dot{x}=x^{1/3}$}{dx/dt = x to the 1/3}}
\label{Ex1}

We consider the classical example
$$
\dot x=f(x),\qquad x(0)=x_0,
\qquad f(x)=x^{1/3}.
$$
The function $f$ has a non-Lipschitz singularity at $0$.
Following \cite{Drivas21}, we consider the regularized system $\dot x=f(x,\epsilon)$, where $f(\cdot,\epsilon)$ is defined by
\begin{equation}\label{toy1_3}
({\cal P}_\epsilon):~\left\{
\begin{array}{lllcccc}
\dot x & = & f(x) & |x| & > & \epsilon, \\
\dot x & = & \epsilon^{1/3}\,G(\frac{x}{\epsilon},\epsilon) & |x| & \leq & \epsilon,
\end{array}\right.
\end{equation}
with
$$
G(x,\epsilon)=\xi(x)\,f(x)+(1-\xi(x))\,\frac{\omega(\epsilon)+x}{2},
\qquad
\xi(x)=3x^2-2|x|^3.
$$
Since $({\cal P}_\epsilon)$ is Lipschitz, it is well-posed for every $\epsilon>0$. In particular, it guarantees that
$$
\gamma:s\mapsto \phi_t^s(x_0)
$$
is well defined.
We will use two different regularizations, denoted $\gamma_1$ and $\gamma_2$, with respective choices
$$
\omega_1(\epsilon)=\sin\frac{1}{\epsilon},
\qquad
\omega_2(\epsilon)=0.5+\sin\!\left(\frac{1}{\epsilon}+{\rm cst}\right).
$$
Note that these deterministic functions rapidly change sign when $\epsilon$ becomes small. They introduce a source of uncertainty that vanishes in the limit $\epsilon\to0$, thereby playing a role similar to that of a random variable.
Considering Definition~\ref{LSPOdef}, we use the observable ${\cal O}(x)=x$ and define
$$
\gamma(\epsilon):=\phi_{t=1}^\epsilon(0)
$$
for the two corresponding choices, each $\gamma$ is continuous in $\epsilon$, by the continuity of the regularized vector field w.r.t.\ $\epsilon$ and standard continuous dependence of solutions on parameters.
The result is shown in Fig.~\ref{Toyess}.

\begin{figure}[htbp]
\centerline{\includegraphics[scale=0.8]{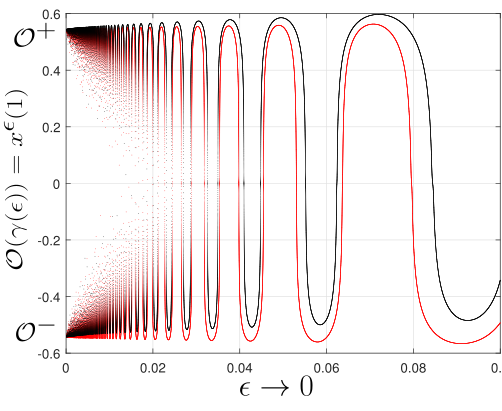}}
\caption{$\gamma(\epsilon)=\phi_1^\epsilon(0)$, solution of \eqref{toy1_3} for $\epsilon$ uniformly distributed in $[10^{-4},10^{-1}]$: in red $\gamma_1$ and in black $\gamma_2$. Both limiting measures concentrate on $\pm a$ with $a\approx0.544$.}
\label{Toyess}
\end{figure}

For the two regularizations $\gamma_1$ and $\gamma_2$, the limiting pushforward measures are different, although they have the same support: the first perturbation yields
$$
\frac{\delta_{+a}+\delta_{-a}}{2},
$$
whereas the second perturbation yields
$$
\theta\,\delta_{+a}+(1-\theta)\,\delta_{-a},
\qquad
\theta\approx0.7.
$$
To summarize: $({\cal P}_\epsilon)$ is strongly $\SP$ for $\gamma_1$, $\gamma_2$ and one must necessarily have
$x_0=0$ for any choice of $t>0$. When one takes
$\omega(\epsilon)={\rm cst}$, a single solution is selected in the limit (not shown).
We finally illustrate in Fig.~\ref{WSSP} the three possible manifestations of spontaneous stochasticity: strong, weak $\SP$ and
$\delta$-$LSP$.

\begin{figure}[htbp]
\centerline{\includegraphics[scale=0.12]{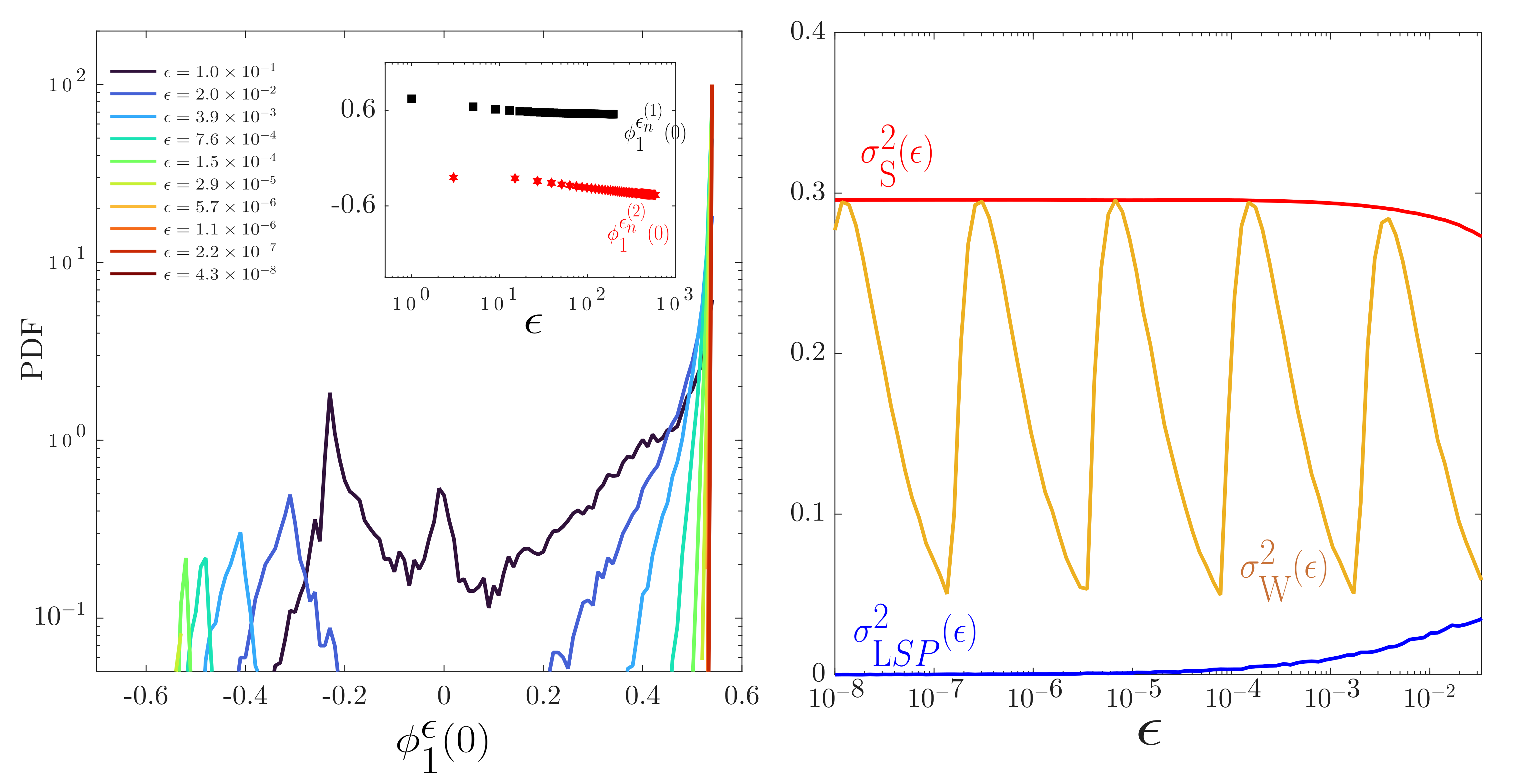}}
\caption{
Simulations of system~\eqref{toy1_3} for different choices of the function $\omega$. Specifically, we consider $\omega_{\rm S}(\epsilon)=\sin \tfrac{1}{\epsilon}$, $\omega_{\rm W}(\epsilon)=\sin \ln \tfrac{1}{\epsilon}$, and $\omega_\delta(\epsilon)=1-\epsilon+\sin \tfrac{1}{\epsilon}$ to illustrate, respectively, Strong-$\SP$, Weak-$\SP$, and $\delta$-$LSP$.
Left: Probability density of $x^\epsilon(1)$ in the $\omega_\delta$ case, for various values of $\epsilon$, where the regularization parameter is uniformly distributed over $[0,\epsilon]$. We observe clear convergence toward a Dirac mass. The inset displays two subsequences $x^{{\epsilon_n}^{(1,2)}}$ converging toward distinct values, thereby illustrating $LSP$. We used $\epsilon_n^{-1}=\pi n/2$, with subsequences corresponding to $\mathrm{mod}(n,4)=1$ and $\mathrm{mod}(n,4)=3$.
Right: Variance of $\phi_1^\epsilon(0)$ under $\mu_\epsilon$, namely
$
\operatorname{Var}({\mu_\epsilon})
:=
\int x^2\,\mu_\epsilon(dx)
-
\left(\int x\,\mu_\epsilon(dx)\right)^2.
$
For $\omega_{\rm S}$, the orange curve converges toward a positive value, consistent with Strong-$\SP$. In contrast, the yellow curve for $\omega_{\rm W}$ displays persistent oscillations, compatible only with Weak-$\SP$. Finally, the blue curve corresponding to $\omega_\delta$ converges to zero, as expected when $\mu_\epsilon\rightharpoonup\delta_a$; in this case, $\omega_\delta$ corresponds to $\delta$-$LSP$ and does not yield a genuine selection principle (see inset, right panel).}
\label{WSSP}
\end{figure}

\subsection{Two degrees of freedom with isolated singularity }\label{Ex3}

We consider a system with an isolated H\"older singularity at the origin.
Such singularities are believed to play an important role in shell models of turbulence \cite{Dombre1998,Maily2012}. A detailed measure-theoretic analysis can be found in \cite{Drivas21,Drivas24}. These systems take the form
$$
\dot x = |x|^\alpha F(y),
\qquad
y=\frac{x}{|x|}\in\mathbb S^{n-1},
\qquad
F:\mathbb S^{n-1}\to\mathbb R^n.
$$
Using the convention of \cite{Drivas21}, one can decompose $F$ into radial and tangential components:
$$
F(y)=F_r(y)\,y+F_s(y),
$$
with
$$
F_r:\mathbb S^{n-1}\to\mathbb R,
\qquad
F_s:\mathbb S^{n-1}\to T\mathbb S^{n-1}.
$$
In this case, the Dini directional derivatives at the singularity $x=0$ reduce to
$$
\Lambda^+_{z}(0,v)
=
\limsup_{t\to0^+} t^{\alpha-1}F_r(v)
\in\{-\infty,0,+\infty\},
\qquad
v\in\mathbb S^{n-1}.
$$
In order for Theorem~\ref{CN} to apply, one must necessarily have directions $v$ for which
$$
F_r(v)>0.
$$
This is exactly the \emph{defocusing} property considered in \cite{Drivas24}; see condition (b), p.~1859. Such singularities may be viewed as generalized saddles, including purely unstable critical points, and in favorable situations they possess both nontrivial stable and unstable sets.
We also allow the initial condition to lie on the singularity, as in the one-dimensional setting, but the stable set ${\cal W}^-(0)$ provides a more detailed characterization of all pre-blowup initial conditions.

The example presented here is a slight modification of the two-dimensional example studied in \cite{Drivas21}, but, unlike the original one, it does exhibit strong $\SP$. It takes the form
\begin{equation}\label{Ex2eq}
F(y)=y(y_1+y_2)-y^\perp y_1y_2^2,
\qquad
y^\perp=(-y_2,y_1).
\end{equation}
This example is nontrivial and has, in particular, homoclinic orbits with overlapping stable and unstable sets. Indeed, using the conventions of \cite{Drivas21,Drivas24}, the renormalized system
$$
\dot y=(F(y)\cdot y^\perp)\,y^\perp
\qquad\text{on }\mathbb S^1
$$
has a well-defined focusing attractor at $\varphi=\frac{3\pi}{2}$, which guarantees finite-time blowup. By contrast, the defocusing attractor at $\varphi=0$ on $\mathbb S^1$ is nonhyperbolic and one-sided, and carries the physical measure $\delta_{\varphi=0}$.
The distribution at a fixed time of the solutions, for a single pre-blowup initial condition
$$
x_0\in{\cal W}^-(0),
$$
is shown in Fig.~\ref{figex2}.

\begin{figure}[htbp]
\centerline{\includegraphics[scale=0.8]{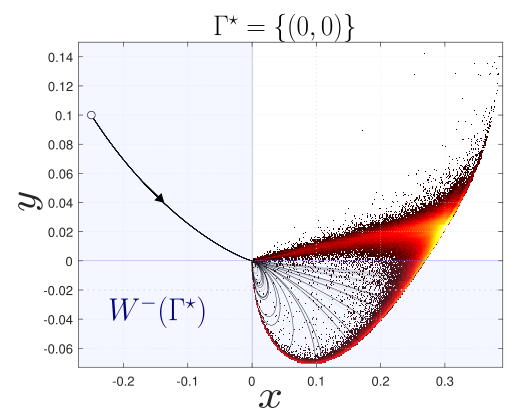}}
\caption{Distribution of the solutions of \eqref{Ex2eq} at $t=1.5$, starting from the initial condition
$x_0=(-0.25,0.1)$ (log-scaled colormap). The system has been regularized by additive noise of amplitude
$\sqrt{\epsilon}$ with $\epsilon\to0$. The solution for this initial condition reaches the singularity at the finite hitting time
$
t^\star\approx0.875.
$
The light-blue region corresponds to the stable set associated with initial conditions yielding $\SP$.}
\label{figex2}
\end{figure}

\subsection{$\SP$ without critical point}\label{Ex4}

We now discuss a simple example exhibiting $\SP$ although there are no critical points. It is in fact a one-dimensional nonautonomous system rewritten as an autonomous one in dimension two.
Let
$
s_{\alpha,\sigma}(x_2)
=
\operatorname{sign}(\sin 2\pi\sigma x_2)\,
|\sin 2\pi\sigma x_2|^\alpha,
$
and for $\alpha\in(0,1)$, $\mu\in\mathbb R$, consider
\begin{equation}\label{ex3}
\begin{array}{lll}
\dot x_1 & = & \mu,\\
\dot x_2 & = & s_{\alpha,\sigma}(x_2)\,\sin(2\pi x_1).
\end{array}
\end{equation}
Then, by a slight abuse of notation, one may write
\begin{align}
\Lambda^+_z(x,v)
&=
2\pi\,s_{\alpha,\sigma}(x_2)\cos(2\pi x_1)\,v_1v_2
\nonumber\\
&\quad
+
2\pi\sigma\alpha\,s_{\alpha-1,\sigma}(x_2)\cos(2\pi\sigma x_2)\sin(2\pi x_1)\,v_2^2,
\qquad
|v|=1.
\end{align}
Therefore, one can identify a nonempty singular set $\Gamma^\star$ corresponding to all
$x=(x_1,x_2)$ such that 
$\sin(2\pi\sigma x_2)=0,
$
that is,
$x_2=\frac{j}{2\sigma},~
j\in\mathbb Z,
$
with the additional sign condition
$
(-1)^j\sin(2\pi x_1)>0.
$
Equivalently, introducing the open intervals
$$
I_{k,1}=\Bigl(k,k+\frac12\Bigr),
\qquad
I_{k,2}=\Bigl(k+\frac12,k+1\Bigr),
$$
and the levels
$$
Y_{p,1}=\frac{p}{\sigma},
\qquad
Y_{p,2}=Y_{p,1}+\frac{1}{2\sigma},
$$
one obtains
$$
\Gamma^\star
=
\bigcup_{j,p\in\mathbb Z^2}
I_{j,1}\times\{Y_{p,1}\}
\;\cup\;
I_{j,2}\times\{Y_{p,2}\}.
$$
For all $x\in\Gamma^\star$, one has
$$
\Lambda^+_z(x,v)=+\infty
\qquad\text{provided } v_2\neq0.
$$

The stable set ${\cal W}^-(\Gamma^\star)$ is more difficult to characterize and depends crucially on $\mu$, but it includes $\Gamma^\star$ by definition. For the parameters used in Fig.~\ref{figex3}, numerical evidence suggests that
$$
{\cal W}^-(\Gamma^\star)=\mathbb R^2,
$$
so that all initial conditions reach the singular set in finite time. This in turn strongly suggests the occurrence of $\SP$, leading to the formation of space-filling beehive-like patterns.
The hexagonal cells are constrained by the set $\Gamma^\star$ in the $x_2$-direction and by $\mu$ in the $x_1$-direction.

These patterns result from a periodic alternation: pairs of trajectories merge into one or split into two, depending on the $x_1$-position and driven by alternating phases of contraction and expansion.
The number of trajectories reaching the axis $x_1=t$ from an initial condition at time $t_0$ appears to grow exponentially, of order
$
2^{\lfloor \lambda(\sigma,\mu)(t-t_0)\rfloor},
$
for some effective branching rate $\lambda(\sigma,\mu)>0$.
We do not display the distribution at fixed time; it can be guessed from Fig.~\ref{figex3} as a sum of Dirac masses with complicated weights.

\begin{figure}[htbp]
\centerline{\includegraphics[scale=0.8]{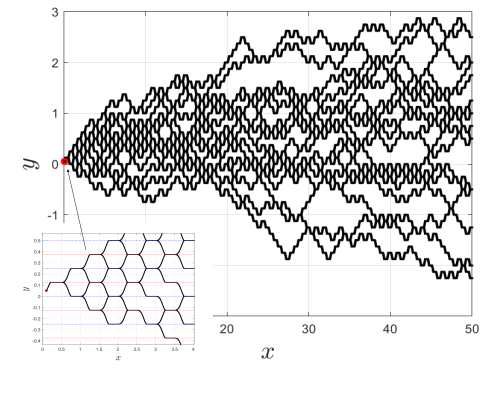}}
\caption{System \eqref{ex3} for $t\in[0,50]$, $\mu=1$, and $\sigma=4$, starting from the initial condition
$x_0=(0.1,0.05)$ (the system has been regularized by additive noise of amplitude
$\sqrt{\epsilon}$, with $\epsilon\to0$, and integrated for 20 different realizations).
The first hitting time of the singular set is
$
t^\star\approx0.5
$
(see zoom: the levels $x_2=Y_{p,1},Y_{p,2}$ of $\Gamma^\star$ are also shown as red and blue dash-dotted lines).}
\label{figex3}
\end{figure}

\section{Basic Notions from Measure Theory and Dynamical Systems}\label{PFTP}
\begin{enumerate}
    \item{\bf Pushforward}.
    \begin{definition}[Pushforward]
       Let $(X,{\cal A})$ and $(Y,{\cal B})$ be two measurable spaces. Let $\mu$ be a measure on $X$ and a measurable map
       $$
       f : X \to Y,~f^{-1}(B) \in {\cal A},~\forall B \in {\cal B}.
       $$
    \end{definition}
    The pushforward measure $f_\# \mu $ is defined as
    $$
    (f_\# \mu)(B) = \mu(f^{-1}(B)),~\forall B \in {\cal B}.
    $$
    One can also write, for all (bounded continuous) test functions $F \in C_b(Y;\mathbb{R})$:
    $$
    \langle f_\# \mu,F \rangle = \int_Y F(y) (f_\# \mu)(dy) = \int_X (F\circ f)(x) \mu(dx).
    $$
    In words, it is merely a change of variables where the statistics encoded by $\mu$ on $X$ are mapped to statistics on $Y$ through the map $f:X \to Y$. It is of interest to notice that for a semigroup $S_t:X \to Y$, the pushforward $(S_t)_\#$ has another well-known name: it is the \emph{Perron-Frobenius/transfer operator}.
    Namely, one has 
    $$\langle ({S}_t)_\# \mu,F \rangle = \int_X F(S_t x) \mu(dx) = \int_X (U_t F)(x) \mu(dx) = \langle \mu, U_t F \rangle = \langle (U_t)^\star \mu, F \rangle. $$ Here $U_t$ is the \emph{Koopman operator} acting on observables:
    $$
    U_t: C_b(X;\mathbb{R}) \to C_b(Y;\mathbb{R}): (U_t F)(x) = F(S_t x),
    $$
    and $(U_t)^\star$ is the dual Perron-Frobenius operator acting on measures. Although, we do not use these concepts, it is useful to recall that the infinitesimal generator of the Perron-Frobenius operator for $S_t$ the solution of some deterministic autonomous ODE/PDE $\dot x = f(x)$  is precisely the \emph{Liouville} operator ${\cal L} \mu = -\nabla \cdot (f \mu)$ acting on densities $\mu$. It is often written as $(S_t)_\#  = e^{t {\cal L}} $.

\item{\bf Tightness and Prokhorov's Theorem}. 
     Let us now defined tightness of a family of probability measures and state Prokhorov's theorem linking tightness and relative compactness
\begin{definition}[Tightness]
        Let $(X,T)$ be a Hausdorff space and let $\Sigma$ be a $\sigma$-algebra on $X$ that contains the topology $T$. Let $\lbrace\mu_\kappa \rbrace$ be a family of probability measures defined on $\Sigma$. The collection $\lbrace\mu_\kappa \rbrace$ is called tight if for any $\varepsilon>0$ there exists a compact set $K_\varepsilon \subset X$ such that for all $\kappa$, 
        $$ \mu_\kappa(K_\varepsilon) \geq 1-\varepsilon $$
\end{definition}
Tightness of collections of probability measure ensures relative compactness and therefore subsequential weak convergence as stated by Prokhorov's theorem. For simplicity, we restrict to \emph{Polish spaces} which are separable completely metrizable topological spaces. In particular, Banach spaces are Polish.
        
\begin{theorem}[Prokhorov's Theorem]
         Let $(S,\rho)$ be a separable and complete metric space and $\mathcal{P}(S)$ be the set of probability measures on $S$. There is a metric $d_0$ on $\mathcal{P}(S)$ equivalent to the topology of weak convergence. Moreover, a family of probability measures $\lbrace\mu_\kappa \rbrace\subset \mathcal{P}(S)$ is tight if and only if the closure of $\lbrace\mu_\kappa \rbrace$ in $(\mathcal{P}(S),d_0)$ is compact.
\end{theorem}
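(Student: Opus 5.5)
This is Prokhorov's theorem, which the paper recalls as a standard tool, so I will sketch the classical route. I fix a complete separable metric $\rho$ on $S$ and take as $d_0$ the L\'evy--Prokhorov metric:
$$
d_0(\mu,\nu)=\inf\{\varepsilon>0:\ \mu(A)\le\nu(A^\varepsilon)+\varepsilon,\ \nu(A)\le\mu(A^\varepsilon)+\varepsilon\ \ \forall A \text{ Borel}\},
$$
where $A^\varepsilon$ is the open $\varepsilon$-neighbourhood of $A$.

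\textbf{Step 1: $d_0$ metrizes weak convergence.} I will use the portmanteau theorem, via closed sets, open sets and bounded Lipschitz test functions. If $d_0(\mu_n,\mu)\to0$, then for every closed $A$ one has $\limsup\mu_n(A)\le\mu(A^\varepsilon)+\varepsilon$ for each $\varepsilon>0$. Letting $\varepsilon\to0$ gives $\limsup\mu_n(A)\le\mu(A)$, which is weak convergence. For the converse I use separability. I cover $S$ by countably many balls of radius $\varepsilon/2$, and keep a finite subfamily carrying $\mu$-mass at least $1-\varepsilon$. The finitely many unions of these balls are open sets $U$, so $\liminf\mu_n(U)\ge\mu(U)$ holds for all of them simultaneously. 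From this, $d_0(\mu_n,\mu)\le C\varepsilon$ for $n$ large follows.

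\textbf{Step 2: tight implies relatively compact.} Take a sequence $(\mu_n)$ in the family and compacts $K_j$ with $\mu_n(K_j)\ge1-1/j$. I will embed $\bigcup_j K_j$ into a countable algebra generated by finite unions of closed balls with rational radii centred at a countable dense set. A diagonal extraction makes $\mu_n(A)$ converge for every $A$ in this countable class. I then construct the limit $\mu$ by the standard Billingsley route: define $\alpha(H)=\lim\mu_n(H)$ on finite unions $H$ of the class intersected with the $K_j$, set $\beta(G)=\sup\{\alpha(H):H\subset G\}$ on open sets, and take the outer measure $\mu^\ast(M)=\inf\{\beta(G):G\supset M\}$. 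I must check that $\mu^\ast$ is Carath\'eodory on closed sets and hence induces a Borel measure. Tightness gives total mass $1$, because $\beta(S)\ge\sup_j\alpha(K_j\cap\cdot)\ge1-1/j$. Finally $\mu(G)=\beta(G)\le\liminf\mu_n(G)$ for open $G$, so portmanteau gives $\mu_n\rightharpoonup\mu$. This construction, and in particular verifying the subadditivity and regularity of $\beta$, is the main technical obstacle. An alternative I would keep in reserve: embed $S$ homeomorphically into the Hilbert cube $[0,1]^{\mathbb N}$. There Riesz representation and Banach--Alaoglu give compactness of $\mathcal P([0,1]^{\mathbb N})$, and tightness ensures the limit charges $\bigcup_jK_j\subset S$.

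\textbf{Step 3: relatively compact implies tight.} Here completeness is essential. Suppose the closure of the family is compact. Fix $\varepsilon>0$ and $k\ge1$, and cover $S$ by countably many balls $B_i$ of radius $1/k$. I claim that there is $N_k$ with $\mu(\bigcup_{i\le N_k}B_i)>1-\varepsilon2^{-k}$ for every $\mu$ in the family. If not, pick $\mu_N$ violating it for each $N$ and extract a weak limit $\mu$. Portmanteau on open sets gives $\mu(\bigcup_{i\le M}B_i)\le\liminf_N\mu_N(\bigcup_{i\le N}B_i)\le1-\varepsilon2^{-k}$ for every $M$. Letting $M\to\infty$ contradicts $\mu(S)=1$. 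Now set $K=\bigcap_k\overline{\bigcup_{i\le N_k}B_i}$. It is closed and totally bounded, hence compact by completeness, and $\mu(K)\ge1-\varepsilon$ uniformly over the family. This gives tightness.
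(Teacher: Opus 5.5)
The paper gives no proof of this statement. Prokhorov's theorem is only recalled in the appendix as standard background, alongside the Portmanteau and continuous mapping theorems, and is used as a black box in the paper's tightness and compactness arguments, so there is no argument of the paper to compare against.

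Your sketch is the classical argument, and it is correct in outline. It uses the L\'evy--Prokhorov metric, then Billingsley's direct construction of the limit measure (with the Hilbert-cube embedding as a valid alternative), then the finite-covering argument for the converse, where completeness enters only through ``closed and totally bounded implies compact''.

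A few details should be made explicit when you write it out.

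\textbf{Step 1.} The finite cover only gives the one-sided bound $\mu(A)\le\mu_n(A^{\varepsilon})+2\varepsilon$ for all Borel $A$ and $n$ large. The reverse bound follows by applying this to $S\setminus A^{\varepsilon}$, whose $\varepsilon$-neighbourhood is disjoint from $A$. You should also check that $d_0$ is a metric. Since the claim concerns topologies rather than sequences, note that your second half is really a neighbourhood statement, because $\nu\mapsto\nu(U)$ is weakly lower semicontinuous for open $U$. The first half may stay sequential because the $d_0$-topology is first countable.

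\textbf{Step 2.} The countable class need not be an algebra. What the construction actually uses is that the class is closed under finite unions and consists of compact sets, which is why you intersect with the $K_j$. It must also be fine enough that whenever $F\subset G$ with $F$ closed, $G$ open, and $F$ contained in some member, there is a member $H$ with $F\subset H\subset G$. This property is exactly what yields finite and countable subadditivity of $\beta$ and the Carath\'eodory measurability of closed sets.
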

\item{\bf Portmanteau Theorem}
There are many different formulations, we give only two of them:
Let $(\mu_n)$ and $\mu$ be probability measures on a metric space $X$.
The Portmanteau theorem: $\mu_n \rightharpoonup \mu$ \emph{if and only if} any one of the following equivalent conditions holds.
(Closed-set form) For every closed $F\subset X$,
$$
\limsup_{n\to\infty} \mu_n(F) \le \mu(F).
$$
(Equivalently, open-set form) For every open $G\subset X$,
$$
\liminf_{n\to\infty} \mu_n(G) \ge \mu(G).
$$
The rule of thumb is that one uses the closed-set form to show $x \in \operatorname{supp}(\mu)$ and the open-set form to show that for $n$ large enough $x \in \operatorname{supp}(\mu_n)$.
\item{\bf Continuous mapping theorem}. 
Let $X,Y$ be two Polish spaces and $\{\mu_\epsilon\}_{\epsilon > 0}$ a family of probability measures on $X$ such that $\mu_\epsilon \rightharpoonup \mu \in {\cal P}(X)$ as $\epsilon \to 0$. Let $h:X\to Y$ be a continuous function, then
$$
h_\# \mu_\epsilon \rightharpoonup h_\# \mu \in {\cal P}(Y).
$$
\item {\bf Ergodic invariant measures}.
Let $(X, \mathcal{B}, \mu, (\mathcal{R}_t)_{t \in \mathbb{R}})$ be a measure-preserving flow, i.e., $(X, \mathcal{B}, \mu)$ is a probability space and each $\mathcal{R}_t : X \to X$ is measurable with $\mu(\mathcal{R}_t^{-1}B) = \mu(B)$ for all $B \in \mathcal{B}$ and $t \in \mathbb{R}$, or equivalently ${\cal R}_\# \mu = \mu$. A $\mathcal{R}_t$-invariant measure $\mu$ is \emph{ergodic} if every invariant set $A \in \mathcal{B}$ satisfies $\mu(A) \in \{0,1\}$. Equivalently, $\mu$ is ergodic if for all $F \in L^1(\mu)$, the time averages
$$
\frac{1}{T} \int_0^T F(\mathcal{R}_t x)\, dt \ \xrightarrow[T \to \infty]{} \ \int_X F(x)\, \mu(dx)~~\mu\text{-a.e.}
$$
Ergodicity means that the statistical behavior under the flow is indecomposable: $\mu$ cannot be expressed as a nontrivial convex combination of other invariant probability measures.
An interesting consequence is that if $F$ is ${\cal R}_t$-invariant:
$F({\cal R}_t x) = F(x)$ $\mu$-a.e. then from the above $F(x) = cst = \int F d\mu$ $\mu$-a.e.
\item {\bf Uniquely ergodic}.
A continuous flow $({\cal R}_t)_{t\in\mathbb{R}}$ on a compact metric space $X$ is \emph{uniquely ergodic} if there exists a single ${\cal R}_t$-invariant Borel probability measure on $X$.  
Equivalently, the space $\operatorname{Inv}({\cal R})$ of invariant probability measures is a singleton.  
In this case, time averages of every continuous function $F$ converge \emph{uniformly} to the same constant independently of $x \in X$.
\item {\bf Pointwise uniquely ergodic/generic}. A continuous flow $({\cal R}_t)_{t\in\mathbb{R}}$ on a compact metric space $X$ is \emph{pointwise uniquely ergodic} if there exists $\mu\in\operatorname{Inv}({\cal R})$ such that, for $\mu$-almost every $x\in X$ and every continuous $F$,
$
\frac{1}{T} \int_0^T F({\cal R}_t x)\,dt \ \xrightarrow[T\to\infty]{} \ \int_X F\,d\mu.
$
Such $x$ are called \emph{$\mu$-generic} or \emph{generic points} for $\mu$. Unlike unique ergodicity, the invariant measure need not be the only one; it is the unique limit for almost every starting point. 
\item {\bf Axiom A}.
Let $({\cal R}_t)_{t\in\mathbb{R}}$ be a $C^1$ flow on a compact manifold $M$.  We need first to define few important concepts.

\emph{Nonwandering set:} $\Omega({\cal R})=\{x\in M:\ \forall\,$neighborhoods $U\ni x$ and $\forall T>0,\ \exists t\ge T$ with ${\cal R}_t(U)\cap U\neq\varnothing\}$.  
Intuitively, these are points whose trajectories return arbitrarily close after arbitrarily long times.  
\emph{Topologically transitive:} an invariant set $\Lambda\subset M$ is transitive if for all nonempty open $U,V\subset \Lambda$ (relative topology) there exists $t\in\mathbb{R}$ with ${\cal R}_t(U)\cap V\neq\varnothing$ (equivalently, $\Lambda$ has a dense orbit).  
It means the dynamics can move from any region of $\Lambda$ to any other, allowing orbits to explore the whole set.  
\emph{Locally maximal:} an invariant set $\Lambda$ is locally maximal if there is a neighborhood $U$ of $\Lambda$ with $\Lambda=\bigcap_{t\in\mathbb{R}}{\cal R}_t(U)$.  
Such a set is isolated from the rest of the dynamics: all nearby orbits remain in $\Lambda$ for all time.  
\emph{Basic set:} a compact invariant set $\Lambda$ is basic if it is hyperbolic (in the flow sense), locally maximal, and topologically transitive.  
A basic set is an indecomposable, self-contained chaotic component of the dynamics. It is not necessarily a (local) attractor.

\emph{Axiom A (flow):} ${\cal R}_t$ satisfies Axiom~A if $\Omega({\cal R})$ is hyperbolic and periodic orbits are dense in $\Omega({\cal R})$.  
Under Axiom~A, $\Omega({\cal R})$ is the disjoint union of finitely many basic sets.

\item{\bf Morse-Smale systems}. 
A $C^1$ flow $({\cal R}_t)_{t\in\mathbb{R}}$ on a compact manifold $M$ is \emph{Morse–Smale} if its nonwandering set consists of finitely many hyperbolic equilibria and periodic orbits, and the stable and unstable manifolds of these intersect transversely.  
Morse–Smale systems form a special case of Axiom~A systems, with $\Omega({\cal R})$ finite and each basic set reduced to a single periodic orbit or equilibrium.

\end{enumerate}

\end{document}